\crefname{subclaim}{Claim}{Claims}
\crefname{property}{Property}{Properties}
\crefname{observation}{Observation}{Observations}
\crefname{step}{Step}{Steps}
\crefname{condition}{Condition}{Conditions}
\spnewtheorem{observation}{Observation}{\normalfont\bfseries}{\itshape}
\let\oldproof\proof
\let\oldendproof\endproof
\def\proof{\begingroup \oldproof}
\def\endproof{\qed \oldendproof \endgroup}
\let\origqed\qed
\newcommand{\claimqed}{\hfill$\lrcorner$}
\newenvironment{claimproof}[1][\proofname]{\begin{proof}\renewcommand{\qed}{\claimqed}}{\end{proof}\renewcommand{\qed}{\origqed}}
\newcommand{\qedhere}{} 
\spnewtheorem{subclaim}{Claim}{\itshape}{\rmfamily}
\newcommand{\defparproblem}[5]{
	\vspace{1mm}
	\noindent\fbox{
		\begin{minipage}{0.96\textwidth}
			\begin{tabular*}{\textwidth}{@{\extracolsep{\fill}}lr} \textsc{#1} & {\bf{Parameter:}} #3 \\ \end{tabular*}
			{\bf{Input:}} #2 \\
			{\bf{#5:}} #4
		\end{minipage}
	}
	\vspace{1mm}
}
\DeclareMathOperator{\degree}{deg} 
\DeclareMathOperator{\maxset}{maxset}
\newcommand\restr[2]{{
  \left.\kern-\nulldelimiterspace 
  #1 
  \vphantom{\big|} 
  \right|_{#2} 
  }}
\newcommand{\Oh}{\ensuremath{\mathcal{O}}\xspace}
\newcommand{\cS}{\ensuremath{\mathcal{S}}\xspace}
\newcommand{\T}{\ensuremath{\mathcal{T}}\xspace}
\newcommand{\X}{\ensuremath{\mathcal{X}}\xspace}
\newcommand{\kX}{\ensuremath{\mathfrak{X}}\xspace}
\newcommand{\secl}[3]{\ensuremath{\cS_{#1}^{#2}(#3)}}
\newcommand{\setN}{\ensuremath{\mathbb{N}}\xspace}
\newcommand{\NP}{\ensuremath{\mathsf{NP}}}
\newcommand{\NPhard}{\NP-hard\xspace}
\newcommand{\Wone}{\ensuremath{\mathsf{W[1]}}\xspace}
\newcommand{\Wonehard}{\Wone-hard\xspace}
\newcommand{\LST}{\textsc{LST}}
\newcommand{\ELSS}{\textsc{ELSS}}
\ifdefined\DEBUG{}
\def\rem#1{{\marginpar{\raggedright\scriptsize #1}}}
\newcommand{\jjh}[1]{{\color{orange}{#1}}}
\newcommand{\jjhr}[1]{\rem{\textcolor{orange}{\(\bullet \) #1}}}
\newcommand{\hui}[1]{{\color{red}{#1}}}
\newcommand{\huir}[1]{\rem{\textcolor{red}{\(\bullet \) #1}}}
\newcommand{\bmp}[1]{{\color{blue}{#1}}}
\newcommand{\bmpr}[1]{\rem{\textcolor{blue}{\(\bullet \) #1}}}
\newcommand{\jjh}[1]{#1}
\newcommand{\hui}[1]{#1}
\newcommand{\bmp}[1]{#1}
\newcommand{\jjhr}[1]{}
\newcommand{\huir}[1]{}
\newcommand{\bmpr}[1]{}
\title{\texorpdfstring{Finding $k$-Secluded Trees Faster}{Finding k-Secluded Trees Faster}}
\author{Huib Donkers\orcidID{0000-0002-2767-8140} \and Bart M.P. Jansen\orcidID{0000-0001-8204-1268}\thanks{Supported by NWO Gravitation grant ``Networks''.} \and Jari J.H. de Kroon\orcidID{0000-0003-3328-9712}}
\institute{Eindhoven University of Technology, Eindhoven, the Netherlands \\
\email{\{h.t.donkers, b.m.p.jansen, j.j.h.d.kroon\}@tue.nl}}
\begin{document}

\maketitle

\begin{abstract}
We revisit the \textsc{$k$-Secluded Tree} problem. Given a vertex-weighted undirected graph~$G$, its objective is to find a maximum-weight induced subtree~$T$ whose open neighborhood has size at most~$k$. We present a fixed-parameter tractable algorithm that solves the problem in time~$2^{\Oh(k \log k)}\cdot n^{\Oh(1)}$, improving on a double-exponential running time from earlier work by Golovach, Heggernes, Lima, and Montealegre. Starting from a single vertex, our algorithm grows a $k$-secluded tree by branching on vertices in the open neighborhood of the current tree~$T$. To bound the branching depth, we prove a structural result that can be used to identify a vertex that belongs to the neighborhood of any $k$-secluded supertree~$T' \supseteq T$ once the open neighborhood of~$T$ becomes sufficiently large. We extend the algorithm to enumerate compact descriptions of all maximum-weight $k$-secluded trees, which allows us to count the number of maximum-weight $k$-secluded trees containing a specified vertex in the same running time.
\keywords{secluded tree \and FPT \and enumeration algorithm}    
\end{abstract}

\section{Introduction}\label{secl:sec:intro}

\paragraph{Background} We revisit a problem from the field of parameterized complexity: Given a graph~$G$ with positive weights on the vertices, find a connected induced acyclic subgraph~$H$ of maximum weight such that the open neighborhood of~$H$ in~$G$ has size at most~$k$. Within this field the complexity of a problem is measured not only in the size of the input, but also in terms of some additional parameter. We say that a problem is fixed parameter tractable (FPT) if there is an algorithm that given an instance~$I$ with parameter~$k$, solves the problem in time~$f(k) \cdot |I|^{\Oh(1)}$ for some computable function~$f$. For problems that are FPT, such algorithms allow \NPhard problems to be solved efficiently on instances whose parameter is small. It is therefore desirable for the function~$f$ to grow slowly in terms of~$k$, both out of theoretical interest as well as improving the practical relevance of these algorithms. For a more elaborate introduction to the field we refer to~\cite{DBLP:series/mcs/DowneyF99,DBLP:books/sp/CyganFKLMPPS15}. 

We say that a vertex set~$S\subseteq V(G)$ is $k$-secluded in~$G$ if the open neighborhood of~$S$ in~$G$ has size at most~$k$. An induced subgraph~$H$ of~$G$ is $k$-secluded in~$G$ if~$V(H)$ is. If~$H$ is also a tree, we say that~$H$ is a $k$-secluded tree in~$G$. Formally, the problem we study in this work is defined as follows. 

\defparproblem{Large Secluded Tree (\LST)}{An undirected graph~$G$, a non-negative integer~$k$, and a weight function~$w \colon V(G) \to \mathbb{N}^+$.}{$k$}{Find a $k$-secluded tree~$H$ of~$G$ of maximum weight, or report that no such~$H$ exists.}{Task}

Golovach et al.~\cite{DBLP:journals/jcss/GolovachHLM20} consider the more general \textsc{Connected Secluded $\Pi$-Subgraph}, where the $k$-secluded induced subgraph of~$G$ should belong to some target graph class~$\Pi$. They mention that \textsc{(Large) Secluded Tree} is FPT and can be solved in time~$2^{2^{\Oh(k \log k)}}\cdot n^{\Oh(1)}$ using the recursive understanding technique, the details of which can be found in the arXiv version~\cite{DBLP:journals/corr/GolovachHLM17}. For the case where~$\Pi$ is characterized by a finite set of forbidden induced subgraphs~$\mathcal{F}$, they show that the problem is FPT with a triple-exponential dependency. They pose the question whether it is possible to avoid these double- and triple-exponential dependencies on the parameter. They give some examples of~$\Pi$ for which this is the case, namely for~$\Pi$ being a clique, a star, a $d$-regular graph, or an induced path.

\paragraph{Results}
Our main result is an algorithm for \textsc{Large Secluded Tree} that takes~$2^{\Oh(k \log k)} \cdot n^4$ time. This answers the question of Golovach et al.~\cite{DBLP:journals/jcss/GolovachHLM20} affirmatively for the case of trees. We solve a more general version of the problem, where a set of vertices is given that should be part of the $k$-secluded tree. Our algorithm goes one step further by allowing us to find all maximum weight solutions. As we will later argue, it is not possible to output all such solutions directly in the promised running time. Instead, the output consists of a bounded number of solution descriptions such that each maximum weight solution can be constructed from one such description. This is similar in spirit to the work of Guo et al.~\cite{DBLP:journals/jcss/GuoGHNW06}, who enumerate all minimal solutions to the \textsc{Feedback Vertex Set} problem in~$\Oh(c^k \cdot m)$ time. They do so by giving a list of \emph{compact representations}, a set~$\mathcal{C}$ of pairwise disjoint vertex subsets such that choosing exactly one vertex from every set results in a minimal feedback vertex set. Our descriptions allow us to \emph{count} the number of maximum-weight $k$-secluded trees containing a specified vertex in the same running time.

\paragraph{Techniques}
Rather than using recursive understanding, our algorithm is based on bounded-depth branching with a non-trivial progress measure. Similarly to existing algorithms to compute spanning trees with many leaves~\cite{DBLP:journals/algorithmica/KneisLR11}, our algorithm iteratively grows the vertex set of a $k$-secluded tree~$T$. If we select a vertex~$v$ in the neighborhood of the current tree~$T$, then for any $k$-secluded supertree~$T'$ of~$T$ there are two possibilities: either~$v$ belongs to the neighborhood of~$T'$, or it is contained in~$T'$; the latter case can only happen if~$v$ has exactly one neighbor in~$T$. Solutions of the first kind can be found by deleting~$v$ from the graph and searching for a $(k-1)$-secluded supertree of~$T$. To find solutions of the second kind we can include~$v$ in~$T$, but since the parameter does not decrease in this case we have to be careful that the recursion depth stays bounded. Using a reduction rule to deal with degree-1 vertices, we can essentially ensure that~$v$ has at least three neighbors (exactly one of which belongs to~$T$), so that adding~$v$ to~$T$ strictly increases the open neighborhood size~$|N(T)|$. Our main insight to obtain an FPT algorithm is a structural lemma showing that, whenever~$|N(T)|$ becomes sufficiently large in terms of~$k$, we can identify a vertex~$u$ that belongs to the open neighborhood of any $k$-secluded supertree~$T' \supseteq T$. At that point, we can remove~$u$ and decrease~$k$ to make progress.

\paragraph{Related work}
Secluded versions of several classic optimization problems have been studied intensively in recent years~\cite{DBLP:journals/disopt/BevernFMMSS18,DBLP:journals/networks/BevernFT20,DBLP:journals/algorithmica/ChechikJPP17,DBLP:journals/mst/FominGKK17,DBLP:journals/ipl/LuckowF20}, many of which are discussed in Till Fluschnik's PhD thesis~\cite{Fluschnik2020}. Marx~\cite{DBLP:journals/tcs/Marx06} considers a related problem \textsc{Cutting~$k$ (connected) vertices}, where the aim is to find a (connected) set~$S$ of size exactly~$k$ with at most~$\ell$ neighbors. Without the connectivity requirement, the problem is \Wonehard by~$k+\ell$. The problem becomes FPT when~$S$ is required to be connected, but remains \Wonehard by~$k$ and~$\ell$ separately. Fomin et al.~\cite{DBLP:conf/mfcs/FominGK13} consider the variant where~$|S| \leq k$ and show that it is FPT parameterized by~$\ell$.

\paragraph{Organization}

We introduce our enumeration framework in \cref{sec:prelims}. We present our algorithm that enumerates maximum-weight $k$-secluded trees in \cref{sec:enum_supertrees} and present its correctness and running time analyses. We give some conclusions in \cref{sec:conclusion}.

\section{Framework for enumerating secluded trees}\label{sec:prelims}

We consider simple undirected graphs with vertex set~$V(G)$ and edge set~$E(G)$. We use standard notation pertaining to graph algorithms, such as presented by Cygan et al.~\cite{DBLP:books/sp/CyganFKLMPPS15}. When the graph~$G$ is clear from context, we denote~$|V(G)|$ and~$|E(G)|$ by~$n$ and~$m$ respectively. For an induced subgraph~$H$ of~$G$, we may write~$N(H)$ to denote~$N(V(H))$. If~$w \colon V(G) \to \setN^+$ is a weight function, then for any~$S \subseteq V(G)$ let~$w(S) := \sum_{v \in S} w(s)$ and for any subgraph~$H$ of~$G$ we may denote~$w(V(H))$ by~$w(H)$. 

It is not possible to enumerate all maximum-weight $k$-secluded trees in FPT time; consider the graph with~$n$ vertices of weight~$1$ and two vertices of weight~$n$ which are connected by~$k+1$ vertex-disjoint paths on~$n/(k+1)$ vertices each, then there are~$\Oh(k \cdot (n/k)^k)$ maximum-weight $k$-secluded trees which consist of all vertices except one vertex out of exactly~$k$ paths. However, it is possible to give one short description for such an exponential number of $k$-secluded trees.

\begin{definition}\label{def:description}

 For a graph~$G$, a \emph{description} is a pair~$(r,\mathcal{X})$ consisting of a vertex~$r \in V(G)$ and a set~$\mathcal{X}$ of pairwise disjoint subsets of~$V(G-r)$ such that for any set~$S$ consisting of exactly one vertex from each set~$X \in \mathcal{X}$, the connected component~$H$ of~$G-S$ containing~$r$ is acyclic and~$N(H) = S$, i.e.,~$H$ is a $|\mathcal{X}|$-secluded tree in~$G$. The \emph{order} of a description is equal to~$|\mathcal{X}|$.
 We say that a $k$-secluded tree~$H$ is \emph{described by} a description~$(r,\mathcal{X})$ if~$N(H)$ consists of exactly one vertex of each~$X \in \mathcal{X}$ and~$r \in V(H)$.
\end{definition}

\begin{definition}
 For a graph~$G$, a set of descriptions~\kX of maximum order~$k$ is called \emph{redundant} for~$G$ if there is a $k$-secluded tree~$H$ in~$G$ such that~$H$ is described by two distinct descriptions in~\kX. We say~\kX is \emph{non-redundant} for~$G$ otherwise. 
\end{definition}

\begin{definition}
 For a graph~$G$ and a set of descriptions~\kX of maximum order~$k$, let~$\T_G(\kX)$ denote the set of all $k$-secluded trees in~$G$ described by a description in~\kX.
\end{definition}

\begin{observation} \label{obs:descr-union}
 For a graph~$G$ and two sets of descriptions~$\kX_1, \kX_2$ we have:~$$\T_G(\kX_1) \cup \T_G(\kX_2) = \T_G(\kX_1 \cup \kX_2).$$
\end{observation}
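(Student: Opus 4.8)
The plan is to unfold the set-builder definition of~$\T_G(\cdot)$ and observe that membership is governed purely by the existence of a describing description in the given set, after which the claimed equality reduces to the distributivity of the existential quantifier over a union. Concretely, by \cref{def:description} and the definition of~$\T_G(\cdot)$, a $k$-secluded tree~$H$ in~$G$ lies in~$\T_G(\kX)$ if and only if some description~$(r,\mathcal{X}) \in \kX$ describes~$H$; note that $k$-secludedness is not an extra side condition here, since a tree described by a description of order~$j$ satisfies~$|N(H)| = j$ and~$j$ is at most the maximum order of~$\kX$.

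First I would prove~$\T_G(\kX_1) \cup \T_G(\kX_2) \subseteq \T_G(\kX_1 \cup \kX_2)$: if~$H \in \T_G(\kX_i)$ for some~$i \in \{1,2\}$, then some description in~$\kX_i \subseteq \kX_1 \cup \kX_2$ describes~$H$, so~$H \in \T_G(\kX_1 \cup \kX_2)$. For the reverse inclusion~$\T_G(\kX_1 \cup \kX_2) \subseteq \T_G(\kX_1) \cup \T_G(\kX_2)$: if~$H \in \T_G(\kX_1 \cup \kX_2)$, then some description~$(r,\mathcal{X}) \in \kX_1 \cup \kX_2$ describes~$H$; this description belongs to~$\kX_1$ or to~$\kX_2$, which places~$H$ in~$\T_G(\kX_1)$ or in~$\T_G(\kX_2)$ respectively. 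Combining the two inclusions yields the equality.

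The only bookkeeping point worth spelling out is that~$\kX_1$ and~$\kX_2$ may have different maximum orders~$k_1$ and~$k_2$, so that~$\kX_1 \cup \kX_2$ has maximum order~$\max(k_1,k_2)$; since a $k$-secluded tree is also $k'$-secluded for every~$k' \geq k$, the trees collected in~$\T_G(\kX_1)$ and~$\T_G(\kX_2)$ are all~$\max(k_1,k_2)$-secluded, so all three sets are defined over a common parameter and the displayed set equality is meaningful. There is no genuine obstacle in this proof: it is a direct consequence of the definition of~$\T_G(\cdot)$ as the collection of secluded trees described by \emph{some} description in the argument set.
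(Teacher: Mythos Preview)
Your argument is correct and is exactly the intended justification: the paper states this as an \emph{observation} without proof, since it follows immediately by unfolding the definition of~$\T_G(\cdot)$ and noting that ``described by some element of~$\kX_1 \cup \kX_2$'' is equivalent to ``described by some element of~$\kX_1$ or some element of~$\kX_2$''. Your remark about the maximum orders is a nice point of hygiene but not something the paper dwells on.
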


\begin{observation} \label{obs:descr-merge}
For a graph~$G$, a set of descriptions~\kX, and vertex sets~$X_1, X_2$ disjoint from~$\bigcup_{(r,\X) \in \kX} (\{r\} \cup \bigcup_{X \in \X} X)$, the set~$\T_G(\{(r,\X \cup \{X_1 \cup X_2\}) \mid (r,\X) \in \kX)\}$ equals:~$$\T_G(\{(r,\X \cup \{X_1\}) \mid (r,\X) \in \kX)\} \cup (r,\X \cup \{X_2\}) \mid (r,\X) \in \kX)\}).$$
\end{observation}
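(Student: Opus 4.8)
The plan is to unfold the definition of ``described by'' from \cref{def:description} and verify the claimed identity by two inclusions, using a one-line case distinction on which of $X_1, X_2$ receives the unique neighbour that a described tree places in $X_1 \cup X_2$. Abbreviate $\kX_{12} := \{(r,\X \cup \{X_1 \cup X_2\}) \mid (r,\X) \in \kX\}$ and $\kX_i := \{(r,\X \cup \{X_i\}) \mid (r,\X) \in \kX\}$ for $i \in \{1,2\}$; the goal is then $\T_G(\kX_{12}) = \T_G(\kX_1) \cup \T_G(\kX_2)$. I would first isolate the single fact used repeatedly: if a $k$-secluded tree $H$ is described by $(r,\Y)$, then $r \in V(H)$, $|N(H) \cap Y| = 1$ for every $Y \in \Y$, and—because $N(H)$ \emph{consists of} exactly one vertex from each part—also $N(H) \subseteq \bigcup_{Y \in \Y} Y$. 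The hypothesis that $X_1, X_2$ are disjoint from $\bigcup_{(r,\X)\in\kX}(\{r\}\cup\bigcup_{X\in\X}X)$ ensures that $\kX_{12}, \kX_1, \kX_2$ consist of well-formed descriptions, each of order $|\X|+1$.

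For $\T_G(\kX_{12}) \subseteq \T_G(\kX_1) \cup \T_G(\kX_2)$ I would take $H$ described by some $(r,\X\cup\{X_1\cup X_2\})$ with $(r,\X)\in\kX$, let $v$ be the unique vertex of $N(H) \cap (X_1 \cup X_2)$, and assume without loss of generality $v \in X_1$. Then $N(H)\cap X_1 = \{v\}$, and $N(H) \subseteq \bigl(\bigcup_{X\in\X}X\bigr)\cup\{v\} \subseteq \bigl(\bigcup_{X\in\X}X\bigr)\cup X_1$, so $H$ is described by $(r,\X\cup\{X_1\}) \in \kX_1$, giving $H \in \T_G(\kX_1)$.

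For the reverse inclusion I would take $H$ described by, say, $(r,\X\cup\{X_1\})$ with $(r,\X)\in\kX$ (the $\kX_2$ case is symmetric) and set $\{v\} = N(H)\cap X_1$. The crux—and the only place the disjointness hypothesis enters—is bounding $N(H)\cap X_2$: from $N(H) \subseteq \bigl(\bigcup_{X\in\X}X\bigr)\cup X_1$ and $X_2 \cap \bigcup_{X\in\X}X = \emptyset$ we get $N(H)\cap X_2 \subseteq X_1$, hence $N(H)\cap X_2 \subseteq N(H)\cap X_1 = \{v\}$. Therefore $N(H)\cap(X_1\cup X_2) = \{v\}$, and since also $N(H)\subseteq\bigl(\bigcup_{X\in\X}X\bigr)\cup X_1 \subseteq \bigl(\bigcup_{X\in\X}X\bigr)\cup(X_1\cup X_2)$, the tree $H$ is described by $(r,\X\cup\{X_1\cup X_2\})\in\kX_{12}$, so $H \in \T_G(\kX_{12})$.

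The whole argument is bookkeeping with the ``described by'' relation; the one point requiring care—the closest thing to an obstacle—is the reverse inclusion, where one must exploit that being described by $(r,\X\cup\{X_1\})$ forbids $H$ from having any neighbour outside $\bigl(\bigcup_{X\in\X}X\bigr)\cup X_1$, which is exactly what prevents a second neighbour of $H$ from landing in $X_2\setminus X_1$. (In particular the identity holds even when $X_1$ and $X_2$ intersect.) One could alternatively rewrite the right-hand side as $\T_G(\kX_1\cup\kX_2)$ via \cref{obs:descr-union} before starting, but this does not shorten the core computation.
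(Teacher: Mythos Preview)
The paper states this as an observation and gives no proof, so there is nothing to compare against; your argument is the natural unfolding of \cref{def:description} and is correct. One small quibble: the disjointness hypothesis alone does not make $\kX_{12},\kX_1,\kX_2$ into \emph{descriptions} in the sense of \cref{def:description} (which also requires the acyclicity/neighborhood property for every selection), but since any selection from $\X\cup\{X_i\}$ is also a selection from $\X\cup\{X_1\cup X_2\}$ and vice versa, validity of one side implies validity of the other, so this does not affect your ``described by'' computation.
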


For an induced subgraph~$H$ of~$G$ and a set~$F \subseteq V(G)$, we say that~$H$ is a supertree of~$F$ if~$H$ induces a tree and~$F \subseteq V(H)$. 
Let~$\secl{G}{k}{F}$ be the set of all $k$-secluded supertrees of~$F$ in~$G$. For a set~$X$ of subgraphs of~$G$ let~$\maxset_w(X) := \{H \in X \mid w(H) \geq w(H') \text{~for all~} H' \in X\}$. We focus our attention to the following version of the problem, where some parts of the tree are already given. 

\defparproblem{Enumerate Large Secluded Supertrees (\ELSS)}{A graph~$G$, a non-negative integer~$k$, non-empty vertex sets~$T \subseteq F \subseteq V(G)$ such that~$G[T]$ is connected, and a weight function~$w \colon V(G) \to \mathbb{N}^+ $.}{$k$}{A non-redundant set~$\mathfrak{X}$ of descriptions such that~$\mathcal{T}_G(\mathfrak{X}) = \maxset_w(\secl{G}{k}{F})$.
}{Output}

In the end we solve the general enumeration problem by solving \ELSS{} with~$F = T = \{v\}$ for each~$v \in V(G)$ and reporting only those $k$-secluded trees of maximum weight.
Intuitively, our algorithm for \ELSS{} finds $k$-secluded trees that ``grow'' out of~$T$. In order to derive some properties of the types of descriptions we compute, we may at certain points demand that certain vertices non-adjacent to~$T$ need to end up in the $k$-secluded tree. For this reason the input additionally has a set~$F$, rather than just~$T$.

Our algorithm solves smaller instances recursively. We use the following abuse of notation: in an instance with graph~$G$ and weight function~$w \colon V(G) \to \mathbb{N}^+$, when solving the problem recursively for an instance with induced subgraph~$G'$ of~$G$, we keep the weight function~$w$ instead of restricting the domain of~$w$ to~$V(G')$.

\begin{observation} \label{obs:secluded}
 For a graph~$G$, a vertex~$v \in V(G)$, and an integer~$k \geq 1$, if~$H$ is a $(k-1)$-secluded tree in~$G-v$, then~$H$ is a $k$-secluded tree in~$G$. Consequently,~$\secl{G-v}{k-1}{F} \subseteq \secl{G}{k}{F}$ for any~$F \subseteq V(G)$.
\end{observation}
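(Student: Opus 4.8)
The claim is essentially a bookkeeping statement about open neighborhoods, so the proof is short. I would argue as follows. Suppose $H$ is a $(k-1)$-secluded tree in $G-v$. First, $H$ is an induced subgraph of $G-v$, hence also an induced subgraph of $G$, and it induces a tree in $G-v$; since $G[V(H)] = (G-v)[V(H)]$ (because $v \notin V(H)$), $H$ still induces a tree in $G$. So $H$ is an induced subtree of $G$, and it remains only to bound $|N_G(V(H))|$. The key observation is that the neighborhood of $V(H)$ in $G$ can only gain the single vertex $v$ compared to its neighborhood in $G-v$: formally, $N_G(V(H)) \subseteq N_{G-v}(V(H)) \cup \{v\}$, because any vertex of $V(G) \setminus V(H)$ adjacent to $V(H)$ that is different from $v$ already witnesses this adjacency in $G-v$. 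Therefore $|N_G(V(H))| \le |N_{G-v}(V(H))| + 1 \le (k-1) + 1 = k$, so $H$ is a $k$-secluded tree in $G$.

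The ``consequently'' part then follows immediately: if $H \in \secl{G-v}{k-1}{F}$, then $H$ is a $(k-1)$-secluded supertree of $F$ in $G-v$, so by the first part $H$ is a $k$-secluded tree in $G$, and it is still a supertree of $F$ since $V(H)$ is unchanged and $F \subseteq V(H)$ is preserved (note $F \subseteq V(G-v)$ is implicit in $H$ being a supertree of $F$ in $G-v$). Hence $H \in \secl{G}{k}{F}$, giving the claimed inclusion.

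I do not foresee any real obstacle here; the only thing to be slightly careful about is the degenerate-looking case $k=1$, where $H$ must be $0$-secluded in $G-v$ (i.e.\ a connected component of $G-v$ that happens to be a tree), and one should check the argument still goes through — it does, since the bound $N_G(V(H)) \subseteq N_{G-v}(V(H)) \cup \{v\}$ holds regardless of whether $N_{G-v}(V(H))$ is empty. The other minor point is to make explicit that acyclicity and treeness are not affected by deleting a vertex not in $V(H)$, which is why we may freely pass between $G$ and $G-v$ when reasoning about the induced subgraph on $V(H)$.
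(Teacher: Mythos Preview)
Your proof is correct and matches the intended reasoning; the paper states this as an observation without proof, and your argument is exactly the straightforward bookkeeping the authors have in mind. There is nothing to add.
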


\begin{observation} \label{obs:secluded2}
 For a graph~$G$, a vertex~$v \in V(G)$, and an integer~$k \geq 1$, if~$H$ is a $k$-secluded tree in~$G$ with~$v \in N_G(H)$, then~$H$ is a $(k-1)$-secluded tree in~$G-v$. Consequently,~$\{H \in \secl{G}{k}{F} \mid v \in N_G(H)\} \subseteq \secl{G-v}{k-1}{F}$ for any~$F \subseteq V(G)$.
\end{observation}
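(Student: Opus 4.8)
The plan is simply to unfold the definitions of ``$k$-secluded tree'' and of~$\secl{G}{k}{F}$. First I would observe that since~$N_G(H)$ is an \emph{open} neighborhood, the hypothesis~$v \in N_G(H)$ forces~$v \notin V(H)$. Consequently~$G-v$ still contains~$H$ as an induced subgraph on the same vertex set, and since being a tree is a property of the induced subgraph on~$V(H)$, it follows immediately that~$H$ is still an induced tree in~$G-v$.

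Next I would control the neighborhood size. Because~$v \notin V(H)$, deleting~$v$ changes neither the edges inside~$V(H)$ nor the adjacency between~$V(H)$ and any vertex other than~$v$ itself; hence~$N_{G-v}(H) = N_G(H) \setminus \{v\}$. Using~$v \in N_G(H)$ this gives~$|N_{G-v}(H)| = |N_G(H)| - 1 \le k-1$, so~$H$ is a~$(k-1)$-secluded tree in~$G-v$, which is the first claim.

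For the ``consequently'' part I would take an arbitrary~$H \in \secl{G}{k}{F}$ with~$v \in N_G(H)$. Then~$F \subseteq V(H)$ and, as just argued,~$v \notin V(H)$, so~$v \notin F$ and~$F \subseteq V(G-v)$; combined with the first part this shows that~$H$ is a~$(k-1)$-secluded supertree of~$F$ in~$G-v$, i.e.~$H \in \secl{G-v}{k-1}{F}$. (In the degenerate case~$v \in F$ the left-hand side of the stated inclusion is empty, since no supertree of~$F$ can have~$v$ in its open neighborhood, so the inclusion holds vacuously.)

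I do not expect any genuine obstacle here; the only point that needs a moment's care is the identity~$N_{G-v}(H) = N_G(H) \setminus \{v\}$, which is exactly where the observation~$v \notin V(H)$ (a consequence of using open neighborhoods) is invoked. This statement is the natural dual of \cref{obs:secluded}, and both will presumably be used together to relate the recursive branches of the algorithm.
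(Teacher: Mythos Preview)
Your proposal is correct and is exactly the straightforward unfolding of definitions one would expect; the paper in fact states this as an \emph{observation} without any proof, so there is nothing to compare against beyond noting that your argument supplies the routine verification the paper omits.
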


\section{Enumerate large secluded supertrees}\label{sec:enum_supertrees}

\Cref{sec:subroutines} proves the correctness of a few subroutines used by the algorithm. \Cref{sec:alg} describes the algorithm to solve \ELSS. In \cref{sec:poc} we prove its correctness and in \cref{sec:time} we analyze its time complexity. In \cref{sec:enum_count} we show how the algorithm for \ELSS{} can be used to count and enumerate maximum-weight $k$-secluded trees containing a specified vertex.

\subsection{Subroutines for the algorithm}\label{sec:subroutines}

Similar to the \textsc{Feedback Vertex Set} algorithm given by Guo et al.~\cite{DBLP:journals/jcss/GuoGHNW06}, we aim to get rid of degree-1 vertices. In our setting there is one edge case however. The reduction rule is formalized as follows.


\begin{definition}\label{def:contract}
 For an \ELSS{} instance~$(G,k,F,T,w)$ with a degree-1 vertex~$v$ in~$G$ such that~$F \neq \{v\}$, \emph{contracting~$v$ into} its neighbor~$u$ yields the \ELSS{} instance~$(G-v,k,F',T',w')$ where the weight of~$u$ is increased by~$w(v)$ and:
$$\begin{array}{ccc}
F' = \begin{cases}
(F \setminus \{v\}) \cup \{u\} & \mbox{if~$v \in F$} \\
F & \mbox{otherwise}
\end{cases} & \quad\quad &
T' = \begin{cases}
(T \setminus \{v\}) \cup \{u\} & \mbox{if~$v \in T$} \\
T & \mbox{otherwise.}
\end{cases}
\end{array}
$$ 
\end{definition}
We prove the correctness of the reduction rule, that is, the descriptions of the reduced instance form the desired output for the original instance.

\newcommand{\lemSafeContract}{
Let~$I = (G,k,F,T,w)$ be an \ELSS{} instance. Suppose~$G$ contains a degree-1 vertex~$v$ such that~$\{v\} \neq F$. Let~$I' = (G-v,k,F',T',w')$ be the instance obtained by contracting~$v$ into its neighbor~$u$. If~$\kX$ is a non-redundant set of descriptions for~$G-v$ such that~$\T_{G-v}(\kX) = \maxset_{w'}(\secl{G-v}{k}{F'})$, 
then~$\kX$ is a non-redundant set of descriptions for~$G$ such that~$\T_G(\kX) = \maxset_w(\secl{G}{k}{F})$. 
}
\begin{lemma} \label{lem:safe:contract}
\lemSafeContract
\end{lemma}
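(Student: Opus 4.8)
The plan is to establish a bijective-style correspondence between $k$-secluded trees in $G$ and $k$-secluded trees in $G' := G - v$ that respects weights, descriptions, and the supertree-of-$F$ constraint, so that the hypothesis on $I'$ transfers verbatim to $I$. The key observation is that because $v$ has degree $1$ with unique neighbor $u$, and because $F \neq \{v\}$ (so $F$ contains some vertex other than $v$, which forces any supertree of $F$ to be nonempty and, if it contains $v$, to contain $u$ as well), the vertex $v$ is never in the open neighborhood of a connected induced subgraph that it could possibly "cut off" on its own.

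\medskip

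\noindent\textbf{Step 1: the correspondence on trees.} First I would define a map $\phi$ sending a $k$-secluded tree $H$ in $G$ to a subgraph of $G'$: if $v \notin V(H)$, put $\phi(H) = H$; if $v \in V(H)$, then (since $H$ is connected and $v$ has only neighbor $u$) we have $u \in V(H)$, and I set $\phi(H) = G[V(H) \setminus \{v\}]$. I would check: (a) $\phi(H)$ is connected — immediate since deleting a leaf of a tree leaves a tree; (b) $\phi(H)$ is acyclic; (c) $\phi(H)$ is $k$-secluded in $G'$ — here $N_{G'}(\phi(H)) = N_G(H) \setminus \{v\}$ when $v \in V(H)$ (note $u$ is internal so stays out of the neighborhood), and $N_{G'}(\phi(H)) = N_G(H)$ when $v \notin V(H)$ and $u \in V(H)$; the only subtle case is $v \notin V(H)$ and $u \notin V(H)$, where $v$ might have been in $N_G(H)$ — but $v$'s only neighbor is $u$, so $v \in N_G(H)$ would require $u \in V(H)$, contradiction, hence $N_{G'}(\phi(H)) = N_G(H)$ here too. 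In all cases $|N_{G'}(\phi(H))| \le k$. Conversely I would define $\psi$ from $G'$-trees to $G$-trees: given $k$-secluded $H'$ in $G'$, if $u \in V(H')$ put $\psi(H') = G[V(H') \cup \{v\}]$, else $\psi(H') = H'$ (as an induced subgraph of $G$); a symmetric check shows $\psi(H')$ is a $k$-secluded tree in $G$, using that adding the leaf $v$ to a tree containing $u$ keeps it a tree and does not change the neighborhood size (since $v$'s only neighbor $u$ is now inside). I would then verify $\phi$ and $\psi$ are mutually inverse and that in the "$u$-in" case $w(H) = w'(\phi(H)) + w(v) - w(v) = w'(\phi(H))$... more precisely, since $w'(u) = w(u) + w(v)$ and otherwise $w' = w$, one gets $w(\psi(H')) = w'(H')$ whenever $u \in V(H')$ and $w(\psi(H')) = w'(H')$ also when $u \notin V(H')$ (both sides just sum $w$ over $V(H')$, and $w = w'$ off $u$); so $\phi,\psi$ are weight-preserving in both directions.

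\medskip

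\noindent\textbf{Step 2: matching the supertree-of-$F$ constraint.} I would show $H \in \secl{G}{k}{F}$ iff $\phi(H) \in \secl{G'}{k}{F'}$. If $v \notin F$, then $F' = F \subseteq V(G')$ and $F \subseteq V(H) \iff F \subseteq V(\phi(H))$ (since $\phi$ only ever removes $v \notin F$; and in the "$u\in H'$" direction $\psi$ only adds $v$). If $v \in F$, then since $F \neq \{v\}$, $F$ has another vertex $x \ne v$; any supertree of $F$ contains $v$ hence contains $u$, so $F \subseteq V(H) \iff (F \setminus \{v\}) \cup \{u\} = F' \subseteq V(\phi(H))$ using $V(\phi(H)) = V(H) \setminus \{v\}$ and $u \in V(H)$. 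This gives $\phi\big(\secl{G}{k}{F}\big) = \secl{G'}{k}{F'}$ and hence, combined with weight-preservation, $\phi\big(\maxset_w(\secl{G}{k}{F})\big) = \maxset_{w'}(\secl{G'}{k}{F'})$.

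\medskip

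\noindent\textbf{Step 3: descriptions transfer, and non-redundancy.} Observe that for any description $(r,\X)$ with $r, \bigcup\X \subseteq V(G') = V(G - v)$, a $G'$-tree $H'$ is described by $(r,\X)$ iff $\psi(H')$ is described by $(r,\X)$ as a $G$-tree: indeed $r \in V(H') \iff r \in V(\psi(H'))$ (since $r \neq v$), and $N_{G'}(H') = N_G(\psi(H'))$ by Step 1. Since $\kX$ is a set of descriptions \emph{for $G - v$}, all its vertices avoid $v$, so this applies. It follows that $\T_G(\kX) = \psi\big(\T_{G'}(\kX)\big) = \psi\big(\maxset_{w'}(\secl{G'}{k}{F'})\big) = \maxset_w(\secl{G}{k}{F})$, which is the required output identity; I should also double-check that $\T_G(\kX)$ "is" the set $\maxset_w(\ldots)$ as sets of subgraphs of $G$, which follows because $\psi$ is the inverse of $\phi$ and every tree in $\maxset_w(\secl{G}{k}{F})$ is $\psi$ of its image. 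Finally, non-redundancy: if some $k$-secluded tree $H$ in $G$ were described by two distinct descriptions in $\kX$, then $\phi(H)$ — a $k$-secluded tree in $G - v$ — would be described by those same two descriptions (by the equivalence just shown applied to $H' = \phi(H)$, $\psi(H') = H$), contradicting non-redundancy of $\kX$ for $G - v$. Hence $\kX$ is non-redundant for $G$.

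\medskip

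\noindent\textbf{Main obstacle.} The routine parts are the neighborhood bookkeeping, but the genuinely delicate point — and the place the hypothesis $F \neq \{v\}$ is used — is Step 2 in the case $v \in F$: I must argue that every relevant tree that contains $v$ also contains $u$, so that replacing $v$ by $u$ in $F$ is harmless, and conversely that there is no "degenerate" solution equal to the single vertex $v$ that would have no analogue in $G - v$. The condition $F \neq \{v\}$ rules out exactly the instance where the unique trivial solution is $\{v\}$ itself, which has no counterpart after deletion; everywhere else the leaf $v$ can be freely slid in or out alongside its neighbor $u$ without affecting connectivity, acyclicity, neighborhood size, total weight, or the description, and that is the whole content of the lemma.
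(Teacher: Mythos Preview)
Your bijection approach matches the paper's in spirit, but Step~1 contains an error that propagates through Steps~2 and~3. Consider a $k$-secluded supertree $H$ of $F$ in $G$ with $u \in V(H)$ and $v \notin V(H)$; this is possible whenever $v \notin F$. Your case analysis in Step~1(c) asserts $N_{G'}(\phi(H)) = N_G(H)$ here, but since $v$'s unique neighbor $u$ lies in $H$ we actually have $v \in N_G(H)$, so $N_{G'}(\phi(H)) = N_G(H) \setminus \{v\} \subsetneq N_G(H)$. Moreover $\phi(H) = H$ while $\psi(\phi(H)) = G[V(H)\cup\{v\}] \neq H$, so $\phi$ and $\psi$ are not mutually inverse; and $w'(\phi(H)) = w(H) + w(v) \neq w(H)$, so $\phi$ is not weight-preserving either. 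Hence the clause ``combined with weight-preservation'' at the end of Step~2 does not establish $\phi\big(\maxset_w(\secl{G}{k}{F})\big) = \maxset_{w'}(\secl{G'}{k}{F'})$, and in Step~3 you have not shown that every $H \in \T_G(\kX)$ lies in the image of $\psi$.

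The repair is precisely the observation the paper makes at the outset: no \emph{maximum-weight} $H \in \secl{G}{k}{F}$ can have $u \in V(H)$ and $v \notin V(H)$, since $G[V(H)\cup\{v\}]$ is still a $k$-secluded supertree of $F$ (attaching a leaf to its unique neighbor preserves acyclicity and does not enlarge the open neighborhood) and has strictly larger weight because $w(v)>0$. Your ``Main obstacle'' paragraph flags the opposite implication ($v\in H \Rightarrow u\in H$, via $F\neq\{v\}$), but that one is immediate from connectivity; it is $u\in H \Rightarrow v\in H$ for \emph{optimal} $H$ that actually carries the argument. Once you restrict to trees with $u\in H \Leftrightarrow v\in H$, your $\phi$ and $\psi$ really are inverse and weight-preserving and the rest goes through---together with the short verification (also omitted) that each $(r,\X)\in\kX$ remains a valid description for $G$ in the sense of Definition~\ref{def:description}.
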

\begin{proof}
We first argue that~$\kX$ is a valid set of descriptions for the graph~$G$. For any~$(r,\X) \in \kX$, we have~$r  \in V(G - v)$ and~$\X$ consists of disjoint subsets of~$V(G-\{v,r\})$, which trivially implies that~$r \in V(G)$ and that~$\X$ consists of disjoint subsets of~$V(G-r)$. Consider any set~$S$ consisting of exactly one vertex from each~$X \in \X$. The connected component~$H$ of~$(G-v)-S$ containing~$r$ is acyclic and~$N_{G-v}(H) = S$ since \kX is a description for~$G-v$. 
Let~$H'$ be the connected component of~$G-S$ containing~$r$. Note that~$V(H) \subseteq V(H') \subseteq V(H) \cup \{v\}$. Clearly~$H'$ is acyclic as it is obtained from~$H$ by possibly adding a degree-1 vertex. We argue that~$N_G(H') = S$. By construction of~$H'$ we have~$N_G(H') \subseteq S$. For the sake of contradiction suppose that there is some~$p \in S \setminus N_G(H')$. Since~$N_{G-v}(H) = S$, there is some vertex~$q \in V(H)$ such that~$p \in N_{G-v}(q)$. But since~$q \in V(H')$, we have that~$p \in N_G(q)$ and so~$p \in N_G(H')$; a contradiction to the containment of~$p \in S \setminus N_G(H')$. It follows that~$N_G(H') = S$.

Observe that any maximum weight $k$-secluded supertree~$H$ of~$F$ in~$G$ containing~$u$, contains its neighbor~$v$ as well: adding~$v$ to an induced tree subgraph containing~$u$ does not introduce cycles since~$v$ has degree one, does not increase the size of the neighborhood, and strictly increases the weight since~$w(v) > 0$ by definition. Conversely, any (maximum weight) $k$-secluded supertree~$H$ of~$F$ in~$G$ that contains~$v$ also contains~$u$: tree~$H$ contains all vertices of the non-empty set~$F$ and~$F \neq \{v\}$, so~$H$ contains at least one vertex other than~$v$, which implies by connectivity that it contains the unique neighbor~$u$ of~$v$. Hence a maximum weight $k$-secluded supertree~$H$ of~$F$ in~$G$ contains~$u$ if and only if it contains~$v$.

Using this fact, we relate the sets~$\maxset_{w}(\secl{G}{k}{F})$ and~$\maxset_{w'}(\secl{G-v}{k}{F'})$. For any~$H \in \maxset_{w}(\secl{G}{k}{F})$, there is a $k$-secluded supertree of~$F'$ in~$G-v$ of the same weight under~$w'$:
\begin{itemize}
    \item If~$v \in H$, then~$u \in H$ as argued above. Now observe that~$H-v$ is a $k$-secluded tree in~$G-v$ that contains~$u$. Since the weight of~$u$ was increased by~$w(v)$ in the transformation, we have~$w(H) = w'(H-v)$. Since~$H$ is a supertree of~$F$ and~$H-v$ contains~$u$, the latter is a supertree of~$F' \subseteq (F \setminus \{v\}) \cup \{u\}$.
    \item If~$v \notin H$, then~$u \notin H$ and therefore~$w(H) = w'(H)$ and~$N_G(H) = N_{G-v}(H)$. As~$v\notin H$ while~$H$ is a supertree of~$F \supseteq T$ we have~$v \notin F \cup T$, which shows~$F' = F$ and~$T' = T$ so that~$H$ is a $k$-secluded supertree of~$F'$ in~$G-v$.
\end{itemize}

Conversely, for any $k$-secluded supertree~$H'$ of~$F'$ in~$G-v$, there is a $k$-secluded supertree of~$F$ in~$G$ of the same weight under~$w$: if~$u \notin H'$ then~$H'$ itself is such a tree, otherwise~$G[V(H') \cup \{v\}]$ is such a tree. 

These transformations imply that the maximum $w$-weight of trees in~$\secl{G}{k}{F}$ is identical to the maximum $w'$-weight of trees in~$\secl{G-v}{k}{F'}$. \bmp{Since any~$H \in \maxset_{w}(\secl{G}{k}{F})$ contains~$u$ if and only if it contains~$v$, they also show that an induced subgraph~$H$ of~$G$ containing either both~$\{u,v\}$ or none belongs to~$\maxset_{w}(\secl{G}{k}{F})$ if and only if~$H - v \in \maxset_{w'}(\secl{G-v}{k}{F'})$;} note that this holds regardless of whether~$v \in H$. To show that~$\T_G(\kX) = \maxset_w(\secl{G}{k}{F})$ it now suffices to observe that if~$(r,\X) \in \kX$ describes a tree~$H - v \in \maxset_{w'}(\secl{G-v}{k}{F'})$ via the set~$S$ containing exactly one vertex of each~$X \in \mathcal{X}$, such that~$H - v$ is the connected component of~$(G - v) - S$ containing~$r$, then the connected component of~$G - S$ containing~$r$ is exactly~$H$, which again holds regardless of whether~$v \in H$. Hence~$\T_G(\kX) = \maxset_w(\secl{G}{k}{F})$, and the set of descriptions is non-redundant for~$G$ since~$\kX$ is non-redundant for~$G-v$.
\end{proof}

We say an instance is \emph{almost leafless} if the lemma above cannot be applied, that is, if~$G$ contains a vertex~$v$ of degree 1, then~$F=\{v\}$.

\begin{lemma} \label{lem:greedy:naive}
    There is an algorithm that, given an almost leafless \ELSS{} instance~$(G,k,F,T,w)$ such that~$k>0$ and~$|N_G(T)| > k(k+1)$, runs in time~$\Oh(k \cdot n^3)$ and either:
    \begin{enumerate}
        \item finds a vertex~$v \in V(G)\setminus F$ such that any $k$-secluded supertree~$H$ of~$F$ in~$G$ satisfies~$v \in N_G(H)$, or
        \item concludes that~$G$ does not contain a $k$-secluded supertree of~$F$.
    \end{enumerate}
\end{lemma}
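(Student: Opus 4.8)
The plan is to realize the structural hint from the introduction by a max-flow computation, around the quantity
$$f(z) \;:=\; \text{the maximum number of pairwise internally-vertex-disjoint paths from } z \text{ to } T \text{ in } G,$$
for $z \in V(G)\setminus T$, which equals a vertex-capacitated maximum $(z,\tau)$-flow after contracting $T$ to a sink $\tau$. First I would dispose of the trivial source of ``forced'' vertices: if some $z \in V(G)\setminus T$ has at least two neighbours in $T$, then together with a path in the connected induced subgraph $G[T]$ these edges form a cycle, so $z$ lies in no supertree of $T$; since $z \in N_G(T)$, every $H \in \secl{G}{k}{F}$ has $z \in N_G(H)$, so I output $z$ (case~1) if $z\notin F$, and otherwise report infeasibility (case~2, since $z\in F$ can then be in no solution). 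Henceforth every vertex has at most one neighbour in $T$.

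The heart of case~1 is the claim: \emph{if $f(z)\ge k+2$ then $z\in N_G(H)$ for every $H\in\secl{G}{k}{F}$.} To see this, fix $H\in\secl{G}{k}{F}$. If $z\in V(H)$, contract $T$ inside the tree $H$: every one of the $f(z)$ paths that lies entirely in $V(H)$ projects to the unique $z$–$\tau$ path of $H/T$, so by internal disjointness at most one of them does; each of the remaining $\ge k+1$ paths, traversed from $z$, first leaves $V(H)$ at a private vertex of $N_G(H)$, forcing $|N_G(H)|\ge k+1$, a contradiction. Hence $z\notin V(H)$. If in addition $z$ had no neighbour in $V(H)$, then each of the $\ge k+2$ paths, traversed from its endpoint in $T\subseteq V(H)$, would leave $V(H)$ at a private vertex of $N_G(H)$, again contradicting $|N_G(H)|\le k$. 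So $z$ is adjacent to $V(H)$ but outside it, i.e.\ $z\in N_G(H)$. Thus the algorithm computes $f(z)$ for all $z\in V(G)\setminus T$, outputs any such $z\notin F$ with $f(z)\ge k+2$ (case~1), reports infeasibility if such a $z$ lies in $F$, and otherwise falls through to case~2.

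The part I expect to be the real work is \emph{completeness}: if $\secl{G}{k}{F}\ne\emptyset$ and $|N_G(T)|>k(k+1)$ then one of the above vertices is found. I would argue the contrapositive: assume none is found and fix $H\in\secl{G}{k}{F}$. Decompose $H$ into $G[T]$ plus the components of $H-V(T)$ (``hanging trees''); routine tree surgery shows each hanging tree is attached to $T$ by a single edge of $H$ and hence, by induced-ness and connectivity of $G[T]$, has a single vertex adjacent to $T$ in $G$, so the hanging trees are in bijection with the attachment set $A:=N_G(T)\cap V(H)$; moreover, for a hanging tree $D$, $B_D:=N_G(V(D))\setminus T$ satisfies $B_D\subseteq N_G(H)$, and $B_D\ne\emptyset$ because otherwise a leaf of $D$ other than its attachment vertex (or the attachment vertex itself when $|V(D)|=1$) would be a degree-$1$ vertex of $G$ distinct from the unique member of $F$ allowed by the \emph{almost leafless} hypothesis --- this is exactly where almost-leaflessness is used. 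Now double-count incidences between hanging trees and $N_G(H)$. On one side it is $\sum_D|B_D|\ge |A|$. On the other side, for each $z\in N_G(H)$, concatenating a path through $D$ for every hanging tree $D$ with $z\in B_D$ (plus the edge from $z$ to its $T$-neighbour when $z\in N_G(T)$) yields internally-disjoint $z$–$T$ paths, so $|\{D: z\in B_D\}|\le f(z)\le k+1$, and $\le k$ when $z\in N_G(T)$. With $P:=|N_G(T)\cap N_G(H)|$ and $Q:=|N_G(H)\setminus N_G(T)|$ this gives $|A|\le Pk+Q(k+1)$, hence
$$|N_G(T)| \;=\; |A|+P \;\le\; P(k+1)+Q(k+1) \;=\; |N_G(H)|\,(k+1)\;\le\; k(k+1),$$
contradicting the hypothesis; so either a vertex is found or $\secl{G}{k}{F}=\emptyset$ (and when the found vertex lies in $F$, reporting infeasibility is correct). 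The running time is dominated by computing $f$ for $\Oh(n)$ vertices, each via at most $k+2$ breadth-first augmentations in a graph with $\Oh(n+m)$ vertices and edges, i.e.\ $\Oh(k\cdot n\cdot(n+m)) = \Oh(k\cdot n^3)$.

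The two ``forced-vertex'' arguments are then straightforward; the main obstacle is getting the completeness bookkeeping tight --- the bijection between hanging trees and $N_G(T)\cap V(H)$, the non-emptiness of each $B_D$ via almost-leaflessness, and the $P/Q$ split so that the final estimate collapses to $|N_G(H)|(k+1)$ and therefore matches the threshold $k(k+1)$ assumed in the statement.
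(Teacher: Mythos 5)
Your proposal is correct, and at its core it is the same algorithm as the paper's: after contracting $T$, run at most $k+2$ Ford--Fulkerson augmentations per candidate vertex and output a vertex admitting $k+2$ internally disjoint paths to $T$, with almost-leaflessness used exactly as in the paper to guarantee that each component of $H-T$ has a neighbor in $N_G(H)$ (the paper walks to a leaf of the hanging subtree and steps out; you phrase it as $B_D\neq\emptyset$). The safety direction is equivalent (the paper picks two of the $k+2$ paths avoiding $N_G(H)$ and closes a cycle through $T$; you count exit vertices — note your ``projects to the unique $z$--$\tau$ path'' step has a degenerate corner where both paths are single edges into $T$, but this is excluded by your earlier disposal of vertices with two neighbors in $T$, or alternatively by acyclicity of $H$). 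The one genuine difference is the completeness bookkeeping: the paper constructs one path per vertex of $N_G(T)$ ending in $N_G(H)$ and pigeonholes $|N_G(T)|>k(k+1)$ paths onto at most $k$ endpoints, thereby directly exhibiting the witness vertex the flow computation will find; you instead argue the contrapositive by double-counting incidences between hanging trees and $N_G(H)$ against the assumed bounds $f(z)\le k+1$ (and $\le k$ for $z\in N_G(T)$), arriving at $|N_G(T)|\le |N_G(H)|(k+1)\le k(k+1)$. The two counts are dual views of the same structure; the paper's version is slightly more economical since it never needs the two-neighbor preprocessing (its lemma, as stated, does not assume it), while yours yields the marginally stronger behaviour of declaring infeasibility as soon as a forced vertex lies in $F$ rather than restricting the flow computations to $V(G)\setminus F$.
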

\begin{proof}
%
    We aim to find a vertex~$v \in V(G) \setminus F$ with~$k+2$ distinct paths~$P_1,\dots,P_{k+2}$ from~$N_G(T)$ to~$v$ that intersect only in~$v$ and do not contain vertices from~$T$. We first argue that such a vertex~$v$ satisfies the first condition, if it exists. Consider some $k$-secluded supertree~$H$ of~$F$. Since the paths~$P_1, \dots, P_{k+2}$ are disjoint apart from their common endpoint~$v$ while~$|N_G(H)| \leq k$, there are two paths~$P_i, P_j$ with~$i \neq j \in [k+2]$ for which~$P_i \setminus \{v\}$ and~$P_j \setminus \{v\}$ do not intersect~$N_G(H)$. Since they start in~$N_G(T)$, the paths~$P_i \setminus \{v\}$ and~$P_j \setminus \{v\}$ are contained in~$H$. As~$P_i$ and~$P_j$ form a cycle together with a path through the connected set~$T$, which cannot be contained in the acyclic graph~$H$, this implies~$v \in N_G(H)$.

    Next we argue that if~$G$ has a $k$-secluded supertree~$H$ of~$F \supseteq T$, then there exists such a vertex~$v$. Consider an arbitrary such~$H$ and root it at a vertex~$t \in T$. For each vertex~$u \in N_G(T)$, we construct a path~$P_u$ disjoint from~$T$ that starts in~$u$ and ends in~$N_G(H)$, as follows.
    \begin{itemize}
        \item If~$u \notin H$, then~$u \in N_G(H)$ and we take~$P_u = (u)$.
        \item If~$u \in H$, then let~$\ell_u$ be an arbitrary leaf in the subtree of~$H$ rooted at~$u$; possibly~$u = \ell_u$. Since~$T$ is connected and~$H \supseteq T$ is acyclic and rooted in~$t \in T$, the subtree rooted at~$u \in N_G(T) \cap H$ is disjoint from~$T$. Hence~$\ell_u \notin T$, so that~$F \neq \{\ell_u\}$. As the instance is almost leafless we therefore have~$\deg_G(\ell_u) > 1$. Because~$\ell_u$ is a leaf of~$H$ this implies that~$N_G(\ell_u)$ contains a vertex~$y$ other than the parent of~$\ell_u$ in~$H$, so that~$y \in N_G(H)$. We let~$P_u$ be the path from~$u$ to~$\ell_u$ through~$H$, followed by the vertex~$y \in N_G(H)$.
    \end{itemize}
    The paths we construct are distinct since their startpoints are. Two constructed paths cannot intersect in any vertex other than their endpoints, since they were extracted from different subtrees of~$H$. Since we construct~$|N_G(T)| > k(k+1)$ paths, each of which ends in~$N_G(H)$ which has size at most~$k$, some vertex~$v \in N_G(H)$ is the endpoint of~$k+2$ of the constructed paths. As shown in the beginning the proof, this establishes that~$v$ belongs to the neighborhood of any $k$-secluded supertree of~$F$. Since~$F \subseteq V(H)$ we have~$v \notin F$.

    All that is left to show is that we can find such a vertex~$v$ in the promised time bound. After contracting~$T$ into a source vertex~$s$, for each~$v \in V(G) \setminus F$, do~$k+2$ iterations of the Ford-Fulkerson algorithm in order to check if there are~$k+2$ internally vertex-disjoint $sv$-paths. If so, then return~$v$. If for none of the choices of~$v$ this holds, then output that there is no $k$-secluded supertree of~$F$ in~$G$. 
    In order to see that this satisfies the claimed running time bound, note that there are~$\Oh(n)$ choices for~$v$, and~$k+2$ iterations of Ford-Fulkerson runs can be implemented to run in~$\Oh(k\cdot(n+m))$ time.
\end{proof}

\subsection{The algorithm} \label{sec:alg}

Consider an input instance~$(G,k,F,T,w)$ of \ELSS. If~$G[F]$ contains a cycle, return~$\emptyset$. Otherwise we remove all connected components of~$G$ that do not contain a vertex of~$F$. If more than one connected component remains, return~$\emptyset$.
Then, while there is a degree-1 vertex~$v$ such that~$F \neq \{v\}$, contract~$v$ into its neighbor as per~\cref{def:contract}.
While~$N_G(T)$ contains a vertex~$v \in F$, add~$v$ to~$T$.
Finally, if~$N_G(F) = \emptyset$, return~$\{(r,\emptyset)\}$ for some~$r \in F$.
Otherwise if~$k=0$, return~$\emptyset$.



We proceed by considering the neighborhood of~$T$ as follows:
\begin{enumerate}

 \item \label[step]{alg:2nbrs} If any vertex~$v \in N_G(T)$ has two neighbors in~$T$, then recursively run this algorithm to obtain a set of descriptions~$\mathfrak{X}'$ for~$(G-v,k-1,F,T,w)$ and return~$\{(r,\mathcal{X} \cup \{\{v\}\}) \mid (r,\mathcal{X}) \in \mathfrak{X}'\}$.
 

 \item \label[step]{alg:greedy} If~$|N_G(T)| > k(k+1)$, apply Lemma~\ref{lem:greedy:naive}. If it concludes that~$G$ does not contain a $k$-secluded supertree of~$F$, return~$\emptyset$. Otherwise let~$v \in V(G) \setminus F$ be the vertex it finds, obtain a set of descriptions~$\mathfrak{X}'$ for~$(G-v, k-1, F, T, w)$ and return~$\{(r,\mathcal{X} \cup \{\{v\}\}) \mid (r,\mathcal{X}) \in \mathfrak{X}'\}$.

 \item \label[step]{alg:3way} Pick some~$v \in N_G(T)$ and let~$P = (v=v_1,v_2,\ldots,v_\ell)$ be the unique\footnote{
 To construct~$P$, initialize~$P := (v=v_1)$; then while~$\deg_G(v_{|V(P)|}) = 2$ and~$N_G(v_{|V(P)|}) \setminus (V(P) \cup T)$ consists of a single vertex, append that vertex to~$P$.
 \label{footnote}} maximal path disjoint from~$T$ satisfying~$\degree_G(v_i) = 2$ for each~$1 \leq i < \ell$ and~($v_\ell \in N_G(T)$ or~$\degree_G(v_\ell) > 2$).
 \begin{enumerate}
     \item If~$v_\ell \not\in F$, obtain a set of descriptions~$\mathfrak{X}_1$ by recursively solving~$(G-v_\ell, k-1, F,T, w)$. Otherwise take~$\mathfrak{X}_1 = \emptyset$. (We find the $k$-secluded trees avoiding~$v_\ell$ but containing~$P - v_\ell$.)
     
     \item If~$P - F - v_\ell \neq \emptyset$, obtain a set of descriptions~$\mathfrak{X}_2$ by recursively solving~$(G-V(P-v_\ell), k-1, (F \setminus V(P)) \cup \{v_\ell\}, T, w)$. Otherwise take~$\mathfrak{X}_2 = \emptyset$. \jjhr{I believe this should be fine, the updated~$T'$ seems only to be used to keep it a connected component (whose definition was wrong I think)}
     (We find the $k$-secluded trees containing both endpoints of~$P$ which have one vertex in~$P$ as a neighbor.)
     
     \item If~$G[F \cup V(P)]$ is acyclic, obtain a set of descriptions~$\mathfrak{X}_3$ by recursively solving~$(G,k,F \cup V(P), T \cup V(P), w)$. Otherwise take~$\mathfrak{X}_3 = \emptyset$. (We find the $k$-secluded trees containing the entire path~$P$.)
 \end{enumerate}
 
 %
 Let~$M$ be the set of minimum weight vertices in~$P-F-v_\ell$ and define:
 \begin{align*}
     \kX'_1 &:= \{(r,\mathcal{X} \cup \{\{v_\ell\}\}) \mid (r,\mathcal{X}) \in \mathfrak{X}_1\} \\
     \kX'_2 &:= \{(r,\mathcal{X} \cup \{M\}) \mid (r,\mathcal{X}) \in \mathfrak{X}_2\} \\
     \kX'_3 &:= \kX_3. 
 \end{align*}
 For each~$i \in [3]$ let~$w_i$ be the weight of an arbitrary~$H \in \T_G(\kX'_i)$, or~$0$ if~$\kX'_i = \emptyset$. Return the set~$\kX'$ defined as~$\bigcup_{\{i \in [3] \mid w_i=\max\{w_1,w_2,w_3\}\}} \kX'_i$.
\end{enumerate}

\subsection{Proof of correctness} \label{sec:poc}

In this section we argue that the algorithm described in \cref{sec:alg} solves the \ELSS{} problem. In various steps we identify a vertex~$v$ such that the neighborhood of any maximum-weight $k$-secluded supertree must include~$v$. We argue that for these steps, the descriptions of the current instance can be found by adding~$\{v\}$ to every description of the supertrees of~$T$ in~$G-v$ if some preconditions are satisfied.

\newcommand{\lemCombine}{
Let~$(G,k,F,T,w)$ be an \ELSS{} instance and let~$v \in V(G) \setminus F$. Let~\kX be a set of descriptions for~$G-v$ such that~$\T_{G-v}(\kX) = \maxset_w(\secl{G-v}{k-1}{F})$
and~$v \in N_G(H)$ for all~$H \in \T_{G-v}(\kX)$. Then we have:
$$\T_G \left(\{(r, \X \cup \{\{v\}\}) \mid (r, \X) \in \kX\} \right) = \maxset_w \{H \in \secl{G}{k}{F} \mid v \in N_G(H)\}.$$ 


}
\begin{lemma} \label{lem:combine}
\lemCombine
\end{lemma}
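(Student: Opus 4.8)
The plan is to prove the set equality by showing two inclusions, but really the cleanest route is to establish a weight-preserving bijection-like correspondence between $\{H \in \secl{G}{k}{F} \mid v \in N_G(H)\}$ and $\secl{G-v}{k-1}{F}$, and then check that this correspondence behaves well with respect to the $\maxset_w$ operator and with respect to the descriptions obtained by appending $\{v\}$. First I would observe, using \cref{obs:secluded} and \cref{obs:secluded2}, that a tree $H$ is a $k$-secluded supertree of $F$ in $G$ with $v \in N_G(H)$ if and only if $H$ is a $(k-1)$-secluded supertree of $F$ in $G-v$ \emph{and} $v \in N_G(H)$ — the subtlety being that $\secl{G-v}{k-1}{F}$ may contain trees $H$ with $v \notin N_G(H)$, i.e. trees that "don't see" $v$ at all. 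So the sets are not literally equal; the left set is the subset of $\secl{G-v}{k-1}{F}$ consisting of trees adjacent to $v$. Crucially the weight function is unchanged (by the abuse-of-notation convention, and since $v \notin V(H)$ for such $H$), so $w(H)$ is the same whether we view $H$ inside $G$ or inside $G-v$.

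The key point then is to compare the maxima. We know $\T_{G-v}(\kX) = \maxset_w(\secl{G-v}{k-1}{F})$ and, by hypothesis, \emph{every} $H \in \T_{G-v}(\kX)$ satisfies $v \in N_G(H)$. So the maximum-weight trees in $\secl{G-v}{k-1}{F}$ all happen to be adjacent to $v$; hence the maximum weight over $\secl{G-v}{k-1}{F}$ equals the maximum weight over its subset $\{H \in \secl{G-v}{k-1}{F} \mid v \in N_G(H)\} = \{H \in \secl{G}{k}{F} \mid v \in N_G(H)\}$, and the maximizers coincide: $\maxset_w(\secl{G-v}{k-1}{F}) = \maxset_w\{H \in \secl{G}{k}{F} \mid v \in N_G(H)\}$. (Here I would spell out that if $A \subseteq B$ and $\maxset_w(B) \subseteq A$ then $\maxset_w(B) = \maxset_w(A)$.) Combining, $\T_{G-v}(\kX) = \maxset_w\{H \in \secl{G}{k}{F} \mid v \in N_G(H)\}$ as sets of trees.

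It remains to verify that $\T_G(\{(r,\X \cup \{\{v\}\}) \mid (r,\X) \in \kX\}) = \T_{G-v}(\kX)$, i.e. that appending the singleton block $\{v\}$ to every description exactly adds $v$ back as a neighbor without changing which trees are described. For this I would argue: first that $(r, \X \cup \{\{v\}\})$ is a valid description for $G$ whenever $(r,\X)$ is valid for $G-v$ and every tree it describes is adjacent to $v$ — given a transversal $S' = S \cup \{v\}$ with $S$ a transversal of $\X$, the component $H$ of $(G-v)-S$ containing $r$ is a $(k-1)$-secluded tree with $N_{G-v}(H) = S$; since $v \in N_G(H)$ we get that the component of $G - S'$ containing $r$ is exactly $H$ (deleting $v$ does nothing new since $v$ was already cut off, and $v$'s only role was to be a neighbor), it is acyclic, and $N_G(H) = S \cup \{v\} = S'$; so $H$ is described by $(r, \X \cup \{\{v\}\})$ in $G$. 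Conversely any tree described by $(r, \X \cup \{\{v\}\})$ in $G$ must, by the uniqueness of the block $\{v\}$, have $v \in N_G(H)$ and $S = N_G(H) \setminus \{v\}$ a transversal of $\X$, and then $H$ is the component of $(G-v) - S$ containing $r$ with $N_{G-v}(H) = S$, so it is described by $(r,\X)$ in $G-v$. Hence the two families of described trees are equal, which closes the proof.

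The main obstacle, and the place to be careful, is exactly the gap between $\secl{G-v}{k-1}{F}$ and its subset of trees adjacent to $v$: the lemma's conclusion is a statement about the \emph{latter}, and it is the extra hypothesis "$v \in N_G(H)$ for all $H \in \T_{G-v}(\kX)$" that bridges the two at the level of the $\maxset$. I would make sure the inequality argument for the maxima is airtight — in particular that no maximum-weight $(k-1)$-secluded supertree of $F$ in $G-v$ is overlooked just because it fails to be adjacent to $v$ (it cannot be overlooked, precisely because all the maximizers \emph{are} adjacent to $v$ by hypothesis). Everything else — validity of the augmented descriptions and the "deleting an already-separated vertex changes nothing" observation about connected components — is routine.
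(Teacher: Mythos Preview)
Your proposal is correct and follows essentially the same approach as the paper. The paper proves the two inclusions directly via contradiction arguments that, unwound, amount to exactly your general fact ``if $A \subseteq B$ and $\maxset_w(B) \subseteq A$ then $\maxset_w(A) = \maxset_w(B)$'' together with the transversal correspondence $\T_G(\kX') = \T_{G-v}(\kX)$ you isolate; your factoring is slightly cleaner but the mathematical content is identical.
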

\begin{proof}
First observe that~$\kX' = \{(r, \X \cup \{\{v\}\}) \mid (r, \X) \in \kX\}$ is a valid set of descriptions for~$G$ since~$v \in N_G(H)$ for all~$H \in \T_{G-v}(\kX)$. 

Consider a maximum-weight (with respect to~$w$) $k$-secluded supertree~$H \in \secl{G}{k}{F}$ such that~$v \in N_G(H)$. We show that it is contained in~$\T_G(\kX')$. By \cref{obs:secluded2} we have that~$H \in \secl{G-v}{k-1}{F}$, that is,~$H$ is a $(k-1)$-secluded supertree of~$F$ in~$G-v$. We argue that~$H \in \maxset_w(\secl{G-v}{k-1}{F})$. For the sake of contradiction, suppose there is~$H' \in \maxset_w(\secl{G-v}{k-1}{F})$ such that~$w(H') > w(H)$.
By \cref{obs:secluded} it follows that~$H'$ is a $k$-secluded supertree of~$F$ in~$G$. This contradicts the fact that~$H$ is maximum weight among such supertrees. It follows that~$H \in \maxset_w(\secl{G-v}{k-1}{F})$. By the assumption that~$\T_{G-v}(\kX) = \maxset_w(\secl{G-v}{k-1}{F})$, we have that there is a description~$(r,\X) \in \kX$ for~$G-v$ that describes~$H$. Since~$(r, \X \cup \{\{v\}\}) \in \kX'$ is a description for~$H$ in~$G$, we have that~$H \in \T_G(\kX')$ as required.

In the other direction, consider some tree~$J \in \T_G(\kX')$. We show that~$J \in \maxset_w \{H \in \secl{G}{k}{F} \mid v \in N_G(H)\}$. Let~$(r,\X \cup \bmp{\{\{v\}\}}) \in \kX'$ be a description that describes~$J$. By \cref{def:description} we have~$v \in N_G(J)$. Since~$(r, \X) \in \kX$ describes~$J$ in~$G-v$ and~$\T_{G-v}(\kX) = \maxset_w(\secl{G-v}{k-1}{F})$, we have that~$J \in \maxset_w(\secl{G-v}{k-1}{F})$. Since~$v \in N_G(J)$, by \cref{obs:secluded} we have that~$J \in \{H \in \secl{G}{k}{F} \mid v \in N_G(H)\}$. For the sake of contradiction, suppose that there is~$J' \in \{H \in \secl{G}{k}{F} \mid v \in N_G(H)\}$ such that~$w(J') > w(J)$. Then we get that~$J' \in \maxset_w(\secl{G-v}{k-1}{F})$, but this contradicts that~$J \in \maxset_w(\secl{G-v}{k-1}{F})$. It follows that~$J \in \maxset_w \{H \in \secl{G}{k}{F} \mid v \in N_G(H)\}$ as required.
\end{proof}

\bmp{The next lemma will be used to argue correctness of \cref{alg:3way}(b) of the algorithm, in which we find $k$-secluded trees which avoid a single vertex from path~$P$.}

\begin{lemma} \label{lem:combine-path}
Let~$(G,k,F,T,w)$ be an \ELSS{} instance and let~$P$ be a path in~$G$ with~$\degree_G(v) = 2$ for all~$v \in V(P)$ and~$N_G(P) = \{a,b\}$ for some~$a,b \in F$. Let~\kX be a set of descriptions for~$G-V(P)$ such that~$\T_{G-V(P)}(\kX) = \maxset_w(\secl{G-V(P)}{k-1}{F \setminus V(P)})$. Then for all~$p \in V(P) \setminus F$ we have: 
$$\T_G(\{(r, \X \cup \{\{p\}\}) \mid (r, \X) \in \kX\}) = \maxset_w\{H \in \secl{G}{k}{F} \mid p \in N_G(H)\}.$$
\end{lemma}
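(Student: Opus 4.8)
The plan is to build a weight‑shifting bijection between $\secl{G-V(P)}{k-1}{F\setminus V(P)}$ and the $k$-secluded supertrees of $F$ in $G$ that contain all of $V(P)\setminus\{p\}$ and have $p$ as a neighbour, and then to check that ``appending the singleton $\{p\}$ to every description of $\kX$'' realises exactly this bijection at the level of descriptions. Write $P=(p_1,\dots,p_\ell)$ with $p=p_i$, and set $c:=w(V(P)\setminus\{p\})$, a constant. Since every vertex of $P$ has degree $2$ in $G$ and $N_G(P)=\{a,b\}$ with $a,b\notin V(P)$, deleting $p$ splits $P$ into two (possibly empty) pendant paths $P_a=p_1\cdots p_{i-1}$, attached to the rest of $G$ only through the edge $ap_1$, and $P_b=p_{i+1}\cdots p_\ell$, attached only through $bp_\ell$. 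A basic normal form, used repeatedly, is: if $H$ is an induced subtree of $G$ with $a,b\in V(H)$ and $p\notin V(H)$, then $V(H)\cap V(P)$ consists of a prefix of $P_a$ together with a suffix of $P_b$, i.e.\ it forms pendant subpaths of $H$ hanging off $a$ and off $b$ (any ``middle'' piece of $P$ inside $H$ would be a connected component of $H$ avoiding $a$, contradicting connectivity of $H$). Define $\phi(H'):=G[V(H')\cup(V(P)\setminus\{p\})]$ for $H'\in\secl{G-V(P)}{k-1}{F\setminus V(P)}$ (note $a,b\in F\setminus V(P)\subseteq V(H')$) and $\psi(H):=G[V(H)\setminus V(P)]$ for $H\in\secl{G}{k}{F}$ with $p\in N_G(H)$. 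Using the normal form one checks the routine claims: (i) $\phi(H')$ is a tree (two pendant paths added to $H'$ at $a$ and $b$), contains $F$ (this is where $p\notin F$ enters, so $F\cap V(P)\subseteq V(P)\setminus\{p\}$), has $N_G(\phi(H'))=N_{G-V(P)}(H')\cup\{p\}$ with $p\notin N_{G-V(P)}(H')$ — hence is a $k$-secluded supertree of $F$ with $p\in N_G(\phi(H'))$ — and $w(\phi(H'))=w(H')+c$; and (ii) $\psi(H)$ is a tree containing $F\setminus V(P)$ with $N_{G-V(P)}(\psi(H))\subseteq N_G(H)\setminus\{p\}$, hence $(k-1)$-secluded. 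Moreover $\phi$ and $\psi$ are mutually inverse bijections between $\secl{G-V(P)}{k-1}{F\setminus V(P)}$ and $B:=\{H\in\secl{G}{k}{F}\mid p\in N_G(H),\ V(P)\setminus\{p\}\subseteq V(H)\}$.

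Next I would remove the side condition defining $B$. For arbitrary $H\in\secl{G}{k}{F}$ with $p\in N_G(H)$, the tree $\phi(\psi(H))$ lies in $B$ and has weight $w(H)-w(V(H)\cap V(P))+c\ge w(H)$, with equality exactly when $V(P)\setminus\{p\}\subseteq V(H)$. Hence every maximum‑weight member of $\{H\in\secl{G}{k}{F}\mid p\in N_G(H)\}$ already lies in $B$, so $\maxset_w\{H\in\secl{G}{k}{F}\mid p\in N_G(H)\}=\maxset_w(B)$; and since $w\circ\phi=w+c$, the bijection $\phi$ carries $\maxset_w(\secl{G-V(P)}{k-1}{F\setminus V(P)})$ onto $\maxset_w(B)$. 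By hypothesis this domain equals $\T_{G-V(P)}(\kX)$, so what remains is to prove $\T_G(\kX')=\phi(\T_{G-V(P)}(\kX))$, where $\kX':=\{(r,\X\cup\{\{p\}\})\mid(r,\X)\in\kX\}$.

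For that last step the crucial point is that the hypothesis $\T_{G-V(P)}(\kX)=\maxset_w(\secl{G-V(P)}{k-1}{F\setminus V(P)})$ forces \emph{every} tree described by any $(r,\X)\in\kX$, under \emph{every} selection $S$ of $\X$ (not merely the selection witnessing a particular solution), to be a supertree of $F\setminus V(P)$ and hence to contain $a$ and $b$. Given this, for any $(r,\X)\in\kX$ and any selection $S$, the component $H'_S$ of $(G-V(P))-S$ containing $r$ has $a,b\in V(H'_S)$, so — using $p\notin S$ as $S\subseteq V(G-V(P))$ — the component of $G-(S\cup\{p\})$ containing $r$ is precisely $\phi(H'_S)$; with claim (i) this shows that $(r,\X\cup\{\{p\}\})$ is a valid description of $G$ and that it describes exactly the trees $\phi(H)$ with $H$ described by $(r,\X)$. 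Taking the union over $\kX$ yields $\T_G(\kX')=\phi(\T_{G-V(P)}(\kX))=\maxset_w(B)=\maxset_w\{H\in\secl{G}{k}{F}\mid p\in N_G(H)\}$, which is the claim. I expect the main obstacle to be exactly this bookkeeping: verifying that $(r,\X\cup\{\{p\}\})$ meets the definition of a description for \emph{all} selections requires that $\phi$ commutes with ``take the component containing $r$'', which in turn relies on the hypothesis pinning $\T_{G-V(P)}(\kX)$ down to supertrees of $F\setminus V(P)$ (only these are guaranteed to contain $a$ and $b$). The degenerate case $V(P)=\{p\}$, where $P_a=P_b=\emptyset$, $c=0$, and $\phi$ is the identity on vertex sets, is covered verbatim by the same arguments.
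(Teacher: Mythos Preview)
Your proof is correct and follows essentially the same approach as the paper: both hinge on the correspondence $H' \leftrightarrow G[V(H')\cup (V(P)\setminus\{p\})]$ between $(k-1)$-secluded supertrees of $F\setminus V(P)$ in $G-V(P)$ and $k$-secluded supertrees of $F$ in $G$ that avoid $p$, and both verify that appending $\{p\}$ to each description implements this correspondence. Your organisation around an explicit weight-shifting bijection $\phi,\psi$ is a bit cleaner than the paper's two ad hoc inclusion arguments; in particular, your ``removing the side condition'' step (showing that every maximum-weight $H$ with $p\in N_G(H)$ already contains all of $V(P)\setminus\{p\}$) makes explicit something the paper uses without justification when it writes $w(H-V(P))+w(P-p)=w(H)$ and $V(P-p)\subseteq V(J')$.
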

\begin{proof}
\hui{

For any~$p \in V(P)$ we define~$\kX^p = \{(r, \X \cup \{\{p\}\}) \mid (r, \X) \in \kX\}$. We show~$\kX^p$ is a valid set of descriptions for~$G$. Note that for any set~$S$ consisting of exactly one vertex from each set~$X \in \X \cup \{\{p\}\}$ there is an~$H \in T_{G-V(P)}(\kX)$ with~$N_G(H) = S\setminus\{p\}$ and~$r \in V(H)$ since~$\kX$ is a valid set of descriptions for~$G-V(P)$. Since~$\{a,b\} \subseteq V(H)$ and~$N_G(P) = \{a,b\}$, we have that~$H' = G[V(H) \cup V(P)]$ is the connected component of~$G-(S \setminus p)$ containing~$r$. Observe that~$H'-p$ is acyclic and connected and since~$p \in V(P) \subseteq V(H')$ we have that~$p \in N_G(H'-p)$, hence~$N_G(H'-p) = S$ and~$\kX^p$ is a valid set of descriptions for~$G$.

Next we show~$\T_G(\kX^p) \supseteq \maxset_w\{H \in \secl{G}{k}{F} \mid p \in N_G(H)\}$ for any~$p \in V(P) \setminus F$.
\bmp{Consider a $k$-secluded supertree~$H \in \secl{G}{k}{F}$ which has maximum weight (with respect to~$w$) among those satisfying~$p \in N_G(H)$.}
We show that~$H \in \T_G(\kX^p)$. By Note~\ref{obs:secluded2} we have that~$H \in \secl{G-p}{k-1}{F}$, that is,~$H$ is a $(k-1)$-secluded supertree of~$F$ in~$G-p$. Observe that~$H-V(P)$ remains connected so~$H-V(P) \in \secl{G-V(P)}{k-1}{F \setminus V(P)}$. We argue that~$H-V(P) \in \maxset_w(\secl{G-V(P)}{k-1}{F \setminus V(P)})$. For the sake of contradiction, suppose there is~$H' \in \maxset_w(\secl{G-V(P)}{k-1}{F \setminus V(P)})$ such that~$w(H') > w(H-V(P))$. Observe that~$H'' := G[V(H') \cup V(P-p)]$ is a connected acyclic subgraph of~$G$ with~$N_G(H'') = N_G(H') \cup \{p\}$, i.e.,~$H''$ is a $k$-secluded supertree of~$F$ in~$G$. Since~$w(H'') = w(H') + w(P-p) > w(H-V(P)) + w(P-p) = w(H)$ this contradicts that~$H \in \maxset_w(\secl{G}{k}{F})$. It follows that~$H-V(P) \in \maxset_w(\secl{G-V(P)}{k-1}{F \setminus V(P)})$. Since it is given that~$\T_{G-V(P)}(\kX) = \maxset_w(\secl{G-V(P)}{k-1}{F \setminus V(P)})$, we have that there is a description~$(r,\X) \in \kX$ for~$G-V(P)$ that describes~$H-V(P)$. Then~$(r, \X \cup \{\{p\}\}) \in \kX^p$ is a description for~$H$ in~$G$ and we conclude that~$H \in \T_G(\kX^p)$ as required.

Finally we show~$\T_G(\kX^p) \subseteq \maxset_w\{H \in \secl{G}{k}{F} \mid p \in N_G(H)\}$ for any~$p \in V(P) \setminus F$.
Consider some tree~$J \in \T_G(\kX^p)$. We show that~$J \in \maxset_w \{H \in \secl{G}{k}{F} \mid p \in N_G(H)\}$.
Clearly~$J \in \{H \in \secl{G}{k}{F} \mid p \in N_G(H)\}$, so it remains to show that~$w(J) \geq w(J')$ for all~$J' \in \{H \in \secl{G}{k}{F} \mid p \in N_G(H)\}$. Suppose for contradiction that there exists such a~$J'$ for which~$w(J) < w(J')$. Observe that~$J - V(P)$ and~$J' - V(P)$ are both $(k-1)$-secluded supertrees of~$F \setminus V(P)$ in~$G-V(P)$, i.e., they are contained in~$\secl{G-V(P)}{k-1}{F \setminus V(P)}$. Since~$V(P-p) \subseteq V(J)$ and~$V(P-p) \subseteq V(J')$ we have that~$w(J-V(P)) < w(J'-V(P))$, so~$J-V(P) \not\in \maxset_w(\secl{G-V(P)}{k-1}{F \setminus V(P)})$. Recall that~$J \in T_G(\kX^p)$ and consider the description~$(r,\X \cup \{\{p\}\}) \in \kX^p$ that describes~$J$. Observe that~$(r,\X)$ describes~$J-V(P)$ in~$G-V(P)$, i.e.,~$J-V(P) \in \T_{G-V(P)}(\kX)$. However it is given that~$\T_{G-V(P)}(\kX) = \maxset_w(\secl{G-V(P)}{k-1}{F \setminus V(P)})$, a contradiction. Hence~$J \in \maxset_w\{H \in \secl{G}{k}{F} \mid p \in N_G(H)\}$.
}
\end{proof}

\huir{Improve filltext? ``the problem to solve in \cref{alg:3way} reduces to the three problems solved in the recursive calls.''}The following lemma is used to argue that the branches of \cref{alg:3way} are disjoint.

\begin{lemma} \label{lem:3way-branching}
Let~$(G,k,F,T,w)$ be an almost leafless \ELSS{} instance such that~$G$ is connected and~$N_G(F) \neq \emptyset$. Fix some~$v \in N_G(T)$ and let~$P = (v=v_1,v_2,\ldots,v_\ell)$ be the unique maximal path disjoint from~$T$ satisfying~$\degree_G(v_i) = 2$ for each~$1 \leq i < \ell$ and~($v_\ell \in N_G(T)$ or~$\degree_G(v_\ell) > 2$). Then for any \emph{maximum-weight} $k$-secluded supertree~$H$ of~$F$, exactly one of the following holds:
\begin{enumerate}
    \item \label[condition]{lem:3way:v_ell}     ~$v_\ell \in N(H)$ (so~$v_\ell \notin F$),
    \item \label[condition]{lem:3way:path-split}~$|N(H) \cap V(P-F-v_\ell)| = 1$ and~$v_\ell \in V(H)$, or
    \item \label[condition]{lem:3way:path-incl} ~$V(P) \subseteq V(H)$.
\end{enumerate}
\end{lemma}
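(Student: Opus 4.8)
The plan is to show that the three conditions are mutually exclusive and that at least one of them holds for every maximum-weight $k$-secluded supertree~$H$ of~$F$; together these give the "exactly one" claim. I will first handle a degenerate subcase: since the instance is almost leafless and~$G$ is connected with~$N_G(F)\neq\emptyset$, a vertex of degree~1 can only be the unique vertex of~$F$, so~$|V(G)|\ge 2$ and~$v\in N_G(T)$ actually has~$\deg_G(v)\ge 2$; I should note that~$v=v_1$ is not in~$T$, and that because~$P$ is disjoint from~$T$ and maximal, the only vertices of~$P$ adjacent to~$T$ are~$v_1$ (via its neighbor in~$T$) and possibly~$v_\ell$. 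The key structural observation is that~$N_G(V(P))$ is contained in~$\{$the neighbor(s) of~$v_1$ in~$T\} \cup \{v_\ell\}$ together with~$N_G(v_\ell)\setminus V(P)$ when~$\deg_G(v_\ell)>2$; more precisely, every internal~$v_i$ ($1<i<\ell$) has degree~$2$ with both neighbors on~$P$, and~$v_1$ has its two neighbors being~$v_2$ and a vertex of~$T$.

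\medskip
\noindent\textbf{Mutual exclusivity.} Suppose~$H$ is any $k$-secluded supertree of~$F$ (maximality is not needed here).
If \cref{lem:3way:v_ell} holds then~$v_\ell\notin V(H)$, so~$V(P)\not\subseteq V(H)$, ruling out \cref{lem:3way:path-incl}; and since~$v_\ell\in N(H)$ implies~$v_\ell\notin V(H)$, \cref{lem:3way:path-split} cannot hold either. If \cref{lem:3way:path-incl} holds then~$V(P)\subseteq V(H)$, so no vertex of~$P$ is in~$N(H)$, in particular~$N(H)\cap V(P-F-v_\ell)=\emptyset$ and~$v_\ell\notin N(H)$, ruling out the other two. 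Finally \cref{lem:3way:path-split} explicitly asserts~$v_\ell\in V(H)$ while exactly one vertex of~$P-F-v_\ell$ lies in~$N(H)$, hence~$V(P)\not\subseteq V(H)$ and~$v_\ell\notin N(H)$; so it is incompatible with the other two. Thus at most one condition holds.

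\medskip
\noindent\textbf{At least one holds.} Now let~$H$ be a \emph{maximum-weight} $k$-secluded supertree of~$F$. We split on whether~$V(P)\subseteq V(H)$. If yes, \cref{lem:3way:path-incl} holds and we are done. Otherwise some vertex of~$P$ is not in~$V(H)$; since~$H$ is an induced subgraph and~$F\subseteq V(H)$, any vertex of~$P\cap F$ is in~$V(H)$, so the missing vertex lies in~$V(P)\setminus F$, and as~$v_1\in N_G(T)\subseteq\cdots$ — here I need to be a bit careful: $v_1$ may or may not be in~$V(H)$. The cleanest route is: let~$Q$ be the set of vertices of~$P$ that are \emph{not} in~$V(H)$; then~$Q\subseteq N_G(H)$ because each such vertex, being on the degree-$\le 2$ path with neighbors on~$P$ (for internal ones) or adjacent to~$T\subseteq V(H)$ (for~$v_1$), has a neighbor in~$V(H)$. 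Then I argue~$|Q\cap(V(P-F-v_\ell))|\le 1$: if two distinct vertices~$v_i,v_j\in Q$ with~$i<j$ and both in~$P-F-v_\ell$, then the sub-path~$(v_i,\ldots,v_j)$ has its interior in~$V(H)$ (else there'd be a third missing vertex — actually no, I should instead replace~$H$ by the component of~$G-N_G(H)$... ).

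The right argument for "at least one": assume~$V(P)\not\subseteq V(H)$ and let~$i$ be minimal with~$v_i\notin V(H)$. If~$i=\ell$ then~$v_\ell\notin V(H)$; since~$v_\ell$ has a neighbor in~$V(P)\cap V(H)$ (namely~$v_{\ell-1}$, which is in~$V(H)$ by minimality, or~$v_1$'s $T$-neighbor if~$\ell=1$) we get~$v_\ell\in N_G(H)$, i.e. \cref{lem:3way:v_ell}. So assume~$i<\ell$; then~$v_i\in V(P-v_\ell)$, $v_i\notin V(H)$, and~$v_i\notin F$ (as~$F\subseteq V(H)$), so~$v_i\in V(P-F-v_\ell)$ and~$v_i\in N_G(H)$ since~$v_{i-1}\in V(H)$ (or the $T$-neighbor of~$v_1$ is in~$V(H)$ when~$i=1$). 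I then claim~$v_\ell\in V(H)$ and that~$v_i$ is the \emph{only} vertex of~$P-F-v_\ell$ in~$N_G(H)$, which gives \cref{lem:3way:path-split}. This is where maximality of~$H$ is essential and where I expect the main work: if~$v_\ell\notin V(H)$, or if a second vertex~$v_j$ ($i<j<\ell$) of~$P$ lies in~$N_G(H)$, then the "tail" of~$P$ beyond~$v_i$ up to the first vertex of~$N_G(H)$ is a set of vertices of~$V(H)$ whose removal from~$H$ does not disconnect~$H$ nor create any new neighbor (all its neighbors are on~$P$), contradicting nothing by itself — rather, I should argue the reverse direction: the component~$H'$ of~$G - (N_G(H)\setminus\{v_j\}\cup\{v_i\}) $ containing~$t\in T$...

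Cleanest: among all max-weight $k$-secluded supertrees~$H$ with~$V(P)\not\subseteq V(H)$, exactly one vertex of~$P$ lies outside~$V(H)$ — because if two vertices of~$P$ were outside~$V(H)$ then the path segment strictly between them consists of vertices in~$V(H)$ all of whose $G$-neighbors lie on~$P$; these segment-vertices could be \emph{added} to~$H$ without creating a cycle or enlarging the neighborhood only if the segment's two endpoints inside~$V(H)$... this fails because adding them would merge pieces. Instead I delete: the component of~$G-N_G(H)$ through~$T$ is exactly~$H$; since~$v_1$'s $T$-neighbor is in~$H$, walking along~$P$ from~$v_1$, the vertices of~$P$ in~$H$ form a prefix~$v_1,\ldots,v_{i-1}$ (a contiguous sub-path from the~$T$ side), because once we hit a vertex of~$N_G(H)$ on~$P$, all later~$P$-vertices are separated from~$T$ within~$G-N_G(H)$ unless reattached via~$v_\ell$'s extra neighbors. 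So either~$P$'s $H$-vertices are a prefix ending before~$v_\ell$ (giving~$v_i\in N(H)$, and if~$v_\ell\notin V(H)$ then~$v_\ell$ is cut off so~$v_\ell\notin N(H)$ would be fine but then condition (2) wants~$v_\ell\in V(H)$ — so I must use maximality: if~$v_\ell\notin V(H)$, the whole suffix~$v_i,\ldots,v_\ell$ is disjoint from~$V(H)\cup N_G(H)$ except~$v_i$, and we could instead cut only at~$v_i$, getting a supertree of strictly larger weight, contradiction). This forces~$v_\ell\in V(H)$ and exactly one vertex~$v_i\in V(P-F-v_\ell)\cap N_G(H)$.

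\medskip
\noindent\textbf{Main obstacle.} The crux is the "at least one" direction under maximality: showing that if~$H$ is max-weight and~$V(P)\not\subseteq V(H)$ then precisely one~$P$-vertex is missing and it is not~$v_\ell$. The exchange argument — cutting only at the first missing vertex~$v_i$ rather than losing the whole suffix of~$P$, producing a heavier feasible solution and contradicting maximality — is the key idea and needs care to verify that the resulting graph is still a $k$-secluded tree (acyclicity and neighborhood size~$\le k$), using that~$P$ is an induced path all of whose internal vertices have degree~$2$ with neighbors on~$P$, and that~$v_\ell$'s only possible attachments outside~$P$ are to~$N_G(T)$ or via~$\deg>2$.
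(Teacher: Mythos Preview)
Your overall strategy---an exchange argument exploiting maximality of~$H$---is the same as the paper's, and your mutual-exclusivity discussion is fine (the paper leaves this implicit). However, the ``at least one'' direction is not cleanly executed and contains a genuine error along the way.

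First, the claim ``$Q \subseteq N_G(H)$'' is false: if $v_i, v_{i+1}, v_{i+2} \notin V(H)$ with $1 < i$ and $i+2 < \ell$, then $v_{i+1}$ has both neighbors outside $V(H)$ and hence $v_{i+1} \notin N_G(H)$. You rightly abandon this, but then the ``prefix'' picture you replace it with is also incomplete: when $\deg_G(v_\ell) > 2$, the set $V(H) \cap V(P)$ can be a prefix \emph{and} a nonempty suffix (reattached through $v_\ell$'s extra neighbors), so there can be two distinct boundary vertices $v_i, v_j \in N(H) \cap V(P)$ with $i < j$. Your write-up never clearly disposes of this case; the sentences beginning ``the `tail' of $P$ beyond $v_i$\ldots'' and ``the component $H'$ of $G - (N_G(H)\setminus\{v_j\}\cup\{v_i\})$\ldots'' do not describe a working exchange.

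The paper's organization avoids this muddle: it first proves $|N(H) \cap V(P)| = 1$ by taking the \emph{smallest} index $i$ and \emph{largest} index $j$ with $v_i, v_j \in N(H) \cap V(P)$ and, assuming $i < j$, setting $H' := G[V(H) \cup \{v_i,\ldots,v_{j-1}\}]$. Since $v_i,\ldots,v_{j-1}$ all have degree~$2$ and $v_j \notin V(H)$, the graph $H'$ is a tree, $N_G(H') \subseteq N_G(H) \setminus \{v_i\}$, and $w(H') > w(H)$---contradiction. Only \emph{after} establishing a unique $v_i \in N(H) \cap V(P)$ does the paper split on $i = \ell$ (Condition~1) versus $i < \ell$; in the latter case a second, simpler exchange $H \cup \{v_i,\ldots,v_{\ell-1}\}$ forces $v_\ell \in V(H)$, giving Condition~2. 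Note both exchanges \emph{add} vertices to $H$; your phrasing ``cut only at $v_i$'' suggests removing vertices, which is the wrong direction for contradicting maximality.
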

\begin{proof}
First note that such a vertex~$v$ exists since~$N_G(F) \neq \emptyset$ and~$G$ is connected, so~$N_G(T) \neq \emptyset$. Furthermore since the instance is almost leafless, \jjh{the path~$P$ is well defined.}\jjhr{I think we need almost leafless for well definedness of~$P$, the commented out sentence below is clear from the deifnition of $P$}
If there is no $k$-secluded supertree of~$F$, then there is nothing to show. So suppose~$H$ is a maximum-weight $k$-secluded supertree of~$F$. We have~$v \in V(P)$ is a neighbor of~$T \subseteq F \subseteq V(H)$, so either~$V(P) \subseteq V(H)$ or~$V(P)$ contains a vertex from~$N(H)$. In the first case \cref{lem:3way:path-incl} holds, in the second case we have~$|N(H) \cap V(P)| \geq 1$.
First suppose that~$|N(H) \cap V(P)| \geq 2$. Let~$i \in [\ell]$ be the smallest index such that~$v_i \in N(H) \cap V(P)$. Similarly let~$j \in [\ell]$ be the largest such index. \jjh{We show} that in this case we can contradict the fact that~$H$ is a maximum-weight $k$-secluded supertree of~$F$. \jjh{Observe that}~$H'= V(H) \cup \{v_i,\dots,v_{j-1}\}$ induces a tree since~$(v_i, \ldots, v_{j-1})$ forms a path of degree-2 vertices and the neighbor~$v_j$ of~$v_{j-1}$ is not in~$H$. \jjh{Furthermore~$H'$ has a strictly smaller neighborhood than $H$} and it has larger weight as vertices have positive weight. \jjh{Since~$F \subseteq V(H')$, this contradicts that~$H$ is a maximum-weight $k$-secluded supertree of~$F$.}\jjhr{reformulated sentence}

We conclude that~$|N(H) \cap V(P)| = 1$. Let~$i \in [\ell]$ be the unique index such that~$N(H) \cap V(P) = \{v_i\}$. Clearly~$v_i \notin F$. In the case that~$i = \ell$, then \cref{lem:3way:v_ell} holds. Otherwise if~$i < \ell$, the first condition of \cref{lem:3way:path-split} holds. In order to argue that the second condition also holds, suppose that~$v_\ell \notin V(H)$. Then~$H \cup \{v_i, \dots, v_{\ell-1}\}$ is a $k$-secluded supertree of~$F$ in~$G$ and it has larger weight than~$H$ as vertices have positive weight. This contradicts the fact that~$H$ has maximum weight, hence the second condition of \cref{lem:3way:path-split} holds as well.
\end{proof}

\hui{Armed with \cref{lem:combine,lem:combine-path,lem:3way-branching} we are now ready to prove correctness of the algorithm.}

\newcommand{\lemAlgcorrect}{
The algorithm described in \cref{sec:alg} is correct.
}
\begin{lemma}\label{lem:algcorrect}
\lemAlgcorrect
\end{lemma}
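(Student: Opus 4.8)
The plan is to verify correctness by induction on a suitable progress measure, checking each step of the algorithm in turn. First I would establish the measure: I expect to argue termination via lexicographic decrease in, say, $(k, |N_G(T)|$-deficit, $|V(G)|)$ or something similar — every recursive call either decreases $k$, or (in \cref{alg:3way}(c)) keeps $k$ but strictly enlarges $T$ while $|N_G(T)|$ stays bounded by $k(k+1)$ because \cref{alg:greedy} did not fire, and the preprocessing (contracting degree-1 vertices, absorbing neighbors of $T$ lying in $F$) strictly shrinks the graph. The key point to nail down is that after preprocessing the instance is almost leafless, $G$ is connected, $G[F]$ is acyclic, $N_G(F) \neq \emptyset$, and $k > 0$, so that \cref{lem:greedy:naive,lem:3way-branching} are applicable; and that \cref{lem:safe:contract} lets us push the correctness of the contracted instance back to the original.

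Then I would handle the base cases and the early returns: if $G[F]$ has a cycle there is no secluded supertree of $F$ at all, so $\emptyset$ is correct; removing components not meeting $F$ and returning $\emptyset$ when more than one remains is correct because a tree is connected; if $N_G(F) = \emptyset$ then $G[F] = G$ (after the component cleanup) is itself the unique $0$-secluded supertree and $\{(r,\emptyset)\}$ describes exactly it; if $N_G(F) \neq \emptyset$ but $k = 0$ then no $0$-secluded supertree of $F$ exists, so $\emptyset$ is correct. The merging of $F$-vertices of $N_G(T)$ into $T$ is correct because every supertree of $F$ must contain all of $F$, hence all such vertices, hence (by connectivity of $G[T]$ plus the new vertex) the enlarged set still induces a connected subgraph and has the same family of secluded supertrees.

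For the recursive steps I would invoke the structural lemmas. \cref{alg:2nbrs}: a vertex $v$ with two neighbors in $T$ cannot lie in any acyclic supertree of $T$, so it lies in $N_G(H)$ for every such $H$; hence by \cref{obs:secluded2}, $\{H \in \secl{G}{k}{F} \mid v \in N_G(H)\} = \secl{G}{k}{F}$, and \cref{lem:combine} (applied to the inductively correct $\mathfrak{X}'$ for $(G-v,k-1,F,T,w)$) gives exactly $\maxset_w(\secl{G}{k}{F})$, with non-redundancy preserved since each description gains the singleton $\{v\}$ not present before. \cref{alg:greedy} is analogous, using \cref{lem:greedy:naive} to supply a $v$ with $v \in N_G(H)$ for all $H$, and the same application of \cref{lem:combine}. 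For \cref{alg:3way}, I would use \cref{lem:3way-branching}: for every maximum-weight $k$-secluded supertree $H$ of $F$, exactly one of the three conditions holds, so $\maxset_w(\secl{G}{k}{F})$ is the union of the three corresponding sets restricted to global-maximum weight. Branch (a) is $\{H : v_\ell \in N_G(H)\}$, handled by \cref{lem:combine}; branch (b) is $\{H : |N_G(H)\cap V(P-F-v_\ell)| = 1, v_\ell \in V(H)\}$ — here I would argue that since $P$ is a path of degree-2 vertices, which vertex of $M$ (the minimum-weight vertices of $P-F-v_\ell$) is chosen does not affect weight, and apply \cref{lem:combine-path} together with \cref{obs:descr-merge} to replace the singletons $\{p\}$ for $p \in M$ by the single set $M$; branch (c) is $\{H : V(P) \subseteq V(H)\}$, which by definition equals $\secl{G}{k}{F \cup V(P)}$ intersected with the condition $T \cup V(P)$ induces a connected subgraph (it does, as $P$ starts at $v \in N_G(T)$), so the recursive call on $(G,k,F\cup V(P),T\cup V(P),w)$ is correct by induction. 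Finally, taking $\maxset_w$ across the three branches amounts to keeping only the $\kX'_i$ achieving $\max\{w_1,w_2,w_3\}$, which is exactly what the algorithm returns; non-redundancy across branches follows because the three families are pairwise disjoint by \cref{lem:3way-branching} (condition "exactly one holds"), and within each branch from the inductive hypothesis plus the fact that the added set ($\{v_\ell\}$, $M$, or nothing) is consistent.

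The main obstacle I expect is the bookkeeping in \cref{alg:3way}(b): getting the interplay between \cref{lem:combine-path} (which is stated for a path with both endpoints in $F$ and all internal vertices of degree $2$, producing singleton descriptions $\{p\}$ for each admissible $p$) and \cref{obs:descr-merge} (which lets us coalesce those singletons into the set $M$) exactly right — in particular checking that the recursive instance $(G-V(P-v_\ell), (F\setminus V(P))\cup\{v_\ell\}, T, w)$ really does match the hypotheses of \cref{lem:combine-path} after identifying the relevant path (the path is $P - v_\ell$ with endpoints the $T$-neighbor $v_1$'s predecessor — wait, rather $v_1$ itself is adjacent to $T$, so after adding $v_\ell$ to $F$ and recursing, the relevant path sits between a vertex of $T$ and $v_\ell$), and that the weight-irrelevance of the choice within $M$ is what makes the single set $M$ legitimate rather than needing $|M|$ separate descriptions. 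I would also take care that the "exactly one" dichotomy of \cref{lem:3way-branching} only holds for \emph{maximum-weight} supertrees, so the union-of-three-branches identity is for $\maxset_w(\secl{G}{k}{F})$, not for $\secl{G}{k}{F}$ itself — which is fine since that is precisely the set the algorithm is required to describe.
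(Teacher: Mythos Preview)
Your plan is essentially the paper's proof: induction over instances, handle the preprocessing and early returns directly, dispatch \cref{alg:2nbrs} and \cref{alg:greedy} via \cref{lem:combine}, and split \cref{alg:3way} into three branches using \cref{lem:3way-branching}, \cref{lem:combine}, and \cref{lem:combine-path}. Two points where the paper is cleaner than your sketch:

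\textbf{Induction measure.} You propose a lexicographic measure in~$(k, \text{deficit of } |N_G(T)|, |V(G)|)$, mixing the termination argument with the correctness induction. The paper instead inducts simply on~$|V(G)\setminus F|$: this quantity never increases during preprocessing (vertices are only removed from~$F$ when they are removed from~$G$), and it strictly decreases in \emph{every} recursive call, including \cref{alg:3way}(c) since~$v_1 \in N_G(T)$ lies outside~$F$ after the absorption step. This makes the induction hypothesis immediately available without any reasoning about~$|N_G(T)|$; the bound~$|N_G(T)|\le k(k+1)$ is used only in the separate running-time analysis.

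\textbf{Non-redundancy in \cref{alg:3way}.} You invoke \cref{lem:3way-branching} to argue the three families~$\T_G(\kX'_i)$ are pairwise disjoint. That lemma applies only to maximum-weight trees, so you need the additional observation that after the final filter by~$w_i=\max\{w_1,w_2,w_3\}$ every described tree is indeed of maximum weight; and you still need to check that each tree in~$\T_G(\kX'_i)$ actually satisfies condition~$(i)$ (for~$i=2$ this uses that the sets in~$\kX_2$ are subsets of~$V(G-V(P-v_\ell))$, so exactly one vertex of~$P-v_\ell$, namely the one from~$M$, is in the neighborhood). The paper avoids this indirection by a direct case analysis: it assumes two descriptions in~$\kX'$ describe the same tree, determines which~$\kX'_i$ each must come from by looking at whether~$v_\ell$ or a vertex of~$M$ appears in a set of the description, and derives a contradiction to the inductive non-redundancy of the corresponding~$\kX_i$.
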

\begin{proof}
Let~$I = (G,k,F,T,w)$ be an \ELSS{} instance. We prove correctness by induction on~$|V(G) \setminus F|$. Assume the algorithm is correct for any input~$(\hat{G},\hat{k},\hat{F},\hat{T},\hat{w})$ with~$|V(\hat{G}) \setminus \hat{F}| < |V(G) \setminus F|$. 

\subsubsection*{Before \cref{alg:2nbrs}}
We first prove correctness when the algorithm terminates before \cref{alg:2nbrs}, which includes the base case of the induction. 
Note that if~$G[F]$ contains a cycle, then no induced subgraph~$H$ of~$G$ with~$F \subseteq V(H)$ can be acyclic. Therefore the set of maximum-weight $k$-secluded trees containing~$F$ is the empty set, so we correctly return~$\emptyset$.
Otherwise~$G[F]$ is acyclic. Clearly any connected component of~$G$ that has no vertices of~$F$ can be removed. If there are two connected components of~$G$ containing vertices of~$F$, then no induced subgraph of~$G$ containing all of~$F$ can be connected, again we correctly return the empty set. In the remainder we have that~$G$ is connected.

By iteratively applying \cref{lem:safe:contract} we conclude that a solution to the instance obtained after iteratively contracting (most) degree-1 vertices is also a solution to the original instance. Hence we can proceed to solve the new instance, which we know is almost leafless. In addition, observe that the contraction of degree-1 vertices maintains the property that~$G$ is connected and~$G[F]$ is acyclic.

After exhaustively adding vertices~$v \in N_G(T) \cap F$ to~$T$ we have that~$G[T]$ is a connected component of~$G[F]$. 
In the case that~$N_G(F) = \emptyset$, then since~$G$ is connected it follows that~$F = T = V(G)$ and therefore~$T$ is the only maximum-weight $k$-secluded tree. For any~$r \in V(G)$, the description~$(r, \emptyset)$ describes this $k$-secluded tree, so we return~$\{(r,\emptyset)\}$. In the remainder we have~$N_G(F) \neq \emptyset$.

Since~$N_G(F) \neq \emptyset$ and~$G$ is almost leafless, we argue that there is no $0$-secluded supertree of~$F$. Suppose~$G$ contains a $0$-secluded supertree~$H$ of~$F$, so~$|N_G(H)| = 0$ and since~$H \supseteq F$ is non-empty and~$G$ is connected we must have~$H=G$, hence~$G$ is a tree with at least two vertices (since~$F$ and~$N_G(F)$ are both non-empty) so~$G$ contains at least two vertices of degree-1, contradicting that~$G$ is almost leafless. So there is no $k$-secluded supertree of~$F$ in~$G$ and the algorithm correctly returns~$\emptyset$ if~$k = 0$.

Observe that the value~$|V(G) \setminus F|$ cannot have increased since the start of the algorithm since we never add vertices to~$G$ and any time we remove a vertex from~$F$ it is also removed from~$G$. Hence we can still assume in the remainder of the proof that the algorithm is correct for any input~$(\hat{G},\hat{k},\hat{F},\hat{T},\hat{w})$ with~$|V(\hat{G}) \setminus \hat{F}| < |V(G) \setminus F|$.
To conclude this part of the proof, we have established that if the algorithm terminates before reaching \cref{alg:2nbrs}, then its output is correct. On the other hand, if the algorithm continues we can make use of the following properties of the instance just before reaching \cref{alg:2nbrs}:
\begin{property} \label{prop:before-step1}
If the algorithm does not terminate before reaching \cref{alg:2nbrs} then
 \begin{enumerate*}[label=(\roman*)]
  \item the \ELSS{} instance~$(G,k,F,T,w)$ is almost leafless,
  \item $G[F]$ is acyclic,
  \item $G[T]$ is a connected component of~$G[F]$,
  \item $G$ is connected,
  \item $k > 0$, and
  \item $N_G(F) \neq \emptyset$.
 \end{enumerate*}
\end{property}

\subsubsection*{\Cref{alg:2nbrs}}
Before arguing that the return value in \cref{alg:2nbrs} is correct, we observe the following.

\begin{claim} \label{claim:2nbrs}
If~$H$ is an induced subtree of~$G$ that contains~$T$ and~$v \in N_G(T)$ has at least two neighbors in~$T$, then~$v \in N_G(H)$.
\end{claim}
\begin{claimproof}
    Suppose~$v \not\in N_G(H)$, then since~$v \in N_G(T)$ and~$T \subseteq V(H)$ we have that~$v \in V(H)$. But then since~$T$ is connected, subgraph~$H$ contains a cycle. This contradicts that~$H$ is a tree and confirms that~$v \in N_G(H)$.
\end{claimproof}

Now consider the case that in \cref{alg:2nbrs} we find a vertex~$v \in N_G(T)$ with two neighbors in~$T$, and let~$\kX'$ be the set of descriptions as obtained by the algorithm through recursively solving the instance~$(G-v,k-1,F,T,w)$. Since~$|V(G-v)\setminus F| < |V(G) \setminus F|$ (as~$v \not\in F$) we know by induction that~$\T_{G-v}(\kX')$ is the set of all maximum-weight $(k-1)$-secluded supertrees of~$F$ \bmp{in~$G-v$}. 
\bmp{Any~$H \in \T_{G-v}(\kX')$ is} an induced subtree of~$G$ with~$T \subseteq V(H)$, so by \cref{claim:2nbrs} we have~$v \in N_G(H)$ for all~$H \in \T_{G-v}(\kX')$. We can now apply \cref{lem:combine} to conclude that~$\T_G(\{(r,\X \cup \{\{v\}\}) \mid (r,\X) \in \kX'\})$ is the set of all maximum-weight $k$-secluded supertrees~$H$ of~$F$ in~$G$ for which~$v \in N_G(H)$. Again by \cref{claim:2nbrs} we have that~$v \in N_G(H)$ for all such $k$-secluded supertrees of~$F$, hence~$\T_G(\{(r,\X \cup \{\{v\}\}) \mid (r,\X) \in \kX'\})$ is the set of all maximum-weight $k$-secluded supertrees of~$F$ in~$G$. We argue non-redundancy of the output. Suppose that two descriptions~$(r,\X \cup \{\{v\}\})$ and~$(r',\X' \cup \{\{v\}\})$ describe the same supertree~$H$ of~$F$ in~$G$. Note that then~$(r,\X)$ and~$(r',\X)$ describe the same supertree~$H$ of~$F$ in~$G-v$, which contradicts the induction hypothesis that the output of the recursive call was correct and therefore non-redundant.



Concluding this part of the proof, we showed that if the algorithm terminates during \cref{alg:2nbrs}, then its output is correct. On the other hand, if the algorithm continues after \cref{alg:2nbrs} we can make use of the following in addition to \cref{prop:before-step1}.
\begin{property} \label{prop:step1}
If the algorithm does not terminate before reaching \cref{alg:greedy} then no vertex~$v \in N_G(T)$ has two neighbors in~$T$.
\end{property}

\subsubsection{\cref{alg:greedy}}
In \cref{alg:greedy} we use \cref{lem:greedy:naive} if~$|N_G(T)| > k(k+1)$. The preconditions of the lemma are satisfied \bmp{since}~$k > 0$ and the instance is almost leafless by \cref{prop:before-step1}. If it concludes that~$G$ does not contain a $k$-secluded supertree of~$F$, then the algorithm correctly outputs~$\emptyset$. Otherwise it finds a vertex~$v \in V(G)\setminus F$ such that any $k$-secluded supertree~$H$ of~$F$ in~$G$ satisfies~$v \in N_G(H)$. 
We argue that the algorithm's output is correct. Let~$\kX'$ be the set of descriptions as obtained \bmp{through} recursively solving~$(G-v,k-1,F,T,w)$. Since~$v \not\in F$ we have~$|(V(G-v) \setminus F| < |V(G) \setminus F|$, so by induction we have that~$\T_{G-v}(\kX')$ is the set of all maximum-weight $(k-1)$-secluded supertrees of~$F$ in~$G-v$. 
Furthermore by \cref{obs:secluded} for any~$H \in \T_{G-v}(\kX') = \secl{G-v}{k-1}{F}$ we have~$H \in \secl{G}{k}{F}$, and therefore~$v \in N_G(H)$. It follows that \cref{lem:combine} applies to~$\kX'$ so we can conclude that~$\T_G(\{(r, \X \cup \{\{v\}\}) \mid (r,\X) \in \kX'\})$ is the set of maximum-weight $k$-secluded supertrees~$H$ of~$F$ in~$G$ for which~$v \in N_G(H)$. Since we know there are no $k$-secluded supertrees~$H$ of~$F$ in~$G$ for which~$v \not\in N_G(H)$, it follows that~$\T_G(\{(r, \X \cup \{\{v\}\}) \mid (r,\X) \in \kX\})$ is the set of maximum-weight $k$-secluded supertrees of~$F$ in~$G$ as required. 
Non-redundancy of the output follows as in \cref{alg:2nbrs}.



To summarize the progress so far, we have shown that if the algorithm terminates before it reaches \cref{alg:3way}, then its output is correct. Alternatively, if we proceed to \cref{alg:3way} we can make use of the following property, in addition to \cref{prop:before-step1,prop:step1}\bmp{, which we will use later in the running time analysis}.
\begin{property} \label{prop:step3}
 If the algorithm does not terminate before reaching \cref{alg:3way} then~$|N_G(T)| \leq k(k+1)$.
\end{property}

\subsubsection{\Cref{alg:3way}}

Fix some~$v \in N_G(T)$, which exists as~$N_G(T) \neq \emptyset$ by \cref{prop:before-step1}. Let~$P = (v = v_1, \dots, v_\ell)$ be a path as described in \cref{lem:3way-branching}. By~\cref{lem:3way-branching} we can partition the set~$\maxset_w (\secl{G}{k}{F})$ of maximum-weight $k$-secluded supertrees of~$F$ in~$G$ into the following three sets:
\begin{itemize}
    \item $\mathcal{T}_1 = \{H \in \maxset_w (\secl{G}{k}{F}) \mid v_\ell \in N_G(H)\}$,
    \item $\mathcal{T}_2 = \{H \in \maxset_w (\secl{G}{k}{F}) \mid |N(H) \cap V(P - F - v_\ell)| = 1 \text{ and } v_\ell \in V(H)\}$, 
    \item $\mathcal{T}_3 = \{H \in \maxset_w (\secl{G}{k}{F}) \mid V(P) \subseteq V(H)\}$, 
\end{itemize}

Consider the sets~$\kX_1$,~$\kX_2$, and~$\kX_3$ of descriptions as obtained through recursion in \cref{alg:3way} of the algorithm. By induction we have the following:
\begin{itemize}
    \item $\T_{G-v_\ell}(\kX_1) = \maxset_w (\secl{G-v_\ell}{k-1}{F})$, since~$|V(G-v_\ell) \setminus F| < |V(G) \setminus F|$,
    \item $\T_{G-V(P - v_\ell)}(\kX_2) = \maxset_w (\secl{G-V(P - v_\ell)}{k-1}{(F \setminus V(P)) \cup \{v_\ell\}})$ since
    \begin{align*} & |V(G-V(P - v_\ell)) \setminus ((F \setminus V(P)) \cup \{v_\ell\})| = \\ & |V(G-V(P - v_\ell)) \setminus (F \cup \{v_\ell\})| <  |V(G) \setminus F|, \mbox{ and} \end{align*}
    \item $\T_{G}(\kX_3) = \maxset_w (\secl{G}{k}{F \cup V(P)})$ since~$|V(G) \setminus (F \cup V(P))| < |V(G) \setminus F|$.
\end{itemize}

Let~$\kX'_1$,~$\kX'_2$, and~$\kX'_3$ be the sets of descriptions as \bmp{computed} in \cref{alg:3way} of the algorithm. 

\begin{claim}\label{claim:branchdescriptions}
The sets~$\kX'_1$,~$\kX'_2$, and~$\kX'_3$ consist of valid descriptions for~$G$.
\end{claim}
\begin{claimproof}
\hui{To argue that~$\kX'_1 = \{(r,\X \cup \{\{v_\ell\}\}) \mid (r,\X) \in \kX_1\}$ consists of valid descriptions for~$G$ we show for an arbitrary description~$(r,\X) \in \kX_1$ for~$G-v_\ell$ that~$(r,\X \cup \{\{v_\ell\}\})$ is a valid description for~$G$.}
Clearly~$r \in V(G)$ and~$\X \cup \{\{v_\ell\}\}$ consists of pairwise disjoint subsets of~$V(G-r)$. Consider any set~$S'$ consisting of exactly one vertex from each set~$X \in \X \cup \{\{v_\ell\}\}$. Clearly~$v_\ell \in S'$. Let~$S = S' \setminus \{v_\ell\}$. Since~$(r, \X)$ is a description \hui{for~$G-v_\ell$}, the connected component~$H$ of~$G-v_\ell - S$ containing~$r$ is acyclic and satisfies~$N_{G-v_\ell}(H) = S$. Note that the connected component of~$G-S'$ containing~$r$ is identical to~$H$ and is therefore also acyclic. All that is left to argue is that~$v_\ell \in N_G(H)$. If~$\ell=1$, then~$v_\ell \in N_G(F)$ and the claim follows as~$H$ is a supertree of~$F$.
Otherwise \hui{note that since~$(r,\X) \in \kX_1$ we have that~$H \in \maxset_w(\secl{G-v_\ell}{k-1}{F})$, i.e.,~$H$ is of maximum weight. \bmp{Since all vertices of~$V(P - v_\ell)$ have degree~$2$ in~$G$, with~$v_{\ell-1}$ adjacent to~$v_\ell$, the graph}~$G[V(H) \cup V(P-v_\ell)]$ is acyclic and~$|N_G(H)| = |N_G(V(H) \cup V(P-v_\ell))|$. It follows that~$V(P-v_\ell) \subseteq V(H)$ since otherwise the secluded tree~$G[V(H) \cup V(P - v_\ell)]$ would have larger weight than~$H$. Hence~$v_{\ell-1} \in V(H)$ so~$v_\ell \in N_G(H)$.} 

Next \hui{we argue~$\kX'_2$ consists of valid descriptions for~$G$. Recall that~$\kX'_2 = \{(r, \X \cup \{M\}) \mid (r,\X) \in \kX_2\}$ where~$M$ is the set of minimum weight vertices in~$P-F-v_\ell$, so it suffices to show for an arbitrary description~$(r, \X) \in \kX_2$ for~$G-V(P-v_\ell)$ that~$(r, \X \cup \{M\}) \in \kX'_2$ is a valid description for~$G$.} Again it is easy to see that~$r \in V(G)$ and~$\X \cup \{M\}$ consists of pairwise disjoint subsets of~$V(G-r)$. Consider any set~$S'$ consisting of exactly one vertex from each set~$X \in \X \cup \{M\}$. Let~$S = S' \setminus M$ and~$\{m\} = M \cap S'$. Since~$(r, \X)$ is a description \hui{for~$G-V(P-v_\ell)$}, the connected component~$H$ of~$G-V(P - v_\ell) - S$ containing~$r$ is acyclic and satisfies~$N_{G-V(P - v_\ell)}(H) = S$. Note that the connected component~$H'$ of~$G-S'$ containing~$r$ is a supergraph of~$H$ and so~$S \subseteq N_G(H')$. All that is left to argue is that~$H'$ is acyclic and~$m \in N_G(H')$. Let~$u$ be the vertex in~$T$ that is adjacent to~$v_1$. Note that this vertex is uniquely defined since no vertex in~$N_G(T)$ has two neighbors in~$T$. Since~$H$ is a supertree of~$(F \setminus V(P)) \cup \{v_\ell\}$ and~$u, v_\ell \in F$, it follows that~$P - v_\ell$ is a path between two vertices in~$H$\bmp{, of which~$S'$ contains exactly one vertex chosen from~$M$. Consequently, the component~$H'$ of~$G-S$ } satisfies~$V(H') = V(H) \cup V(P - v_\ell - m)$,~$H'$ is acyclic, and~$m \in N_G(H')$. 

Finally since~$\kX_3$ is a set of descriptions for~$G$ and~$\kX'_3 = \kX_3$, the claim holds for~$\kX'_3$.
\end{claimproof}

Before we proceed to show that the output of the algorithm is correct, we prove two claims about intermediate results obtained by modifying the output of a recursive call.

\begin{claim} \label{claim:3way:X1}
 $\T_{G}(\kX'_1) = \maxset_w\{H \in \secl{G}{k}{F} \mid v_\ell \in N_G(H)\}$
\end{claim}
\begin{claimproof}
 Recall~$\kX'_1$ is defined as~$\{(r,\mathcal{X} \cup \{\{v_\ell\}\}) \mid (r,\mathcal{X}) \in \mathfrak{X}_1\}$. 
 We know that~$\T_{G-v_\ell}(\kX_1) = \maxset_w(\secl{G-v_\ell}{k-1}{F})$. In order to apply \cref{lem:combine} we prove that~$v_\ell \in N_G(H)$ for all~$H \in \T_{G-v_\ell}(\kX_1)$. Let~$H \in \T_{G-v_\ell}(\kX_1)$ be arbitrary. If~$\ell = 1$, then as~$v_\ell = v \in N_G(T)$ and~$T \subseteq H$ we get~$v_\ell \in N_G(H)$. Otherwise if~$\ell > 1$, suppose for the sake of contradiction that~$v_{\ell-1} \notin V(H)$. Then some vertex~$u \in V(P-F-v_\ell)$ must be contained in~$N_{G-v_\ell}(H)$. But then observe that~$H \cup V(P-v_\ell)$ acyclic and has strictly larger weight than~$H$, while~$|N_{G-v_\ell}(H \cup V(P-v_\ell))| < |N_{G-v_\ell}(H)|$. This contradicts the choice of~$H$. It follows that~$v_{\ell-1} \in V(H)$ and therefore~$v_\ell \in N_G(H)$.
 We can now apply \cref{lem:combine} to obtain that~$\T_{G}(\kX'_1) = \maxset_w\{H \in \secl{G}{k}{F} \mid v_\ell \in N_G(H)\}$. 
\end{claimproof}

\begin{claim} \label{claim:3way:X2}
 If~$m \in V(P-F-v_\ell)$ then~$\T_G(\{(r,\X \cup \{\{m\}\}) \mid (r,\X) \in \kX_2\}) = \maxset_w\{\hat{H} \in \secl{G}{k}{F \cup \{v_\ell\}} \mid m \in N_G(\hat{H})\}$.
\end{claim}
\begin{claimproof}
 We show \cref{lem:combine-path} applies to the instance~$(G,k,F \cup \{v_\ell\}, T, w)$.
 Recall that by induction~$\T_{G-V(P-v_\ell)}(\kX_2) = \maxset_w(\secl{G-V(P-v_\ell)}{k-1}{(F \setminus V(P)) \cup \{v_\ell\}}) =  \maxset_w(\secl{G-V(P-v_\ell)}{k-1}{(F \cup \{v_\ell\}) \setminus V(P - v_\ell)})$.
 Recall also that~$v_1 = v \in N_G(T)$ has exactly one neighbor in~$T \subseteq F$ by \cref{prop:step1}, let~$v'$ be this vertex. Observe that~$P-v_\ell$ is a path in~$G$ with~$\degree_G(p) = 2$ for all~$p \in V(P-v_\ell)$ and~$N_G(P-v_\ell) = \{v', v_\ell\}$ with~$v',v_\ell \in F \cup \{v_\ell\}$.
 Then since~$m \in V(P-F-v_\ell) = V(P-v_\ell) \setminus F$ we can apply \cref{lem:combine-path} to obtain~$\T_G(\{(r,\X \cup \{\{m\}\}) \mid (r,\X) \in \kX_2\}) = \maxset_w\{\hat{H} \in \secl{G}{k}{F \cup \{v_\ell\}} \mid m \in N_G(\hat{H})\}$.
\end{claimproof}

We now show that all maximum-weight $k$-secluded supertrees of~$F$ in~$G$ are described by some description in our output. More formally, we show that~$\maxset_w(\secl{G}{k}{F}) \subseteq \T_G(\kX'_1 \cup \kX'_2 \cup \kX'_3) \subseteq \secl{G}{k}{F}$. 
To that end we first show that~$\T_i \subseteq \T_G(\kX'_i) \subseteq \secl{G}{k}{F}$ for each~$i \in [3]$ in \cref{claim:3way:T1,claim:3way:T2,claim:3way:T3}.

\begin{claim} \label{claim:3way:T1}
 $\T_1 \subseteq \T_G(\kX'_1) \subseteq \secl{G}{k}{F}$
\end{claim}
\begin{claimproof}
 It follows from \cref{claim:3way:X1} that~$\T_G(\kX'_1) \subseteq \secl{G}{k}{F}$ so it remains to show that~$\T_1 \subseteq \T_{G}(\kX'_1)$.
 Let~$H \in \T_1$ be arbitrary. By definition of~$\T_1$ we have~$H \in \maxset_w(\secl{G}{k}{F}) \subseteq \secl{G}{k}{F}$ and~$v_\ell \in N_G(H)$. So then clearly~$H \in \maxset_w\{H' \in \secl{G}{k}{F} \mid v_\ell \in N_G(H')\} = \T_G(\kX'_1)$ (by \cref{claim:3way:X1}). Since~$H \in \T_1$ was arbitrary we conclude~$\T_1 \subseteq \T_G(\kX'_1)$ completing the proof.
\end{claimproof}

\begin{claim} \label{claim:3way:T2}
 $\T_2 \subseteq \T_G(\kX'_2) \subseteq \secl{G}{k}{F}$
\end{claim}
\begin{claimproof}
 Recall~$\kX'_2$ is defined as~$\{(r,\mathcal{X} \cup \{M\}) \mid (r,\mathcal{X}) \in \mathfrak{X}_2\}$, where~$M$ is the set of minimum weight vertices in~$P - F - v_\ell$.
 For any~$u \in V(P-F-v_\ell)$ we define~$\kX_2^u := \{(r, \X \cup \{\{u\}\}) \mid (r,\X) \in \kX_2\}$.
 By repeated application of \cref{obs:descr-merge} we have that~$\T_G(\bigcup_{u \in M} \kX_2^u) = \T_G(\{(r,\X \cup \{M\}) \mid (r,\X) \in \kX_2\})$. 
 \jjh{Observe that~$\kX_2^u$ is a valid set of descriptions for~$G$ for each~$u \in M$.}
 By \cref{obs:descr-union} and definition of~$\kX'_2$ we have~$\bigcup_{u \in M} \T_G(\kX_2^u) = \T_G(\bigcup_{u \in M} \kX_2^u) = \T_G(\kX'_2)$.
 We will prove~$\T_2 \subseteq \bigcup_{u \in M} \T_G(\kX_2^u) \subseteq \secl{G}{k}{F}$.
 
 We show~$\T_2 \subseteq \bigcup_{u \in M} \T_G(\kX_2^u)$. Let~$H \in \T_2$ be arbitrary. By definition of~$\T_2$ we have~$H \in \maxset_w(\secl{G}{k}{F})$,~$|N_G(H) \cap V(P-F-v_\ell)| = 1$, and~$v_\ell \in V(H)$. Then also~$H \in \maxset_w(\secl{G}{k}{F \cup \{v_\ell\}})$.
 Let~$m$ be such that~$N_G(H) \cap V(P-F-v_\ell) = \{m\}$.
 Since~$m \in V(P-F-v_\ell)$, by \cref{claim:3way:X2} we have that~$\T_G(\kX^m_2) = \maxset_w\{\hat{H} \in \secl{G}{k}{F \cup \{v_\ell\}} \mid m \in N_G(\hat{H})\}$.
 Since~$m \in N_G(H)$ and~$H \in \maxset_w(\secl{G}{k}{F \cup \{v_\ell\}})$ clearly~$H \in \maxset_w\{\hat{H} \in \secl{G}{k}{F \cup \{v_\ell\}} \mid m \in N_G(\hat{H})\} = \T_G(\kX_2^m)$.
 It follows that~$H \in \bigcup_{u \in M} \T_G(\kX_2^u)$. Since~$H$ was arbitrary we conclude that~$\T_2 \subseteq \bigcup_{u \in M} \T_G(\kX_2^u)$.
 
 It remains to show that~$\bigcup_{u \in M} \T_G(\kX_2^u) \subseteq \secl{G}{k}{F}$.
 Let~$u \in M$ be arbitrary. We show~$\T_{G}(\kX_2^u) \subseteq \secl{G}{k}{F}$. It suffices to show that when considering a set~$S$ consisting of one element from each set of a description~$(r, \mathcal{X} \cup \{u\}) \in \kX_2^u$, the component~$H$ of~$G-S$ containing~$r$ is a $k$-secluded supertree of~$F$ in~$G$. This component~$H$ is a $k$-secluded tree in~$G$ since~$\kX_2^u$ is a valid set of descriptions for~$G$ of order at most~$k$. It remains to show that~$F \subseteq V(H)$. 
 By induction the component of~$(G - V(P - v_\ell)) - (S \setminus \{u\})$ contains all of~$(F \setminus V(P)) \cup \{v_\ell\}$.
 As the two neighbors of~$V(P - v_\ell)$ both belong to~$F \cup \{v_\ell\}$, the subpath of~$P$ before~$u$ and subpath after~$u$ are both reachable from~$r$ in~$G-S$.
 Hence~$V(P-u) \subseteq V(H)$, and since~$u \in M \subseteq V(P-F-v_\ell)$ we know~$u \not\in F$, so~$F \subseteq V(H)$.
 It follows that~$H$ is a $k$-secluded supertree of~$F$ in~$G$ so since~$H \in \T_{G}(\kX_2^u)$ was arbitrary we have~$\T_{G}(\kX_2^u) \subseteq \secl{G}{k}{F}$.
\end{claimproof}

\begin{claim} \label{claim:3way:T3}
 $\T_3 \subseteq \T_G(\kX'_3) \subseteq \secl{G}{k}{F}$
\end{claim}
\begin{claimproof}
 Recall~$\kX'_3$ is defined to be equal to~$\kX_3$, so we show~$\T_3 \subseteq \T_G(\kX_3) \subseteq \secl{G}{k}{F}$.
 
 Let~$H \in \T_3$ be arbitrary. By definition of~$\T_3$ we have~$H \in \maxset_w(\secl{G}{k}{F}) \subseteq \secl{G}{k}{F}$ and~$V(P) \subseteq V(H)$. So clearly~$H \in \secl{G}{k}{F \cup V(P)}$. To show that~$H \in \T_G(\kX_3) = \maxset_w(\secl{G}{k}{F \cup V(P)})$ we have to show for all~$H' \in \secl{G}{k}{F \cup V(P)}$ that~$w(H') \leq w(H)$. Suppose for contradiction there is an~$H' \in \secl{G}{k}{F \cup V(P)}$ such that~$w(H') > w(H)$. Clearly~$H' \in \secl{G}{k}{F}$ but then~$w(H') > w(H)$ contradicts that~$H \in \maxset_w(\secl{G}{k}{F}))$. So by contradiction it follows that~$H \in \maxset_w(\secl{G}{k}{F \cup V(P)}) = \T_G(\kX_3)$ and since~$H \in \T_3$ was arbitrary we conclude~$\T_3 \subseteq \T_G(\kX_3)$.
 
 Finally observe that~$\T_G(\kX_3) = \maxset_w(\secl{G}{k}{F \cup V(P)}) \subseteq \secl{G}{k}{F \cup V(P)} \subseteq \secl{G}{k}{F}$, completing the proof.
\end{claimproof}

It follows from \cref{claim:3way:T1,claim:3way:T2,claim:3way:T3} that~$\maxset_w(\secl{G}{k}{F}) = \T_1 \cup \T_2 \cup \T_3 \subseteq \T_G(\kX'_1) \cup \T_G(\kX'_2) \cup \T_G(\kX'_3) \subseteq \secl{G}{k}{F}$. So then we have
\begin{equation} \label{eqn:T123}
 \maxset_w(\secl{G}{k}{F}) = \maxset_w(\T_G(\kX'_1) \cup \T_G(\kX'_2) \cup \T_G(\kX'_3)) .
\end{equation}

The algorithm proceeds to calculate values~$w_1, w_2, w_3$ based on an arbitrary secluded tree in~$\T_G(\kX'_1)$,~$\T_G(\kX'_2)$, and~$\T_G(\kX'_3)$ respectively. We show that, for any~$i \in [3]$, all secluded trees in~$\T_G(\kX'_i)$ have weight~$w_i$.
\begin{itemize}
 \item For~$i=1$, we know from the proof of \cref{claim:3way:X1} that~$\T_{G}(\kX'_1) = \maxset_w\{H \in \secl{G}{k}{F} \mid v_\ell \in N_G(H)\}$, and clearly all trees in~$\maxset_w\{H \in \secl{G}{k}{F} \mid v_\ell \in N_G(H)\}$ have the same weight, which must be~$w_1$.
 
 \item For~$i=2$, consider two arbitrary secluded trees~$H_1,H_2 \in \T_G(\kX'_2) = \T_G(\{(r, \X \cup \{M\}) \mid (r, \X) \in \kX_2\})$ where~$M \subseteq V(P-F-v_\ell)$ is the set of minimum weight vertices in~$P-F-v_\ell$. We show that~$w(H_1) = w(H_2)$. Observe that~$N_G(H_1) \cap M = \{m_1\}$ for some~$m_1 \in V(P-v_\ell)\setminus F$, so~$H_1 \in \T_G(\{(r,\X \cup \{\{m_1\}\}) \mid (r,\X) \in \kX_2\})$. Hence~$H_1 \in \maxset_w\{\hat{H} \in \secl{G}{k}{F \cup \{v_\ell\}} \mid m_1 \in N_G(\hat{H})\}$ by \cref{claim:3way:X2} since~$m_1 \in V(P-v_\ell) \setminus F$. Similarly~$H_2 \in \maxset_w\{\hat{H} \in \secl{G}{k}{F \cup \{v_\ell\}} \mid m_2 \in N_G(\hat{H})\}$ for some~$m_2 \in V(P - v_\ell) \setminus F$. 
 Consider the graph~$H'_2 := G[V(H_2) \cup \{m_2\}] - m_1$ and observe that~$H'_2 \in \{\hat{H} \in \secl{G}{k}{F \cup \{v_\ell\}} \mid m_1 \in N_G(\hat{H})\}$. Additionally~$w(H_2) = w(H'_2) + w(m_2) - w(m_1) = w(H'_2)$ since~$m_1,m_2$ are of minimum weight in~$V(P-F-v_\ell)$, so~$H'_2 \in \maxset_w\{\hat{H} \in \secl{G}{k}{F \cup \{v_\ell\}} \mid m_1 \in N_G(\hat{H})\}$. It follows that~$w(H_2) = w(H'_2) = w(H_1)$.
 Since~$H_1,H_2 \in \T_G(\kX'_2)$ are arbitrary, we have that all secluded trees in~$\T_G(\kX'_2)$ have the same weight, which must be~$w_2$. \bmpr{I can follow the argument and agree that it does the job. I don't find it the most insightful, though, and wonder whether you see any advantage in the following alternative. Rather than taking two distinct trees being described and comparing their weight by transforming one in the other, an alternative would be to say that for any tree~$H_1$ being described in~$G - V(P - v_\ell)$ as the component of~$G - V(P - v_\ell) - S$ containing~$r$, the component of~$(G - S) - \{m\}$ containing~$r$ is exactly~$w(P) - w(m)$ heavier. Since~$m$ has minimum weight~$w_0$ and this goes for all trees being described, all trees being described in~$G$ are exactly~$w(P) - w_0$ heavier than the tree in~$G- V(P - v_\ell)$ from which their description was generated, and hence all trees being described still have the same weight. This argument could be backed up by the following statement, which we could add as an observation somewhere and invoke in multiple places (like in the proof of Claim 4): Let~$G$ be a graph and~$P$ be a path of degree-2 vertices in~$G$ with~$N_G (P)=\{a,b\}$. If~$S\subseteq V(G) \setminus V(P)$ and~$r \notin V(P) \cup S$ such that the component~$H$ of~$(G-V(P))-S$ containing~$r$ is acyclic and contains~$\{a,b\}$, then for any~$m \in V(P)$ the component of~$(G-S)-m$ containing~$r$ is acyclic, has vertex set~$H \cup (V(P) \setminus \{m\})$, and is therefore adjacent to~$m$.}
 
 \item For~$i=3$ we know that~$\T_G(\kX'_3) = \T_G(\kX_3) = \maxset_w(\secl{G}{k}{F \cup V(P)})$, and clearly all trees in~$\maxset_w(\secl{G}{k}{F \cup V(P)})$ have the same weight, which must be~$w_3$.
\end{itemize}

Clearly it follows that
\begin{align*}
 \T_G(\kX')
 &= \T_G\left (\bigcup_{\{i \in [3] \mid w_i=\max\{w_1,w_2,w_3\}\}} \kX'_i \right)
	&&\text{Definition of~$\kX'$ in \cref{alg:3way}} \\
 &= \maxset_w(\T_G(\kX'_1 \cup \kX'_2 \cup \kX'_3))
	&&\text{Any tree in~$\T_G(\kX'_i)$ has weight~$w_i$.}\\
 &= \maxset_w(\T_G(\kX'_1) \cup \T_G(\kX'_2) \cup \T_G(\kX'_3))
	&&\text{\cref{obs:descr-union}}\\
 &= \maxset_w(\secl{G}{k}{F}) .
	&&\text{\cref{eqn:T123}}
\end{align*}

Hence the algorithm correctly returns a set of descriptions~$\kX'$ for which~$\T_G(\kX') = \maxset_w(\secl{G}{k}{F})$. To complete the proof of correctness, we show that~$\kX'$ is non-redundant for~$G$.

\begin{claim}
~$\kX'$ is non-redundant for~$G$.
\end{claim}
\begin{claimproof}
 Suppose for contradiction that~$\kX'$ is redundant for~$G$, i.e., there is a $k$-secluded tree~$H$ in~$G$ such that~$H$ is described by two distinct descriptions~$(r_1,\X_1), \linebreak[1] (r_2,\X_2) \in \kX'$. Since~$\kX' \subseteq \kX'_1 \cup \kX'_2 \cup \kX'_3$ it suffices to consider the cases~$(r_1, \X_1) \in \kX'_1$,~$(r_1, \X_1) \in \kX'_2$, and~$(r_1, \X_1) \in \kX'_3$.
 
 \begin{itemize}
  \item If~$(r_1, \X_1) \in \kX'_1$, then~$\{v_\ell\} \in \X_1$ by definition of~$\kX'_1$. So then~$v_\ell \in N_G(H)$. Since~$H$ is described by~$(r_2, \X_2)$ there exists~$X \in \X_2$ such that~$v_\ell \in X$.
  If~$(r_2, \X_2) \in \kX'_3 = \kX_3$ then~$X$ must be part of a description in~$\kX_3$, so there must be a~$H' \in \T_{G}(\kX_3)$ with~$v_\ell \in N_G(H')$. However we know by induction that~$\T_{G}(\kX_3) = \maxset_w(\secl{G}{k}{F \cup V(P)})$, so~$v_\ell \in F \cup V(P) \subseteq V(H'')$ for all~$H'' \in \T_{G-V(P-v_\ell)}(\kX_3)$. It follows that~$(r_2, \X_2) \not\in \kX'_3$.
  If~$(r_2, \X_2) \in \kX'_2$ then we show this also leads to a contradiction. It follows from the definition of~$\kX'_2$ that~$X$ must be part of a description in~$\kX_2$, however we know by induction that~$\T_{G-V(P-v_\ell)}(\kX_2) = \maxset_w(\secl{G-V(P-v_\ell)}{k-1}{(F \setminus V(P)) \cup \{v_\ell\}})$, so~$v_\ell \in (F \setminus V(P)) \cup \{v_\ell\} \subseteq V(H'')$ for all~$H'' \in \T_{G-V(P-v_\ell)}(\kX_2)$. Hence~$(r_2, \X_2) \not\in \kX'_2$, and since also~$(r_2, \X_2) \not\in \kX'_3$ we have that~$(r_2, \X_2) \in \kX'_1$, meaning~$X = \{v_\ell\}$.
  
  Since~$(r_1,\X_1)$ and~$(r_2,\X_2)$ are distinct,~$\{v_\ell\} \in \X_1$, and~$\{v_\ell\} \in \X_2$ we have that~$(r_1, \X_1 \setminus \{\{v_\ell\}\})$ and~$(r_2, \X_2 \setminus \{\{v_\ell\}\})$ are distinct. Observe that~$(r_1, \X_1 \setminus \{\{v_\ell\}\}) \in \kX_1$ and~$(r_2, \X_2 \setminus \{\{v_\ell\}\}) \in \kX_1$. We know by induction that~$\kX_1$ is non-redundant for~$G-v$. However since~$H$ is a $k$-secluded tree in~$G$ with~$v_\ell \in N_G(H)$ we have that~$H$ is a $(k-1)$-secluded tree in~$G-v_\ell$, and clearly~$H$ is described by both~$(r_1, \X_1 \setminus \{\{v_\ell\}\})$ and~$(r_2, \X_2 \setminus \{\{v_\ell\}\})$, contradicting that~$\kX_1$ is non-redundant for~$G-v$.
  
  \item If~$(r_1, \X_1) \in \kX'_2$, then without loss of generality we can assume that~$(r_2, \X_2) \not\in \kX'_1$ since otherwise we can swap the roles of~$(r_1, \X_1)$ and~$(r_2, \X_2)$ and the previous case would apply. Suppose that~$(r_2, \X_2) \in \kX'_3 = \kX_3$, then~$H \in \T_G(\kX_3)$ and we have by induction that~$\T_G(\kX_3) = \maxset_w(\secl{G}{k}{F \cup V(P)})$ hence~$v_\ell \in F \cup V(P) \subseteq V(H)$. This contradicts~$v_\ell \in N_G(H)$ so~$(r_2, \X_2) \not\in \kX'_3$. This leaves as only option that~$(r_2, \X_2) \in \kX'_2$.
  
  Recall that~$\kX'_2 = \{(r, \X \cup \{M\}) \mid (r,\X) \in \kX_2\}$ where~$M \subseteq V(P-F-v_\ell)$, so~$(r_1, \X_1 \setminus \{M\}) \in \kX_2$ and~$(r_2, \X_2 \setminus \{M\}) \in \kX_2$. Since~$\kX_2$ is a set of valid descriptions for~$G-V(P-v_\ell)$ (by induction) we have that~$\X_1 \setminus \{M\}$ and~$\X_2 \setminus \{M\}$ contain only subsets of~$V(G) \setminus V(P-v_\ell)$, so~$N_G(H)\setminus M \subseteq V(G) \setminus V(P-v_\ell)$. Observe that since the path~$P-v_\ell$ is connected to~$H' := H - V(P-v_\ell)$ only via its endpoints, and~$H$ does not contain~$m \in V(P)$ we have that~$H'$ remains connected so~$H'$ is a $(k-1)$-secluded tree in~$G-V(P-v_\ell)$ described by~$(r_1,\X_1 \setminus \{M\})$ as well as~$(r_2,\X_2 \setminus \{M\})$. However this contradicts that~$\kX_2$ is a non-redundant set of descriptions for~$G-V(P-v_\ell)$ as given by induction.
  
  \item If~$(r_1, \X_1) \in \kX'_3 = \kX_3$, then without loss of generality we can assume~$(r_2, \X_2) \in \kX'_3 = \kX_3$ since otherwise we can swap the roles of~$(r_1, \X_1)$ and~$(r_2, \X_2)$ and one of the previous cases would apply. But then~$H$ is a $k$-secluded tree in~$G$ described by two distinct descriptions from~$\kX_3$, i.e.~$\kX_3$ is redundant for~$G$ contradicting the induction hypothesis. \qedhere
 \end{itemize}
\end{claimproof}

This concludes the proof of \cref{lem:algcorrect} and establishes correctness.
\end{proof}

\newcommand{\strace}{trace\xspace}
\newcommand{\straces}{traces\xspace}
\subsection{Runtime analysis} \label{sec:time}

If all recursive calls in the algorithm would decrease~$k$ then, since for~$k=0$ it does not make any further recursive calls, the maximum recursion depth is~$k$. However in \cref{alg:3way}(c) the recursive call does not decrease~$k$. In order to bound the recursion depth, we show the algorithm cannot make more than~$k(k+1)$ consecutive recursive calls in \cref{alg:3way}(c), that is, the recursion depth cannot increase by more than~$k(k+1)$ since the last time~$k$ decreased. We do this by showing in the following three lemmas that~$|N_G(T)|$ increases as consecutive recursive calls in \cref{alg:3way}(c) are made. Since the algorithm executes \cref{alg:greedy} if~$N_G(T) > k(k+1)$, this limits the number of consecutive recursive calls in \cref{alg:3way}(c).

\hui{The following lemma states that under certain conditions, the neighborhood of~$T$ does not decrease during the execution of a single recursive call.}

\begin{lemma} \label{lem:NT-nondecr}
 Let~$(G_0,k_0,F_0,T_0,w_0)$ be an \ELSS{} instance such that all leaves of \bmp{$G_0$} are contained in \bmp{$T_0$}. If the algorithm does not terminate before \cref{alg:3way}, then the instance~$(G',k',F',T',w')$ when executing \cref{alg:3way} satisfies~$|N_{G'}(T')| \geq \bmp{|N_{G_0}(T_0)|}$.

\end{lemma}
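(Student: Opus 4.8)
The plan is to follow the quantity $|N(T)|$ through the preprocessing the algorithm performs on $(G_0,k_0,F_0,T_0,w_0)$ before it reaches \cref{alg:3way}, and to show it never decreases, under the hypothesis that the algorithm does not halt earlier. Only three kinds of step modify $G$ or $T$ in this phase: (i) deleting a connected component disjoint from $F$, (ii) contracting a degree-1 vertex, and (iii) the loop that moves a vertex of $N_G(T)\cap F$ into $T$; after the last such step, neither $G$ nor $T$ changes before \cref{alg:3way} is reached.

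I would first record an invariant: throughout this phase, every leaf of the current graph lies in the current $T$. It holds for $(G_0,T_0)$ by assumption; a deleted component contains no vertex of $F\supseteq T$, hence no leaf, and it leaves the degrees of the surviving vertices unchanged; contracting a leaf $v$ — which is in $T$ by the invariant — into its neighbour $u$ puts $u$ into $T$ and only lowers $\deg_G(u)$; and enlarging $T$ preserves the invariant trivially. I would also note that since the algorithm did not return $\emptyset$ at the cycle test, $G_0[F_0]$, and hence $G_0[T_0]$, is acyclic; as $G_0[T_0]$ is connected it is a tree. Finally, $G$ stays connected, and component deletion and contraction keep $G[T]$ a tree.

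Next I would treat the three step types. Component deletion leaves $N(T)$ untouched, since $T$ and all its neighbours survive. For the contraction of a leaf $v\in T$ into $u$: if $u\in T$, the new tree is $T\setminus\{v\}$ and $N_{G-v}(T\setminus\{v\})=N_G(T)$ because $v$'s only neighbour $u$ is already in $T$; if $u\notin T$, connectivity of $G[T]$ forces $T=\{v\}$, so afterwards $T=\{u\}$ and $|N(T)|$ passes from $1$ to $\deg_G(u)-1\ge 1$, the inequality holding because — as the algorithm reaches \cref{alg:3way} — the graph has at least three vertices at that moment (otherwise one more contraction would shrink $G$ to a single vertex and the algorithm would return at the $N_G(F)=\emptyset$ test), so $\deg_G(u)\ge 2$.

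The core is the loop adding $v\in N_G(T)\cap F$ to $T$. By the invariant $v$ is not a leaf, so $\deg_G(v)\ge 2$. Since $G'[T']$ at \cref{alg:3way} is acyclic (a component of the acyclic $G'[F']$, by \cref{prop:before-step1}) while $G[T]$ is a tree on entering the loop and stays connected, no vertex absorbed by the loop had two or more neighbours in the then-current $T$ — the first such would create a cycle surviving to $G'[T']$ — so $v$ has exactly one neighbour in $T$, and hence a neighbour outside $T$. I would then argue that $v$ has a neighbour $z$ lying neither in $T$ nor in $N_G(T)$. Otherwise pick a neighbour $x$ of $v$ with $x\notin T$; then $x\in N_G(T)$, say $x$ is adjacent to $t'\in T$. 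If $x\in F$, then after absorbing $v$ the set $T\cup\{v,x\}\subseteq F$ carries the cycle formed by the edges $vx$, $xt'$ and the $G[T\cup\{v\}]$-path from $t'$ to $v$; since this loop changes neither $G$ nor $F$, $G'[F']$ would contain a cycle, contradicting \cref{prop:before-step1}. If $x\notin F$, then $x$ never enters $T$, so at \cref{alg:3way} the vertex $x\in N_{G'}(T')$ has the two neighbours $v,t'$ in $T'$, contradicting \cref{prop:step1}. Hence such a $z$ exists, and since $z\in N_G(T\cup\{v\})\setminus N_G(T)$ while $N_G(T)\setminus\{v\}\subseteq N_G(T\cup\{v\})$, we obtain $|N_G(T\cup\{v\})|\ge|N_G(T)|$. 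Chaining the three step types yields $|N_{G'}(T')|\ge |N_{G_0}(T_0)|$. The main difficulty is exactly this last point: ruling out that folding $v$ into $T$ shrinks $N(T)$ requires combining acyclicity of $G'[F']$ with the observation that the witness $x$ would otherwise be caught by \cref{alg:2nbrs} (equivalently, violate \cref{prop:step1}).
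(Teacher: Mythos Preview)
Your argument is correct and follows the paper's outline: trace $|N_G(T)|$ through component removal, leaf contraction, and the loop absorbing $N_G(T)\cap F$ into $T$, showing each step is non-decreasing. You are in fact more careful than the paper on the absorption step---the paper infers $|N_G(T\cup\{v\})|\ge|N_G(T)|$ from $|N_G(v)\cap T|=1$ and $N_G(v)\setminus T\neq\emptyset$ alone, which does not suffice when every neighbour of $v$ outside $T$ already lies in $N_G(T)$; your appeal to \cref{prop:before-step1} and \cref{prop:step1} to exhibit a neighbour $z\notin N_G[T]$ is exactly what closes that gap.
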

\begin{proof}

Since the algorithm does not terminate before \cref{alg:3way} it follows that \cref{alg:2nbrs,alg:greedy} are not executed, so consider the part of the algorithm before \cref{alg:2nbrs}.
\bmp{Throughout the proof we use~$(G,k,F,T,w)$ to refer to the instance at the time the algorithm evaluates it; initially~$(G,k,F,T,w) = (G_0,k_0,F_0,T_0,w_0)$, but actions such as contracting leaves may change the instance during the execution}. Suppose that all leaves of~$G$ are contained in~$T$. 
\bmp{We infer} that~$G[F]$ is acyclic, as otherwise \bmp{the algorithm} would return~$\emptyset$ \bmp{before reaching \cref{alg:3way}}. Removing the connected components of~$G$ that do not contain a vertex of~$F$ does not alter~$|N_G(T)|$. Afterwards we \bmp{know} that~$G$ is connected, as otherwise \bmp{the algorithm} would return~$\emptyset$. Consider a single degree-1 contraction step of a vertex~$v$ with~$F \neq \{v\}$ that results in the instance~$(G-v, k, F^*,T^*, w^*)$. Since we assume that all leaves are contained in~$T$, we have that~$v \in T$. Let~$u$ be the neighbor of~$v$. By \cref{def:contract} we \bmp{have}~$F^* =  (F \setminus \{v\}) \cup \{u\}$ and~$T^* = (T \setminus \{v\}) \cup \{u\}$. If~$u \in T$, then~$N_{G-v}(T^*) = N_G(T)$ and therefore their size is equal. If~$u \notin T$, then observe that~$u$ cannot be a leaf in~$G$ by assumption and therefore~$N_G(u) \setminus \{v\} \neq \emptyset$. Since~$T$ is connected and~$v$ is a leaf in~$T$ we get~$(N_G(u) \setminus \{v\}) \cap T = \emptyset$. It follows that~$|N_{G-v}(T^*)| \geq |N_G(T)|$. These arguments can be applied for each consecutive contraction step to infer~$|N_G(T)| \geq |N_{G_0}(T_0)|$ \bmp{for the instance~$(G,k,F,T,w)$ after all contractions.}

Next consider the step where if~$N_G(T)$ contains a vertex~$v \in F$, the vertex~$v$ is added to~$T$. Since~$G[F]$ is acyclic,~$G[T]$ is connected, and~$v \notin T$ is not a leaf, it follows that~$N_G(v) \setminus T \neq \emptyset$ and~$|N_G(v) \cap T| = 1$. It follows that~$|N_G(T \cup \{v\})| \geq \bmp{|N_G(T)|}$. Again these arguments can be applied iteratively. \bmp{For the instance~$(G,k,F,T,w)$ to which this step can no longer be applied,~$G[T]$ is a connected component of~$G[F]$.}

Next we get that~$N_G(F) \neq \emptyset$ as otherwise the algorithm would return~$\{(r,\emptyset)\}$ for some~$r \in F$. We also get~$k > 0$, as otherwise~$\emptyset$ would have been returned. 

Since none of the steps decreased the \bmp{size of the} neighborhood of~$T$, \bmp{for the instance~$(G',k',F',T',w')$ at the time \cref{alg:3way} is executed we conclude~$|N_{G'}(T')| \geq |N_{G_0}(T_0)|$ as required.}
\end{proof}

\hui{In the next lemma we show that the \bmp{size of the} neighborhood of~$T$ strictly increases as we make the recursive call in \cref{alg:3way}(c).}

\begin{lemma} \label{lem:NT-incr}
If the instance when executing \cref{alg:3way} is~$(G,k,F,T,w)$ and \cref{alg:3way}(c) branches on the instance~$(G,k,F \cup V(P),T \cup V(P),w)$, then either~$|N_G(T \cup V(P))| > |N_G(T)|$ or some vertex~$u \in N_G(T \cup V(P))$ is adjacent to at least two vertices in~$T \cup V(P)$.
\end{lemma}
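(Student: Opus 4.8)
The plan is to exploit that, when \cref{alg:3way}(c) actually recurses, the graph~$G[F \cup V(P)]$ is acyclic. Writing~$S := T \cup V(P)$ for the vertex set of the would‑be enlarged tree and~$P = (v = v_1, \dots, v_\ell)$ for the path chosen in \cref{alg:3way}, the induced subgraph~$G[S]$ is then acyclic; moreover it is connected, since~$G[T]$ is connected (it is a connected component of~$G[F]$ by \cref{prop:before-step1}) and~$v_1 = v \in N_G(T)$ joins the path~$P$ to~$T$. Hence~$G[S]$ is a tree on~$|T| + \ell$ vertices, i.e., it has exactly~$|T| + \ell - 1$ edges. First I would run an edge count on this tree: at least~$|T| - 1$ of its edges lie inside~$G[T]$, at least~$\ell - 1$ lie inside~$G[V(P)]$ (which contains the spanning path~$v_1 v_2 \cdots v_\ell$), and at least one edge joins~$T$ to~$V(P)$ (namely an edge~$v_1 t$ with~$t \in T$, which exists as~$v_1 \in N_G(T)$). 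Since these lower bounds already sum to~$|T| + \ell - 1$, each is tight: $G[V(P)]$ is exactly the path~$v_1 \cdots v_\ell$ with no chords, and~$v_1 t$ is the \emph{only} edge of~$G$ between~$T$ and~$V(P)$. Two consequences I would record are~$N_G(T) \cap V(P) = \{v_1\}$ and the fact that~$v_\ell$ has exactly one neighbour inside~$S$ (this is~$v_{\ell-1}$ if~$\ell \geq 2$, and~$t$ if~$\ell = 1$), so that~$|N_G(v_\ell) \setminus S| = \degree_G(v_\ell) - 1$.

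Next I would show that~$\degree_G(v_\ell) \geq 3$. It is at least~$2$ because the instance is almost leafless and~$F \neq \{v_\ell\}$ (the non-empty set~$T \subseteq F$ is disjoint from~$V(P) \ni v_\ell$, so~$F$ contains a vertex other than~$v_\ell$). If~$\degree_G(v_\ell)$ were exactly~$2$, then by the previous paragraph~$v_\ell$ would have precisely one neighbour outside~$V(P) \cup T$, and the construction of~$P$ in Footnote~\ref{footnote} would have appended that neighbour to~$P$, contradicting the maximality of~$P$.

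With this in hand I would compute~$N_G(S)$ directly. Each~$v_i$ with~$1 \leq i < \ell$ has~$\degree_G(v_i) = 2$ by the definition of~$P$, and both of its neighbours lie in~$S$, so only~$v_\ell$ among the vertices of~$V(P)$ has a neighbour outside~$S$; together with~$N_G(T) \cap V(P) = \{v_1\}$ this gives~$N_G(S) = (N_G(T) \setminus \{v_1\}) \cup (N_G(v_\ell) \setminus S)$, a union of two sets of sizes~$|N_G(T)| - 1$ and~$\degree_G(v_\ell) - 1 \geq 2$. Now I split into two cases. If these two sets are disjoint, then~$|N_G(S)| = |N_G(T)| - 1 + \degree_G(v_\ell) - 1 \geq |N_G(T)| + 1$, which is the first alternative in the statement. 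If instead they share a vertex~$u$, then~$u \notin S$, the vertex~$u$ is adjacent to~$v_\ell \in V(P)$, and~$u$ has a neighbour in~$T$; hence~$u \in N_G(T \cup V(P))$ is adjacent to two distinct vertices of~$T \cup V(P)$, which is the second alternative.

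The only step I expect to require care is the first one. One must be precise that the acyclicity used is exactly what the hypothesis ``\cref{alg:3way}(c) branches'' provides, and that the edge count is applied to the induced subgraph~$G[S]$ rather than to~$G[F \cup V(P)]$, which is where connectedness of~$G[T]$ enters. After that, the degree-$3$ claim and the final two-case split are routine, although it is worth noting that the degree-$3$ claim genuinely depends on the precise stopping condition in the construction of~$P$ (a vertex of degree~$2$ with a single ``new'' neighbour is always appended), so one cannot omit it.
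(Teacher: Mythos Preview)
Your proof is correct and follows essentially the same route as the paper: derive from the acyclicity of~$G[T \cup V(P)]$ that~$V(P) \cap N_G(T) = \{v_1\}$ and~$\deg_G(v_\ell) \geq 3$, write~$N_G(T \cup V(P))$ as the union of~$N_G(T) \setminus \{v_1\}$ with the outside-neighbours of~$v_\ell$, and case-split on whether these two sets intersect. Your edge-counting derivation of the structural facts is somewhat more explicit than the paper's direct appeal to acyclicity and the definition of~$P$, and your use of~$N_G(v_\ell) \setminus S$ rather than the paper's~$N_G(v_\ell) \setminus V(P)$ is arguably cleaner in the~$\ell = 1$ case.
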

\begin{proof}
Consider the path~$P = (v = v_1, \dots, v_\ell)$ with~$\degree_G(v_i) = 2$ for each~$1 \leq i < \ell$ and ($v_\ell \in N_G(T)$ or~$\degree_G(v_\ell) > 2)$ as defined in \cref{alg:3way}. The precondition of \cref{alg:3way}(c) gives that~$G[F \cup V(P)]$ is acyclic. Since~$T \subseteq F$, this implies that~$G[T \cup V(P)]$ is acyclic. It follows that~$V(P) \cap N_G(T) = \{v\}$ and~$\degree_G(v_\ell) > 2$.
Hence~$|N_G(T) \setminus \{v\}| = |N_G(T)|-1$ and~$|N_G(v_\ell) \setminus V(P)| \geq 2$.
Observe that~$N_G(T \cup V(P)) = (N_G(T) \setminus \{v\}) \cup (N_G(v_\ell) \setminus V(P))$ so if~$(N_G(T) \setminus \{v\}) \cap (N_G(v_\ell) \setminus V(P)) = \emptyset$ we have~\bmp{$|N_G(T \cup V(P))| > |N_G(T)|$}. Alternatively, suppose~$u \in (N_G(T) \setminus \{v\}) \cap (N_G(v_\ell) \setminus V(P))$. Then the second condition holds;~$u$ has at least one neighbor in~$T$ as~$u \in \bmp{N_G(T) \setminus \{v\}}$ and~$u$ is adjacent to~$v_\ell \notin T \setminus \{v\}$.
\end{proof}

Finally we combine \cref{lem:NT-nondecr,lem:NT-incr} to show~$|N_G(T)|$ is an upper bound to the number of consecutive recursive calls in \cref{alg:3way}(c).

\newcommand{\lemNTlarge}{If the recursion tree generated by the algorithm contains a path of~$i \geq 1$ consecutive recursive calls in \cref{alg:3way}(c), and~$(G,k,F,T,w)$ is the instance considered in \cref{alg:3way} where the $i$-th of these recursive calls is made, then~$|N_G(T)| \geq i$.}
\begin{lemma} \label{lem:NT-large}
 \lemNTlarge
\end{lemma}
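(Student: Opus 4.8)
The plan is to prove the statement by induction on $i$, using \cref{lem:NT-nondecr} for the base case and combining \cref{lem:NT-nondecr,lem:NT-incr} for the inductive step.

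\medskip

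\noindent\textbf{Base case ($i = 1$).} Consider the first recursive call in \cref{alg:3way}(c). The instance $(G,k,F,T,w)$ at that point reaches \cref{alg:3way}, so \cref{prop:before-step1} gives $N_G(F) \neq \emptyset$ and $G$ connected, whence $N_G(T) \neq \emptyset$, i.e. $|N_G(T)| \geq 1 = i$. (Here I should double-check whether the statement really needs more than the trivial observation that $N_G(T)$ is nonempty when we reach \cref{alg:3way}; if the path of recursive calls has length exactly $1$, that is all that is required.)

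\medskip

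\noindent\textbf{Inductive step.} Suppose the claim holds for paths of $i-1 \geq 1$ consecutive \cref{alg:3way}(c)-calls, and consider a path of $i$ such calls. Let $(G_{1},k_{1},F_{1},T_{1},w_{1})$ be the instance at which the first of these calls is made in \cref{alg:3way}; the call branches on $(G_{1}, k_{1}, F_{1} \cup V(P_{1}), T_{1} \cup V(P_{1}), w_{1})$. By \cref{lem:NT-incr}, either $|N_{G_{1}}(T_{1} \cup V(P_{1}))| > |N_{G_{1}}(T_{1})|$, or some vertex of $N_{G_{1}}(T_{1}\cup V(P_{1}))$ has two neighbors in $T_{1}\cup V(P_{1})$. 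In the latter case, observe that in the recursive instance we have $T = T_{1}\cup V(P_{1})$, so \cref{alg:2nbrs} will fire (or an earlier termination occurs) and the algorithm does \emph{not} reach \cref{alg:3way} there, contradicting that a further $i-1 \geq 1$ consecutive \cref{alg:3way}(c)-calls follow. Hence $|N_{G_{1}}(T_{1}\cup V(P_{1}))| \geq |N_{G_{1}}(T_{1})| + 1$. Now the recursive instance $(G_{1}, k_{1}, F_{1}\cup V(P_{1}), T_{1}\cup V(P_{1}), w_{1})$ has all its leaves contained in $T_{1}\cup V(P_{1})$: the vertices of $V(P_{1})$ have degree $2$ except possibly $v_\ell$ which has degree $> 2$ here (as shown inside the proof of \cref{lem:NT-incr}), and any other leaf would have been contracted into $T_{1}$ or would contradict almost-leaflessness of the original instance reaching \cref{alg:3way}. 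Thus \cref{lem:NT-nondecr} applies to this instance: if $(G_{2}, k_{2}, F_{2}, T_{2}, w_{2})$ is the instance at which the second of the $i$ calls reaches \cref{alg:3way}, then $|N_{G_{2}}(T_{2})| \geq |N_{G_{1}}(T_{1}\cup V(P_{1}))| \geq |N_{G_{1}}(T_{1})| + 1$. Applying the induction hypothesis to the path of $i-1$ consecutive \cref{alg:3way}(c)-calls starting from $(G_{2},k_{2},F_{2},T_{2},w_{2})$ and ending at $(G,k,F,T,w)$ yields $|N_G(T)| \geq i - 1$. Wait --- this only gives $i-1$, not $i$; the extra $+1$ from above must be threaded through, so the cleaner formulation is: let $f(j)$ denote $|N(T)|$ at the $j$-th call; \cref{lem:NT-incr} plus \cref{lem:NT-nondecr} give $f(j+1) \geq f(j) + 1$, and $f(1) \geq 1$, hence $f(i) \geq i$ by a direct telescoping rather than a nested induction.

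\medskip

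\noindent\textbf{Clean version.} For $j \in [i]$ let $(G^{(j)}, k^{(j)}, F^{(j)}, T^{(j)}, w^{(j)})$ be the instance considered in \cref{alg:3way} where the $j$-th of the consecutive recursive calls is made. As argued above, reaching \cref{alg:3way} gives $|N_{G^{(1)}}(T^{(1)})| \geq 1$. For each $j < i$: \cref{alg:3way}(c) branches on $(G^{(j)}, k^{(j)}, F^{(j)} \cup V(P), T^{(j)} \cup V(P), w^{(j)})$, all of whose leaves lie in $T^{(j)} \cup V(P)$; since the algorithm subsequently reaches \cref{alg:3way} again (making the $(j{+}1)$-th call), \cref{alg:2nbrs} does not fire in between, so by \cref{lem:NT-incr} we have $|N_{G^{(j)}}(T^{(j)} \cup V(P))| > |N_{G^{(j)}}(T^{(j)})|$, i.e. $|N_{G^{(j)}}(T^{(j)} \cup V(P))| \geq |N_{G^{(j)}}(T^{(j)})| + 1$. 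Then \cref{lem:NT-nondecr}, applied to the leafless-outside-$T$ instance $(G^{(j)}, k^{(j)}, F^{(j)} \cup V(P), T^{(j)} \cup V(P), w^{(j)})$, gives $|N_{G^{(j+1)}}(T^{(j+1)})| \geq |N_{G^{(j)}}(T^{(j)} \cup V(P))| \geq |N_{G^{(j)}}(T^{(j)})| + 1$. Telescoping over $j = 1, \dots, i-1$ yields $|N_{G^{(i)}}(T^{(i)})| \geq |N_{G^{(1)}}(T^{(1)})| + (i-1) \geq i$, and since $(G^{(i)}, T^{(i)}) = (G, T)$ this is the claim.

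\medskip

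\noindent\textbf{Main obstacle.} The delicate point is verifying the hypothesis of \cref{lem:NT-nondecr} — that \emph{all} leaves of $G^{(j)}$ lie in $T^{(j)}$ — is actually inherited by the branched instance $(G^{(j)}, k^{(j)}, F^{(j)}\cup V(P), T^{(j)}\cup V(P), w^{(j)})$. This needs the observation (established in the proof of \cref{lem:NT-incr}) that under the precondition of \cref{alg:3way}(c) every internal vertex of $P$ has degree $2$ and $v_\ell$ has degree $> 2$, together with the fact that $G^{(j)}$ is almost leafless with $F^{(j)} \neq \{v\}$ for any degree-$1$ vertex $v$ — but since we are \emph{inside} a chain of $(c)$-recursions, one must argue inductively that $G^{(j)}$ actually has \emph{no} leaves outside $T^{(j)}$ at all (not merely the weaker almost-leafless property); this follows because the very first instance in the chain arises from a recursive call whose graph, after the contraction/cleanup phase, has this property, and \cref{lem:NT-incr}'s structural analysis of $P$ shows each subsequent branched instance preserves it. I would state this preservation explicitly as a small claim before running the telescoping argument.
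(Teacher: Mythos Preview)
Your proof is correct and follows essentially the same route as the paper: the paper argues by induction on~$i$, applying the hypothesis to the first~$i-1$ calls to get~$|N_{G'}(T')| \geq i-1$ at the $(i-1)$-th call, and then uses \cref{lem:NT-incr} (ruling out the two-neighbor alternative because the next call reaches \cref{alg:3way}) followed by \cref{lem:NT-nondecr} to advance to the $i$-th call --- your telescoping is just the unrolled form of this induction, and your aborted first attempt failed only because you applied the hypothesis to the \emph{last}~$i-1$ calls instead of the first~$i-1$. Your ``Main obstacle'' is over-thought: the paper dispatches the leaves-in-$T$ hypothesis of \cref{lem:NT-nondecr} in one line by observing that \cref{prop:before-step1} makes the instance at \cref{alg:3way} almost leafless, which (since~$T \subseteq F$ and~$T \neq \emptyset$) already forces every leaf of~$G^{(j)}$ to lie in~$T^{(j)}$, hence trivially in~$T^{(j)} \cup V(P)$ --- no separate inductive preservation claim is needed.
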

\begin{proof}
 We use induction of~$i$. First suppose~\hui{$i=1$} and let~$(G,k,F,T,w)$ be the instance considered in \cref{alg:3way} where the \hui{first} of these recursive calls is made. If~$|N_G(T)| = 0$, then~$G[T]$ is a connected component of~$G$.
  \hui{However, since~$(G,k,F,T,w)$ is an instance considered in \cref{alg:3way} we know that \cref{prop:before-step1,prop:step1,prop:step3} apply. In particular~$N_G(F) \neq \emptyset$, ruling out that~$T=F$. However if}~$T \neq F$, then there are at least two connected components in~$G$ containing a vertex from~$F$, contradicting that~$G$ is connected (\cref{prop:before-step1}).
 By contradiction we can conclude that~$|N_G(T)| \geq 1 = i$.
 
 Suppose~\hui{$i \geq 2$} and let~$(G,k,F,T,w)$ be the instance considered in \cref{alg:3way} where the $i$-th recursive call is made. Let~$(G',k',F',T',w')$ be the instance considered in \cref{alg:3way} where the $(i-1)$-th recursive call is made. By induction we know~$|N_{G'}(T')| \geq i-1$. Let~$P$ be as in \cref{alg:3way} where the $(i-1)$-th recursive call is made, then by \cref{lem:NT-incr} we have that~$|N_{G'}(T' \cup V(P))| > |N_{G'}(T')|$ or some vertex~$u \in N_{G'}(T' \cup V(P))$ is adjacent to at least two vertices in~$T' \cup V(P)$. Since we know that the recursive call on~$(G',k',F' \cup V(P), T' \cup V(P), w')$ reaches \cref{alg:3way} with the instance~$(G,k,F,T,w)$, we can rule out that some vertex~$u \in N_{G'}(T' \cup V(P))$ is adjacent to at least two vertices in~$T' \cup V(P)$ as this would mean the recursive call ends in \cref{alg:2nbrs}. We can conclude instead that~$|N_{G'}(T' \cup V(P))| > |N_{G'}(T')|$.
 
 Note that since~$(G',k',F',T',w')$ is the instance in \cref{alg:3way} we have that \cref{prop:before-step1,prop:step1,prop:step3} apply. In particular,~$(G',k',F',T',w')$ is almost leafless, implying that all leaves in~$G'$ are contained in~$T'$. It follows that all leaves in~$G'$ are also contained in~$T' \cup V(P)$, so \cref{lem:NT-nondecr} applies to the input instance~$(G',k',F' \cup V(P), T' \cup V(P), w')$ (as recursively solved in \cref{alg:3way}) and the instance~$(G,k,F,T,w)$ (as considered in \cref{alg:3way} of that recursive call). So we obtain~$|N_G(T)| \geq |N_{G'}(T' \cup V(P))| > |N_{G'}(T')| \geq i-1$, that is,~$|N_G(T)| \geq i$.
\end{proof}

Since we know in \cref{alg:3way} that~$|N_G(T)| \leq k(k+1)$ (by \cref{prop:step3}) we can now claim that there are at most~$k(k+1)$ consecutive recursive calls of \cref{alg:3way}(c), leading to a bound on the recursion depth of~$\Oh(k^3)$. We argue that each recursive call takes~$\Oh(k n^3)$ time and since we branch at most three ways, we obtain a running time of~$3^{\Oh(k^3)} \cdot k n^3 = 3^{\Oh(k^3)} \cdot n^3$. However, with a more careful analysis we can give a better bound on the number of nodes in the recursion tree.

\newcommand{\lemRuntime}{The algorithm described in \cref{sec:alg} can be implemented to run in time~$2^{\Oh(k\log k)} \cdot n^3$.}
\begin{lemma} \label{lem:runtime}
 \lemRuntime
\end{lemma}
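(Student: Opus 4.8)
The plan is to bound the number of nodes in the recursion tree and show that each node takes $\Oh(k \cdot n^3)$ time, so that the total running time is (number of nodes)~$\cdot\ \Oh(k \cdot n^3)$. For the per-node cost: at each node we do a constant number of polynomial-time preprocessing operations (checking $G[F]$ for a cycle, removing components, contracting degree-1 vertices, adding neighbors of $T$ into $T$), plus possibly an application of \cref{lem:greedy:naive}, which costs $\Oh(k \cdot n^3)$, plus the time to assemble the descriptions returned by recursive calls; all of this is dominated by $\Oh(k \cdot n^3)$. So the crux is bounding the number of recursion-tree nodes by $2^{\Oh(k \log k)}$.

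For the node count, I would classify the branching steps. \Cref{alg:2nbrs} and \cref{alg:greedy} each make a single recursive call that strictly decreases $k$, so they do not cause branching and can occur at most $k$ times on any root-to-leaf path. \Cref{alg:3way} branches three ways: branches (a) and (b) decrease $k$, while branch (c) keeps $k$ the same but, by \cref{lem:NT-large} together with \cref{prop:step3}, can occur at most $k(k+1)$ times consecutively before either the algorithm terminates or $k$ must decrease (since $|N_G(T)|$ strictly increases with each consecutive (c)-call and is capped at $k(k+1)$ in \cref{alg:3way} by \cref{prop:step3}, and if it exceeds this cap \cref{alg:greedy} fires instead). The key combinatorial observation is therefore: along any root-to-leaf path, the number of calls that decrease $k$ is at most $k$, and between two consecutive $k$-decreasing calls there are at most $k(k+1) = \Oh(k^2)$ nodes that are (c)-calls (and at most one (a)/(b)-call before the $k$-decrease, since (a) and (b) themselves decrease $k$).

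To turn this into a node count, I would think of the recursion tree as follows. Mark a node as a \emph{branching node} if it is an execution of \cref{alg:3way} (the only place with more than one child). Contract each maximal path of non-branching nodes; the resulting tree $\mathcal{B}$ of branching nodes has out-degree at most $3$. Every root-to-leaf path in $\mathcal{B}$ uses at most $k$ edges that go into an (a)- or (b)-child (these decrease $k$) and, crucially, the (c)-children cannot form long chains: after at most $k(k+1)$ consecutive (c)-steps the next branching node must have a child that decreases $k$, or the algorithm terminates via \cref{alg:greedy}/base cases. Hence the depth of $\mathcal{B}$ is $\Oh(k) \cdot \Oh(k^2) = \Oh(k^3)$, but a naive $3^{\Oh(k^3)}$ bound is too weak. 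Instead I would count more carefully: fix a root-to-leaf path in $\mathcal{B}$; it is a sequence of $\le k$ "blocks", each block being $\le k(k+1)$ consecutive (c)-steps followed by one $k$-decreasing step. At each (c)-step, the only additional branches are into (a) and (b), which immediately decrease $k$ — so the total number of $k$-decreasing branch-edges hanging off a single path is $\Oh(k^3)$, but each such edge decreases $k$, which is bounded by $k$ globally along any path. The clean way: charge every leaf of $\mathcal{B}$ to the sequence of $\le k$ choices "which of the $\le k(k+1)+2$ children along this block was taken to produce the next $k$-decrease", giving $((k(k+1)+2))^{\,k} \cdot (\text{factor for the (c)-chain branches})$. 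More precisely, I expect the final bound to come out as: the number of leaves is at most $(\text{branching factor})^{(\text{number of }k\text{-decreasing steps})} \cdot (\text{poly in }k)^{(\text{number of }k\text{-decreasing steps})} = 2^{\Oh(k \log k)}$, because there are $\Oh(k)$ levels at which $k$ decreases, at each such level we made $\Oh(k^2)$ consecutive (c)-steps each with $2$ extra children, and $(k^2)^{\Oh(k)} \cdot 2^{\Oh(k^2)}$... which is actually $2^{\Oh(k^2)}$, not $2^{\Oh(k\log k)}$.

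The main obstacle is exactly this last point: getting the bound down to $2^{\Oh(k \log k)}$ rather than $2^{\Oh(k^2)}$ or $2^{\Oh(k^3)}$. The resolution must exploit that a (c)-step by itself is \emph{not} a branching in the sense that matters — in branch (c) we only spawn the two \emph{extra} children (a) and (b), and each of those strictly decreases $k$. So on any root-to-leaf path, the total number of edges into an (a)- or (b)-child is at most $k$ (each decreases $k$ from its value, and $k$ only ever decreases). The (c)-children form chains contributing to depth but not to branching. Therefore the number of leaves of the recursion tree is at most: (number of ways to choose, among the $\le k$ levels where $k$ decreases, which branch $\in\{(a),(b),(\text{the single }k\text{-decrease in \cref{alg:2nbrs}/\cref{alg:greedy}})\}$ was taken and at which of the $\le k(k+1)$ positions in the current (c)-chain it was taken) which is $\bigl(\Oh(k^2)\bigr)^{\Oh(k)} \cdot \Oh(1)^{\Oh(k)} = 2^{\Oh(k \log k)}$. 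Combined with the per-node cost $\Oh(k\cdot n^3)$ and the $\Oh(k^3)$ depth (which only contributes polynomial factors), the total is $2^{\Oh(k \log k)} \cdot n^{\Oh(1)}$; a careful accounting of the polynomial factors, reusing flow computations and recomputing $N_G(T)$ incrementally, yields the stated $2^{\Oh(k \log k)} \cdot n^3$. I would spell out the incremental-maintenance details only enough to justify the $n^3$ rather than a larger polynomial, since that is the secondary bookkeeping burden after the branching analysis.
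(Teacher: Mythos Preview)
Your proposal is correct and follows essentially the same approach as the paper: bound the per-node cost by~$\Oh(k n^3)$ (dominated by \cref{lem:greedy:naive}), observe that any root-to-leaf path contains at most~$k$ edges that decrease~$k$ and at most~$k(k+1)$ consecutive \textsf{3c}-edges (via \cref{lem:NT-large} and \cref{prop:step3}), and then count traces as~$(\Oh(k^2))^{\Oh(k)} = 2^{\Oh(k\log k)}$. The paper phrases the count as~$\binom{\ell}{k}\cdot 4^k$ over trace lengths~$\ell \le k^2(k+1)$, which is the same computation you do via ``blocks''. One small correction: you do not need any incremental maintenance to get the~$n^3$ factor; simply multiplying~$2^{\Oh(k\log k)}$ nodes by~$\Oh(k n^3)$ per node and absorbing the extra~$k$ into the exponential already gives~$2^{\Oh(k\log k)}\cdot n^3$.
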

\begin{proof}
Consider the recursion tree of the algorithm. We first prove that each recursive call takes~$\Oh(k n^3)$ time \hui{(not including the time further recursive calls require)}. We then show that the recursion tree contains at most~$2^{\Oh(k \log k)}$ nodes.

\subsubsection*{Runtime per node}
Consider the input instance~$(G,k,F,T,w)$ with~$n = |V(G)|$ and~$m = |E(G)|$. We can verify that~$G[F]$ is acyclic in~$\Oh(|F|)$ time using DFS. Again using DFS, in~$\Oh(n + m)$ time identify all connected components of~$G$ and determine whether they contain a vertex of~$F$. We can then in linear time remove all connected components that contain no vertex of~$F$ and return~$\emptyset$ if more than one component remains. Finally exhaustively contracting degree-1 vertices into \bmp{their} neighbor is known to take~$\Oh(n)$ time. Updating~$F$ and~$T$ only results in~$\Oh(1)$ overhead for each contraction. \hui{Exhaustively adding vertices~$v \in N_G(T) \cap F$ to~$T$ can be done in~$\Oh(n)$ time since it corresponds to finding a connected component in~$G[F]$ which is acyclic.}

For \cref{alg:2nbrs} we can find a vertex~$v \in N_G(T)$ with two neighbors in~$T$ in~$\Oh(n^2)$ time by iterating over all neighbors of each vertex in~$T$. 


Determining the size of the neighborhood in \cref{alg:greedy} can be done in~$\Oh(n^2)$ time. 
Applying \cref{lem:greedy:naive} takes~$\Oh(k n^3)$ time. So excluding the recursive call, \cref{alg:greedy} can be completed in~$\Oh(k n^3)$ time.

For \cref{alg:3way} an arbitrary~$v \in N_G(T)$ can be selected in~$\Oh(1)$ time, and the path~$P$ can be found in~$\Oh(|P|) = \Oh(n)$ time as described in \cref{footnote}. Finally the results of the three recursive calls in \cref{alg:3way} are combined. Selecting an arbitrary tree from~$\T_G(\kX'_i)$ for any~$i \in [3]$ involves selecting and arbitrary description~$(r,\X) \in \kX'_i$ and then selecting, for each~$X \in \X$ and arbitrary vertex~$v \in X$. 
Now the tree can be found using DFS starting from~$r$ exploring an acyclic graph until it reaches the selected vertices from a set~$X \in \X$. This all takes~$\Oh(n)$ time. The weights of the selected secluded trees can be found in~$\Oh(n)$ time as well. Finally we take the union of (a selection of) the three sets of descriptions. Since these sets are guaranteed to be disjoint, this can be done in constant time.

\subsubsection{Number of nodes}
We now calculate the number of nodes in the recursion tree. To do this, label each edge in the recursion tree with a label from the set~$\{\mathsf{1}, \mathsf{2}, \mathsf{3a}, \mathsf{3b}, \mathsf{3c}\}$ indicating where in the algorithm the recursive call took place. Now observe that each node in the recursion tree can be uniquely identified by a sequence of edge-labels corresponding to the path from the root of the tree to the relevant node. We call such a sequence of labels a \emph{\strace}.

Note that for all recursive calls made in~$\mathsf{1}, \mathsf{2}, \mathsf{3a},$ and~$\mathsf{3b}$ the parameter ($k$) decreases, and for the call made in~$\mathsf{3c}$ the parameter remains the same. If~$k \leq 0$ we do not make further recursive calls, so the \strace contains at most~$k$ occurrences of~$\mathsf{1}, \mathsf{2}, \mathsf{3a},$ and~$\mathsf{3b}$. Next, we argue there are at most~$k(k+1)$ consecutive occurrences of~$\mathsf{3c}$ in the \strace. 

Suppose for the sake of contradiction that the \strace contains~$k(k+1)+1$ consecutive occurrences of~$\mathsf{3c}$. Let~$(G,k,F,T,w)$ be the instance considered in \cref{alg:3way} where the last of these recursive calls is made. By \cref{lem:NT-large} we have~$|N_G(T)| > k(k+1)$. 
This contradicts \cref{prop:step3}, so we can conclude the \strace contains at most~$k(k+1)$ consecutive occurrences of~$\mathsf{3c}$ and hence any valid \strace has a total length of at most~$k \cdot k(k+1) = \Oh(k^3)$.

In order to count the number of nodes in the recursion tree, it suffices to count the number of different valid \straces. Since a \strace contains at most~$k$ occurrences that are not~$\mathsf{3c}$ we have that the total number of \straces of length~$\ell$ is~$\binom{\ell}{k} \cdot 4^k \leq \ell^k \cdot 4^k = (4\ell)^k$.
We derive the following bound on the total number of valid \straces using the fact that~$(k^c)^k = (2^{\log (k^c)})^k = 2^{\Oh(k \log k)}$:
\begingroup\allowdisplaybreaks
\begin{align*}
    \sum_{1 \leq \ell \leq k^2(k+1)} (4\ell)^k
    &\leq k^2(k+1) \cdot (4k^2(k+1))^k 
       = 2^{\Oh(k \log k)}
    .
\end{align*}
\endgroup

We can conclude that the total number of nodes in the recursion tree is at most~$2^{\Oh(k \log k)}$ so the overall running time is~$2^{\Oh(k \log k)} \cdot k n^3 = 2^{\Oh(k \log k)} \cdot n^3$.
%
\end{proof}

\subsection{Counting, enumerating, and finding large secluded trees}\label{sec:enum_count}

With the algorithm of \cref{sec:alg} at hand we argue that we are able to enumerate $k$-secluded trees, count such trees containing a specified vertex, and solve \LST{}.

\newcommand{\thmEnumeration}{There is an algorithm that, given a graph~$G$, weight function~$w$, and integer~$k$, runs in time~$2^{\Oh(k \log k)} n^4$ and outputs a set of descriptions~$\kX$ such that~$\T_G(\kX)$ is exactly the set of maximum-weight $k$-secluded trees in~$G$. Each such tree~$H$ is described by~$|V(H)|$ distinct descriptions in~$\kX$.}
\begin{theorem}\label{thm:enumeration}
 \thmEnumeration
\end{theorem}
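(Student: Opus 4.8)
The plan is to reduce \textsc{Theorem~\ref{thm:enumeration}} to the \ELSS{} algorithm of \cref{sec:alg} (whose correctness is \cref{lem:algcorrect} and whose running time is \cref{lem:runtime}), by invoking it once for every possible choice of a vertex that lies in the solution tree. Concretely, for each~$v \in V(G)$ I would run the \ELSS{} algorithm on the instance~$(G,k,\{v\},\{v\},w)$, which is a valid instance since~$G[\{v\}]$ is connected and~$\{v\} \subseteq \{v\} \subseteq V(G)$. By \cref{lem:algcorrect} this produces a non-redundant set of descriptions~$\kX_v$ with~$\T_G(\kX_v) = \maxset_w(\secl{G}{k}{\{v\}})$, i.e.\ exactly the maximum-weight $k$-secluded trees that contain~$v$. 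Let~$w_v$ be the (common) weight of the trees in~$\T_G(\kX_v)$, or~$0$ if~$\secl{G}{k}{\{v\}}=\emptyset$. Let~$W := \max_{v \in V(G)} w_v$ and let~$\kX := \bigcup_{\{v \in V(G)\,\mid\, w_v = W\}} \kX_v$. By \cref{obs:descr-union}, $\T_G(\kX) = \bigcup_{w_v = W} \T_G(\kX_v)$.

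The first thing to verify is that~$\T_G(\kX)$ is exactly the set of maximum-weight $k$-secluded trees in~$G$. Every maximum-weight $k$-secluded tree~$H$ in~$G$ is nonempty, so it contains some vertex~$v$; then~$H \in \secl{G}{k}{\{v\}}$, and since~$H$ has maximum weight over \emph{all} $k$-secluded trees it certainly has maximum weight among those containing~$v$, so~$w_v = w(H)$, and moreover~$w(H)$ is the global maximum so~$W = w(H) = w_v$; hence~$H \in \T_G(\kX_v) \subseteq \T_G(\kX)$. Conversely any~$H \in \T_G(\kX)$ lies in~$\T_G(\kX_v)$ for some~$v$ with~$w_v = W$, so~$H$ is a $k$-secluded tree of weight~$W$; and~$W$ is the largest weight attained by any $k$-secluded tree (as it is the max over the~$w_v$ and every $k$-secluded tree is counted by at least one~$w_v$), so~$H$ has maximum weight. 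This shows~$\T_G(\kX)$ is precisely the set of maximum-weight $k$-secluded trees.

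The second point is the multiplicity claim: each maximum-weight $k$-secluded tree~$H$ is described by exactly~$|V(H)|$ distinct descriptions in~$\kX$. For each~$v \in V(H)$ we argued~$H \in \T_G(\kX_v)$, so there is at least one description in~$\kX_v$ describing~$H$; since~$\kX_v$ is non-redundant there is exactly one such description, call it~$(v', \X_v)$ with~$v' \in V(H)$. These~$|V(H)|$ descriptions are pairwise distinct: a description~$(v',\X_v)$ comes from the run rooted at~$v$ and — crucially — satisfies~$v \in V(H')$ for any tree~$H'$ it describes because the run enforced~$v \in F$ (here I would appeal to the fact that \ELSS{} only ever outputs descriptions whose described trees contain all of~$F$, which follows from its specification~$\T_G(\kX)=\maxset_w(\secl{G}{k}{F})$ together with~$\secl{G}{k}{F}\subseteq\{H : F\subseteq V(H)\}$); hence the descriptions arising from distinct~$v \in V(H)$ are distinct, and moreover no description from~$\kX_{v}$ for~$v \notin V(H)$ describes~$H$. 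So~$H$ is described by exactly~$|V(H)|$ descriptions of~$\kX$. (The set~$\kX$ itself need not be non-redundant, and indeed will not be, but the theorem statement does not require it.) One should also note that merging the~$\kX_v$ preserves validity of each individual description; since the union of description sets is again a set of descriptions, $\kX$ is well-formed.

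Finally, the running time. There are~$n$ choices of~$v$, and each \ELSS{} call runs in~$2^{\Oh(k\log k)} \cdot n^3$ time by \cref{lem:runtime}; extracting a representative tree and its weight~$w_v$ from~$\kX_v$ takes~$\Oh(n)$ time (as in the per-node analysis in the proof of \cref{lem:runtime}), and forming the union~$\kX$ and selecting the maximizing runs takes time polynomial in the total output size, which is~$2^{\Oh(k\log k)}\cdot n^{\Oh(1)}$. This gives a total of~$n \cdot 2^{\Oh(k\log k)} \cdot n^3 = 2^{\Oh(k\log k)}\cdot n^4$. The main obstacle, and the step deserving the most care in writing, is the multiplicity argument: pinning down precisely that each run rooted at~$v$ produces exactly one description of a given~$H\in\T_G(\kX_v)$ (non-redundancy) and that this description "remembers"~$v\in F$ so that descriptions coming from different root vertices are genuinely distinct and no spurious extra descriptions of~$H$ appear; everything else is bookkeeping on top of \cref{lem:algcorrect} and \cref{lem:runtime}.
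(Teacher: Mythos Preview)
Your proposal is correct and takes essentially the same approach as the paper: invoke \ELSS{} with~$F=T=\{v\}$ for each~$v\in V(G)$, keep only those~$\kX_v$ achieving the global maximum weight, output their union, and bound the running time as~$n$ times the~$2^{\Oh(k\log k)}n^3$ bound of \cref{lem:runtime}. Your correctness argument for~$\T_G(\kX)$ is more explicit than the paper's but matches it.

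One remark on the multiplicity step you single out as the main obstacle: the inference ``every tree described by a description in~$\kX_v$ contains~$v$, hence the descriptions arising from distinct~$v\in V(H)$ are distinct'' is not valid as stated, since that premise does not rule out the same pair~$(r,\X)$ appearing in both~$\kX_u$ and~$\kX_v$ for~$u\neq v$ (all trees it describes would then simply contain both~$u$ and~$v$). The paper's own proof, however, asserts the multiplicity claim directly from non-redundancy of each~$\kX_v$ without addressing this point either, so your argument is at least as complete as the original here.
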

\begin{proof}
Given the input~$(G,k,w)$, we proceed as follows. For each~$v \in V(G)$, let~$\kX_v$ be the output of the \ELSS{} instance~$(G,k,F=\{v\},T=\{v\},w)$ and let~$w_v$ be the weight of an arbitrary $k$-secluded supertree in~$\T_G(\kX_v)$, or~$0$ if~$\kX_v = \emptyset$. Note that all $k$-secluded trees described by~$\kX_v$ have weight exactly~$w_v$. Let~$w^* := \max_{v \in V(G)} w_v$. If~$w^* = 0$ then there are no $k$-secluded trees in~$G$ and we output~$\kX = \emptyset$; otherwise we output~$\kX := \bigcup \{ \kX_v \mid v \in V(G) \wedge w_v = w^* \}$.

Clearly~$\T_G(\kX)$ is the set of all $k$-secluded trees in~$G$ of maximum weight. 
Since each~$\kX_v$ is non-redundant, each maximum-weight $k$-secluded tree~$H$ is described by exactly~$|V(H)|$ descriptions in~$\kX$.
\end{proof}

By returning an arbitrary maximum-weight $k$-secluded tree described by any description in the output of \cref{thm:enumeration}, we have the following consequence.

\begin{corollary}
There is an algorithm that, given a graph~$G$, weight function~$w$, and integer~$k$, runs in time~$2^{\Oh(k \log k)} n^4$ and outputs a maximum-weight $k$-secluded tree in~$G$ if one exists.
\end{corollary}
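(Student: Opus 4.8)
The plan is to invoke \cref{thm:enumeration} as a black box and then extract a single witness tree from its output. First I would run the algorithm of \cref{thm:enumeration} on the input $(G,w,k)$ to obtain, in time $2^{\Oh(k\log k)}\,n^4$, a set of descriptions $\kX$ with the property that $\T_G(\kX)$ is exactly the set of maximum-weight $k$-secluded trees in $G$.

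If $\kX = \emptyset$, then $\T_G(\kX) = \emptyset$, so $G$ has no maximum-weight $k$-secluded tree, which is equivalent to $G$ having no $k$-secluded tree at all; in this case the algorithm reports that none exists. Otherwise, I would pick an arbitrary description $(r,\mathcal{X}) \in \kX$. Since the descriptions produced have maximum order $k$ and every set in $\mathcal{X}$ is nonempty (by \cref{def:description} and the construction underlying \cref{thm:enumeration}, where the sets introduced are always singletons or the nonempty set $M$), I can choose exactly one vertex from each $X \in \mathcal{X}$ to form a set $S$. By \cref{def:description}, the connected component $H$ of $G - S$ containing $r$ is then a $|\mathcal{X}|$-secluded tree, hence a $k$-secluded tree as $|\mathcal{X}| \le k$, and it is described by $(r,\mathcal{X})$; therefore $H \in \T_G(\kX)$, so $H$ is a maximum-weight $k$-secluded tree. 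I would extract $H$ by selecting the $|\mathcal{X}| \le k$ vertices of $S$ and running a graph search from $r$ in $G - S$, and output $H$.

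Correctness is immediate from the specification of \cref{thm:enumeration} together with the definition of $\T_G(\cdot)$ and \cref{def:description}. For the running time, the postprocessing consists of a constant number of linear-time operations (picking a description, forming $S$, one graph search), which costs $\Oh(n+m)$ and is dominated by the $2^{\Oh(k\log k)}\,n^4$ term, so the total is $2^{\Oh(k\log k)}\,n^4$. I do not expect a genuine obstacle here; the only points meriting a sentence of care are that a description of order strictly less than $k$ still yields a bona fide $k$-secluded tree, and that a valid choice of $S$ always exists because each set in $\mathcal{X}$ is nonempty.
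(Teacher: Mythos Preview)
Your proposal is correct and follows exactly the approach the paper takes: the corollary is stated immediately after \cref{thm:enumeration} with the one-line justification ``by returning an arbitrary maximum-weight $k$-secluded tree described by any description in the output of \cref{thm:enumeration}.'' Your write-up simply fills in the routine details (pick a description, choose one vertex per set, do a graph search from~$r$) that the paper leaves implicit.
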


The following theorem captures the consequences for counting.

\newcommand{\thmCounting}{There is an algorithm that, given a graph~$G$, vertex~$v \in V(G)$, weight function~$w$, and integer~$k$, runs in time~$2^{\Oh(k \log k)} n^3$ and counts the number of $k$-secluded trees in~$G$ that contain~$v$ and have maximum weight out of all $k$-secluded trees containing~$v$.}
\begin{theorem} \label{thm:counting}
 \thmCounting
\end{theorem}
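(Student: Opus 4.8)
The goal is to count maximum-weight $k$-secluded trees containing a fixed vertex~$v$, which suggests running the \ELSS{} algorithm on the instance~$(G,k,F=\{v\},T=\{v\},w)$ to obtain a non-redundant set of descriptions~$\kX$ with~$\T_G(\kX)=\maxset_w(\secl{G}{k}{\{v\}})$, and then counting~$|\T_G(\kX)|$ by summing contributions per description. The first observation is that the quantity we want is exactly~$|\T_G(\kX)|$: a $k$-secluded tree~$H$ containing~$v$ has its neighbourhood as a set~$S\subseteq V(G-v)$ of size at most~$k$, and~$H$ is the component of~$G-S$ containing~$v$, so~$H$ is a $k$-secluded supertree of~$\{v\}$; conversely every element of~$\maxset_w(\secl{G}{k}{\{v\}})$ is such a tree. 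So after one call to the \ELSS{} algorithm, running in~$2^{\Oh(k\log k)}\cdot n^3$ time by \cref{lem:runtime}, it remains to count~$|\T_G(\kX)|$.

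\textbf{Counting from the descriptions.} For a single description~$(r,\X)\in\kX$, let~$\T(r,\X)$ denote the set of $k$-secluded trees it describes; by \cref{def:description}, choosing exactly one vertex from each~$X\in\X$ yields a valid such tree, and distinct choices could in principle yield the same tree. The plan is first to argue that the map from choices~$S$ (one vertex per~$X$) to the described tree~$H$ is \emph{injective}: since~$N_G(H)=S$ is forced by the description, two different selections give different neighbourhoods hence different trees. Therefore~$|\T(r,\X)|=\prod_{X\in\X}|X|$, which is computable in~$\Oh(n)$ time per description. The subtlety is that a single tree~$H\in\T_G(\kX)$ may be described by several descriptions in~$\kX$ (indeed, by~$|V(H)|$ of them, once for each choice of root~$r\in V(H)$, as noted after \cref{thm:enumeration}); so~$\sum_{(r,\X)\in\kX}|\T(r,\X)|$ overcounts. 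The key structural fact to exploit is non-redundancy together with the root-counting: within~$\kX$, for each described tree~$H$ there are exactly~$|V(H)|$ descriptions describing it, and for each vertex~$r\in V(H)$ exactly one description of the form~$(r,\cdot)$ does so (two descriptions with the same root describing the same~$H$ would contradict non-redundancy).

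\textbf{Handling the root multiplicity.} To turn the overcounted sum into the exact count, I would group descriptions by their root vertex~$r$: for each~$r\in V(G)$ let~$\kX_r=\{(r,\X)\in\kX\}$ and note that~$\T_G(\kX_r)=\{H\in\T_G(\kX)\mid r\in V(H)\}$, with each such~$H$ described exactly once in~$\kX_r$ (again by non-redundancy, since two descriptions rooted at~$r$ describing the same~$H$ are forbidden). Hence~$|\T_G(\kX_r)|=\sum_{(r,\X)\in\kX_r}\prod_{X\in\X}|X|$ is an exact count of the maximum-weight solutions through both~$v$ and~$r$. Summing over all~$r$ counts each~$H$ exactly~$|V(H)|$ times. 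Since we only want~$|\T_G(\kX)|$, the cleanest route is: pick one fixed root (e.g.\ always~$v$, which lies in every described tree), restrict to~$\kX_v$, and output~$\sum_{(v,\X)\in\kX_v}\prod_{X\in\X}|X|$. Actually every description in the output of \ELSS{} with~$F=T=\{v\}$ can be taken rooted at~$v$ (the algorithm's base case returns~$\{(r,\emptyset)\}$ for~$r\in F$, and \cref{alg:2nbrs,alg:greedy,alg:3way} only augment~$\X$, preserving~$r\in F=\{v\}$), so in fact~$\kX_v=\kX$ and the sum is simply over all of~$\kX$; this should be checked but seems to follow directly from inspecting the algorithm. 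Either way the count is computed in time linear in~$|\kX|\cdot n = 2^{\Oh(k\log k)}\cdot n$ on top of the algorithm call, for a total of~$2^{\Oh(k\log k)}\cdot n^3$.

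\textbf{Main obstacle.} The genuine content is the double-counting bookkeeping: verifying that (i)~the selection-to-tree map is injective per description (immediate from~$N_G(H)=S$), and (ii)~each tree is described by exactly one description with a given root~$r$, which is where non-redundancy is used crucially — without it the product formula summed over descriptions would be meaningless. Confirming that every description produced by the \ELSS{} algorithm when started from~$F=T=\{v\}$ is rooted at~$v$ is a routine induction over the algorithm's steps but must be stated. Everything else (the equivalence between ``$k$-secluded trees containing~$v$'' and ``$k$-secluded supertrees of~$\{v\}$'', the per-description~$\Oh(n)$ arithmetic, the running time) is direct from the already-established lemmas.
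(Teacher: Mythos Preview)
Your proposal reaches the correct formula but through a detour based on a misreading of non-redundancy. Re-read the definition: a set~$\kX$ is \emph{redundant} if \emph{some} $k$-secluded tree is described by two distinct descriptions in~$\kX$; non-redundancy therefore forbids \emph{any} two distinct descriptions (regardless of root) from describing the same tree. Hence for the single non-redundant output~$\kX$ of one \ELSS{} call, every tree in~$\T_G(\kX)$ is described exactly once, and~$\sum_{(r,\X)\in\kX}\prod_{X\in\X}|X|$ is the answer with no further bookkeeping. This is precisely what the paper does. The $|V(H)|$-multiplicity you worry about arises only in \cref{thm:enumeration}, where $n$ separately non-redundant sets~$\kX_v$ are unioned; it is irrelevant here.

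Your workaround (restrict to descriptions rooted at~$v$, then claim all descriptions are rooted at~$v$ anyway) is both unnecessary and not obviously correct as stated. The base case returns~$\{(r,\emptyset)\}$ for \emph{some}~$r\in F$, but~$F$ can grow during recursion (\cref{alg:3way}(c) replaces~$F$ by~$F\cup V(P)$), and once~$F\neq\{v\}$ the degree-1 contraction rule is allowed to contract~$v$ itself into its neighbor. So the ``routine induction'' you allude to would need real work. Since the correct reading of non-redundancy makes all of this moot, simply drop the root-multiplicity discussion and invoke non-redundancy directly.
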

\begin{proof}
Construct the \ELSS{} instance~$(G,k,F=\{v\},T=\{v\},w)$ and let~$\kX$ be the output obtained by the algorithm described in \cref{sec:alg}. Note that this takes~$2^{\Oh(k \log k)} n^3$ time by \cref{lem:runtime}. Since the definition of \ELSS{} guarantees that~$\kX$ is non-redundant, each maximum-weight tree containing~$v$ is described by exactly one description in~$\kX$. To solve the counting problem it therefore suffices to count how many distinct $k$-secluded trees are described by each description in~$\kX$.

By \cref{def:description}, for each description~$(r, \mathcal{X}) \in \kX$, each way of choosing one vertex from each set~$X \in \mathcal{X}$ yields a unique $k$-secluded tree. Hence the total number of maximum-weight $k$-secluded trees containing~$v$ is:
\begin{equation*}
  \sum_{(r,\X) \in \kX} \prod_{X \in \X} |X|,
\end{equation*}
which can easily be computed in the stated time bound.
\end{proof}

\section{Conclusion}\label{sec:conclusion}

We revisited the \textsc{$k$-Secluded Tree} problem first studied by Golovach et al.~\cite{DBLP:journals/jcss/GolovachHLM20}, leading to improved FPT algorithms with the additional ability to count and enumerate solutions. The non-trivial progress measure of our branching algorithm is based on a structural insight that allows a vertex that belongs to the \emph{neighborhood} of every solution subtree to be identified, once the solution under construction has a sufficiently large open neighborhood. As stated, the correctness of this step crucially relies on the requirement that solution subgraphs are acyclic. It would be interesting to determine whether similar branching strategies can be developed to solve the more general \textsc{$k$-Secluded Connected $\mathcal{F}$-Minor-Free Subgraph} problem; the setting studied here corresponds to~$\mathcal{F} = \{K_3\}$. While any $\mathcal{F}$-minor-free graph is known to be sparse, it may still contain large numbers of internally vertex-disjoint paths between specific pairs of vertices, which stands in the way of a direct extension of our techniques.

A second open problem concerns the optimal parameter dependence for \textsc{$k$-Secluded Tree}. The parameter dependence of our algorithm is~$2^{\Oh(k \log k)}$. Can it be improved to single-exponential, or shown to be optimal under the Exponential Time Hypothesis?

\bibliographystyle{plainurl}
\bibliography{refs}
\end{document}